\newcommand{\cl}{C \kern -0.1em \ell}     
\newcommand{\JJ}{\mathbin{\raisebox{0.25ex}{$\footnotesize
                       \rm\vphantom{I}%
                       \_\hskip -0.25em\_%
                       \vrule width 0.6pt$}}}           
\newcommand{\w}{\wedge}
\newcommand{\Mat}{{\rm Mat}}
  \newcommand{\be}{{\bf e}}
\newcommand{\bL}{{\bf L}}
  \newcommand{\bu}{{\bf u}}
  \newcommand{\bv}{{\bf v}}
  \newcommand{\bx}{{\bf x}}
  \newcommand{\by}{{\bf y}}
\newcommand{\BK}{\mathbb{K}}
\newcommand{\BR}{\mathbb{R}}
\newcommand{\BZ}{\mathbb{Z}}
\newcommand{\diag}{\mbox{\rm diag}}
\newcommand{\spn}{\mbox{\rm span}}
\newcommand{\beq}{\begin{equation}}
\newcommand{\eeq}{\end{equation}}
 \def\dated#1{\def\thedate{#1}}%
\newdimen\high%
\newdimen\ul%
\newdimen\wdth%
\def\ratchet#1#2{\ifnum#1<#2\global #1=#2\fi}%
\long\def\ifnextchar#1#2#3{%
  \let\reserved@d=#1%
  \def\reserved@a{#2}%
  \def\reserved@b{#3}%
  \futurelet\@let@token\@ifnch}
\let\kernel@ifnextchar\@ifnextchar
\def\@ifnch{%
  \ifx\@let@token\@sptoken
    \let\reserved@c\@xifnch
  \else
    \ifx\@let@token\reserved@d
      \let\reserved@c\reserved@a
    \else
      \let\reserved@c\reserved@b
    \fi
  \fi
  \reserved@c}
\def\:{\let\@sptoken= } \:  
\def\:{\@xifnch} \expandafter\def\: {\futurelet\@let@token\@ifnch}
\def\axis{\fontdimen22\textfont2}
\def\scalefactor#1{\ul=#1\ul \X@xbase=#1\X@xbase \Y@ybase=#1\Y@ybase}%
\def\fontscale#1{%
\if#1h\relax%
\font\xydashfont=xydash10 scaled \magstephalf%
\font\xyatipfont=xyatip10 scaled \magstephalf%
\font\xybtipfont=xybtip10 scaled \magstephalf%
\font\xybsqlfont=xybsql10 scaled \magstephalf%
\font\xycircfont=xycirc10 scaled \magstephalf%
\else%
\font\xydashfont=xydash10 scaled \magstep#1%
\font\xyatipfont=xyatip10 scaled \magstep#1%
\font\xybtipfont=xybtip10 scaled \magstep#1%
\font\xybsqlfont=xybsql10 scaled \magstep#1%
\font\xycircfont=xycirc10 scaled \magstep#1%
\fi}%
\def\bfig{\vcenter\bgroup\xy}%
\def\efig{\endxy\egroup}%
\def\car#1#2\nil{#1}%
\def\morphism{\ifnextchar({\morphismp}{\morphismp(0,0)}}%
\def\morphismp(#1){\ifnextchar|{\morphismpp(#1)}{\morphismpp(#1)|a|}}%
\def\morphismpp(#1)|#2|{\ifnextchar/{\morphismppp(#1)|#2|}%
    {\morphismppp(#1)|#2|/>/}}%
\def\morphismppp(#1)|#2|/#3/{%
    \ifnextchar<{\morphismpppp(#1)|#2|/#3/}%
    {\morphismpppp(#1)|#2|/#3/<\default,0>}}%
\def\morphismpppp(#1,#2)|#3|/#4/<#5,#6>[#7`#8;#9]{%
\xend#1\advance \xend by #5%
\yend#2\advance \yend by #6%
\domorphism(#1,#2)|#3|/#4/<#5,#6>[{#7}`{#8};{#9}]}%
\def\domorphism(#1,#2)|#3|/#4/<#5,#6>[#7`#8;#9]{%
\def\next{\car#4.\nil}%
\if@\next\relax%
 \if#3l%
  \ifnum #6>0%
   \POS(#1,#2)*+!!<0ex,\axis>{#7}\ar#4^-{#9} (\xend,\yend)*+!!<0ex,\axis>{#8}%
  \else%
   \POS(#1,#2)*+!!<0ex,\axis>{#7}\ar#4_-{#9} (\xend,\yend)*+!!<0ex,\axis>{#8}%
  \fi%
 \else \if#3m%
    \setbox0\hbox{$#9$}%
   \ifdim \wd0=0pt%
     \POS(#1,#2)*+!!<0ex,\axis>{#7}\ar#4 (\xend,\yend)*+!!<0ex,\axis>{#8}%
   \else%
     \POS(#1,#2)*+!!<0ex,\axis>{#7}\ar#4|-*+<1pt,4pt>{\labelstyle#9}%
       (\xend,\yend)*+!!<0ex,\axis>{#8}%
   \fi%
 \else \if#3r%
  \ifnum #6<0%
   \POS(#1,#2)*+!!<0ex,\axis>{#7}\ar#4^-{#9} (\xend,\yend)*+!!<0ex,\axis>{#8}%
  \else%
   \POS(#1,#2)*+!!<0ex,\axis>{#7}\ar#4_-{#9} (\xend,\yend)*+!!<0ex,\axis>{#8}%
  \fi%
 \else \if#3a%
  \ifnum #5>0%
   \POS(#1,#2)*+!!<0ex,\axis>{#7}\ar#4^-{#9} (\xend,\yend)*+!!<0ex,\axis>{#8}%
  \else%
   \POS(#1,#2)*+!!<0ex,\axis>{#7}\ar#4_-{#9} (\xend,\yend)*+!!<0ex,\axis>{#8}%
  \fi%
 \else \if#3b%
  \ifnum #5<0%
   \POS(#1,#2)*+!!<0ex,\axis>{#7}\ar#4^-{#9} (\xend,\yend)*+!!<0ex,\axis>{#8}%
  \else%
   \POS(#1,#2)*+!!<0ex,\axis>{#7}\ar#4_-{#9} (\xend,\yend)*+!!<0ex,\axis>{#8}%
  \fi%
 \else%
   \POS(#1,#2)*+!!<0ex,\axis>{#7}\ar#4 (\xend,\yend)*+!!<0ex,\axis>{#8}%
 \fi\fi\fi\fi\fi%
\else%
 \if#3l%
  \ifnum #6>0%
   \POS(#1,#2)*+!!<0ex,\axis>{#7}\ar@{#4}^-{#9} (\xend,\yend)*+!!<0ex,\axis>{#8}%
  \else%
   \POS(#1,#2)*+!!<0ex,\axis>{#7}\ar@{#4}_-{#9} (\xend,\yend)*+!!<0ex,\axis>{#8}%
  \fi%
 \else \if#3m%
    \setbox0\hbox{$#9$}%
   \ifdim \wd0=0pt%
     \POS(#1,#2)*+!!<0ex,\axis>{#7}\ar@{#4} (\xend,\yend)*+!!<0ex,\axis>{#8}%
   \else%
     \POS(#1,#2)*+!!<0ex,\axis>{#7}\ar@{#4}|-*+<1pt,4pt>{\labelstyle#9}%
         (\xend,\yend)*+!!<0ex,\axis>{#8}%
   \fi%
 \else \if#3r%
  \ifnum #6<0%
   \POS(#1,#2)*+!!<0ex,\axis>{#7}\ar@{#4}^-{#9} (\xend,\yend)*+!!<0ex,\axis>{#8}%
  \else%
   \POS(#1,#2)*+!!<0ex,\axis>{#7}\ar@{#4}_-{#9} (\xend,\yend)*+!!<0ex,\axis>{#8}%
  \fi%
 \else \if#3a%
  \ifnum #5>0%
   \POS(#1,#2)*+!!<0ex,\axis>{#7}\ar@{#4}^-{#9} (\xend,\yend)*+!!<0ex,\axis>{#8}%
  \else%
   \POS(#1,#2)*+!!<0ex,\axis>{#7}\ar@{#4}_-{#9} (\xend,\yend)*+!!<0ex,\axis>{#8}%
  \fi%
 \else \if#3b%
  \ifnum #5<0%
   \POS(#1,#2)*+!!<0ex,\axis>{#7}\ar@{#4}^-{#9} (\xend,\yend)*+!!<0ex,\axis>{#8}%
  \else%
   \POS(#1,#2)*+!!<0ex,\axis>{#7}\ar@{#4}_-{#9} (\xend,\yend)*+!!<0ex,\axis>{#8}%
  \fi%
 \else%
   \POS(#1,#2)*+!!<0ex,\axis>{#7}\ar@{#4} (\xend,\yend)*+!!<0ex,\axis>{#8}%
 \fi\fi\fi\fi\fi%
\fi\ignorespaces}%
\def\vect(#1,#2)/#3/<#4,#5>{%
 \xend#1 \yend#2 \advance\xend by #4 \advance\yend by #5%
     \POS(#1,#2)\ar#3 (\xend,\yend)}%
\def\squarepppp(#1,#2)|#3|/#4`#5`#6`#7/<#8>[#9]{%
\xpos#1\ypos#2%
\def\next|##1##2##3##4|{%
 \def\xa{##1}\def\xb{##2}\def\xc{##3}\def\xd{##4}\ignorespaces}%
\next|#3|%
\def\next<##1,##2>{\deltax=##1\deltay=##2\ignorespaces}%
\next<#8>%
\def\next[##1`##2`##3`##4;##5`##6`##7`##8]{%
    \def\nodea{##1}\def\nodeb{##2}\def\nodec{##3}\def\noded{##4}%
    \def\labela{##5}\def\labelb{##6}\def\labelc{##7}\def\labeld{##8}\ignorespaces}%
\next[#9]%
\morphism(\xpos,\ypos)|\xd|/{#7}/<\deltax,0>[\nodec`\noded;\labeld]%
\advance \ypos by \deltay%
\morphism(\xpos,\ypos)|\xb|/{#5}/<0,-\deltay>[\nodea`\nodec;\labelb]%
\morphism(\xpos,\ypos)|\xa|/{#4}/<\deltax,0>[\nodea`\nodeb;\labela]%
 \advance \xpos by \deltax%
\morphism(\xpos,\ypos)|\xc|/{#6}/<0,-\deltay>[\nodeb`\noded;\labelc]%
\ignorespaces}%
\def\square{\ifnextchar({\squarep}{\squarep(0,0)}}%
\def\squarep(#1){\ifnextchar|{\squarepp(#1)}{\squarepp(#1)|alrb|}}%
\def\squarepp(#1)|#2|{\ifnextchar/{\squareppp(#1)|#2|}%
    {\squareppp(#1)|#2|/>`>`>`>/}}%
\def\squareppp(#1)|#2|/#3`#4`#5`#6/{%
    \ifnextchar<{\squarepppp(#1)|#2|/#3`#4`#5`#6/}%
    {\squarepppp(#1)|#2|/#3`#4`#5`#6/<\default,\default>}}%
\def\ptrianglepppp(#1,#2)|#3|/#4`#5`#6/<#7>[#8]{%
\xpos#1\ypos#2%
\def\next|##1##2##3|{\def\xa{##1}\def\xb{##2}\def\xc{##3}}%
\next|#3|%
\def\next<##1,##2>{\deltax=##1\deltay=##2\ignorespaces}%
\next<#7>%
\def\next[##1`##2`##3;##4`##5`##6]{%
    \def\nodea{##1}\def\nodeb{##2}\def\nodec{##3}%
    \def\labela{##4}\def\labelb{##5}\def\labelc{##6}}%
\next[#8]%
\advance\ypos by \deltay%
\morphism(\xpos,\ypos)|\xa|/{#4}/<\deltax,0>[\nodea`\nodeb;\labela]%
\morphism(\xpos,\ypos)|\xb|/{#5}/<0,-\deltay>[\nodea`\nodec;\labelb]%
\advance\xpos by \deltax%
\morphism(\xpos,\ypos)|\xc|/{#6}/<-\deltax,-\deltay>[\nodeb`\nodec;\labelc]%
\ignorespaces}%
\def\qtrianglepppp(#1,#2)|#3|/#4`#5`#6/<#7>[#8]{%
\xpos#1\ypos#2%
\def\next|##1##2##3|{\def\xa{##1}\def\xb{##2}\def\xc{##3}}%
\next|#3|%
\def\next<##1,##2>{\deltax=##1\deltay=##2\ignorespaces}%
\next<#7>%
\def\next[##1`##2`##3;##4`##5`##6]{%
    \def\nodea{##1}\def\nodeb{##2}\def\nodec{##3}%
    \def\labela{##4}\def\labelb{##5}\def\labelc{##6}}%
\next[#8]%
\advance\ypos by \deltay%
\morphism(\xpos,\ypos)|\xa|/{#4}/<\deltax,0>[\nodea`\nodeb;\labela]%
\morphism(\xpos,\ypos)|\xb|/{#5}/<\deltax,-\deltay>[\nodea`\nodec;\labelb]%
\advance\xpos by \deltax%
\morphism(\xpos,\ypos)|\xc|/{#6}/<0,-\deltay>[\nodeb`\nodec;\labelc]%
\ignorespaces}%
\def\dtrianglepppp(#1,#2)|#3|/#4`#5`#6/<#7>[#8]{%
\xpos#1\ypos#2%
\def\next|##1##2##3|{\def\xa{##1}\def\xb{##2}\def\xc{##3}}%
\next|#3|%
\def\next<##1,##2>{\deltax=##1\deltay=##2\ignorespaces}%
\next<#7>%
\def\next[##1`##2`##3;##4`##5`##6]{%
    \def\nodea{##1}\def\nodeb{##2}\def\nodec{##3}%
    \def\labela{##4}\def\labelb{##5}\def\labelc{##6}}%
\next[#8]%
\morphism(\xpos,\ypos)|\xc|/{#6}/<\deltax,0>[\nodeb`\nodec;\labelc]%
\advance\ypos by \deltay\advance \xpos by \deltax%
\morphism(\xpos,\ypos)|\xa|/{#4}/<-\deltax,-\deltay>[\nodea`\nodeb;\labela]%
\morphism(\xpos,\ypos)|\xb|/{#5}/<0,-\deltay>[\nodea`\nodec;\labelb]%
\ignorespaces}%
\def\btrianglepppp(#1,#2)|#3|/#4`#5`#6/<#7>[#8]{%
\xpos#1\ypos#2%
\def\next|##1##2##3|{\def\xa{##1}\def\xb{##2}\def\xc{##3}}%
\next|#3|%
\def\next<##1,##2>{\deltax=##1\deltay=##2\ignorespaces}%
\next<#7>%
\def\next[##1`##2`##3;##4`##5`##6]{%
    \def\nodea{##1}\def\nodeb{##2}\def\nodec{##3}%
    \def\labela{##4}\def\labelb{##5}\def\labelc{##6}}%
\next[#8]%
\morphism(\xpos,\ypos)|\xc|/{#6}/<\deltax,0>[\nodeb`\nodec;\labelc]%
\advance\ypos by \deltay%
\morphism(\xpos,\ypos)|\xa|/{#4}/<0,-\deltay>[\nodea`\nodeb;\labela]%
\morphism(\xpos,\ypos)|\xb|/{#5}/<\deltax,-\deltay>[\nodea`\nodec;\labelb]%
\ignorespaces}%
\def\Atrianglepppp(#1,#2)|#3|/#4`#5`#6/<#7>[#8]{%
\xpos#1\ypos#2%
\def\next|##1##2##3|{\def\xa{##1}\def\xb{##2}\def\xc{##3}}%
\next|#3|%
\def\next<##1,##2>{\deltax=##1\deltay=##2\ignorespaces}%
\next<#7>%
\def\next[##1`##2`##3;##4`##5`##6]{%
    \def\nodea{##1}\def\nodeb{##2}\def\nodec{##3}%
    \def\labela{##4}\def\labelb{##5}\def\labelc{##6}}%
\next[#8]%
\multiply\deltax by 2%
\morphism(\xpos,\ypos)|\xc|/{#6}/<\deltax,0>[\nodeb`\nodec;\labelc]%
\divide\deltax by 2%
\advance\ypos by \deltay\advance\xpos by \deltax%
\morphism(\xpos,\ypos)|\xa|/{#4}/<-\deltax,-\deltay>[\nodea`\nodeb;\labela]%
\morphism(\xpos,\ypos)|\xb|/{#5}/<\deltax,-\deltay>[\nodea`\nodec;\labelb]%
\ignorespaces}%
\def\Vtrianglepppp(#1,#2)|#3|/#4`#5`#6/<#7>[#8]{%
\xpos#1\ypos#2%
\def\next|##1##2##3|{\def\xa{##1}\def\xb{##2}\def\xc{##3}}%
\next|#3|%
\def\next<##1,##2>{\deltax=##1\deltay=##2\ignorespaces}%
\next<#7>%
\def\next[##1`##2`##3;##4`##5`##6]{%
    \def\nodea{##1}\def\nodeb{##2}\def\nodec{##3}%
    \def\labela{##4}\def\labelb{##5}\def\labelc{##6}}%
\next[#8]%
\advance\ypos by \deltay%
\morphism(\xpos,\ypos)|\xb|/{#5}/<\deltax,-\deltay>[\nodea`\nodec;\labelb]%
\multiply\deltax by 2%
\morphism(\xpos,\ypos)|\xa|/{#4}/<\deltax,0>[\nodea`\nodeb;\labela]%
\advance\xpos by \deltax \divide \deltax by 2%
\morphism(\xpos,\ypos)|\xc|/{#6}/<-\deltax,-\deltay>[\nodeb`\nodec;\labelc]%
\ignorespaces}%
\def\Ctrianglepppp(#1,#2)|#3|/#4`#5`#6/<#7>[#8]{%
\xpos#1\ypos#2%
\def\next|##1##2##3|{\def\xa{##1}\def\xb{##2}\def\xc{##3}}%
\next|#3|%
\def\next<##1,##2>{\deltax=##1\deltay=##2\ignorespaces}%
\next<#7>%
\def\next[##1`##2`##3;##4`##5`##6]{%
    \def\nodea{##1}\def\nodeb{##2}\def\nodec{##3}%
    \def\labela{##4}\def\labelb{##5}\def\labelc{##6}}%
\next[#8]%
\advance \ypos by \deltay%
\morphism(\xpos,\ypos)|\xc|/{#6}/<\deltax,-\deltay>[\nodeb`\nodec;\labelc]%
\advance\ypos by \deltay \advance \xpos by \deltax%
\morphism(\xpos,\ypos)|\xa|/{#4}/<-\deltax,-\deltay>[\nodea`\nodeb;\labela]%
\multiply\deltay by 2%
\morphism(\xpos,\ypos)|\xb|/{#5}/<0,-\deltay>[\nodea`\nodec;\labelb]%
\ignorespaces}%
\def\Dtrianglepppp(#1,#2)|#3|/#4`#5`#6/<#7>[#8]{%
\xpos#1\ypos#2%
\def\next|##1##2##3|{\def\xa{##1}\def\xb{##2}\def\xc{##3}}%
\next|#3|%
\def\next<##1,##2>{\deltax=##1\deltay=##2\ignorespaces}%
\next<#7>%
\def\next[##1`##2`##3;##4`##5`##6]{%
    \def\nodea{##1}\def\nodeb{##2}\def\nodec{##3}%
    \def\labela{##4}\def\labelb{##5}\def\labelc{##6}}%
\next[#8]%
\advance\xpos by \deltax \advance\ypos by \deltay%
\morphism(\xpos,\ypos)|\xc|/{#6}/<-\deltax,-\deltay>[\nodeb`\nodec;\labelc]%
\advance\xpos by -\deltax \advance\ypos by \deltay%
\morphism(\xpos,\ypos)|\xb|/{#5}/<\deltax,-\deltay>[\nodea`\nodeb;\labelb]%
\multiply \deltay by 2%
\morphism(\xpos,\ypos)|\xa|/{#4}/<0,-\deltay>[\nodea`\nodec;\labela]%
\ignorespaces}%
\def\ptrianglep(#1){\ifnextchar|{\ptrianglepp(#1)}{\ptrianglepp(#1)|alr|}}%
\def\ptrianglepp(#1)|#2|{\ifnextchar/{\ptriangleppp(#1)|#2|}%
    {\ptriangleppp(#1)|#2|/>`>`>/}}%
\def\ptriangleppp(#1)|#2|/#3`#4`#5/{%
    \ifnextchar<{\ptrianglepppp(#1)|#2|/#3`#4`#5/}%
    {\ptrianglepppp(#1)|#2|/#3`#4`#5/<\default,\default>}}%
\def\qtriangle{\ifnextchar({\qtrianglep}{\qtrianglep(0,0)}}%
\def\qtrianglep(#1){\ifnextchar|{\qtrianglepp(#1)}{\qtrianglepp(#1)|alr|}}%
\def\qtrianglepp(#1)|#2|{\ifnextchar/{\qtriangleppp(#1)|#2|}%
    {\qtriangleppp(#1)|#2|/>`>`>/}}%
\def\qtriangleppp(#1)|#2|/#3`#4`#5/{%
    \ifnextchar<{\qtrianglepppp(#1)|#2|/#3`#4`#5/}%
    {\qtrianglepppp(#1)|#2|/#3`#4`#5/<\default,\default>}}%
\def\dtrianglep(#1){\ifnextchar|{\dtrianglepp(#1)}{\dtrianglepp(#1)|lrb|}}%
\def\dtrianglepp(#1)|#2|{\ifnextchar/{\dtriangleppp(#1)|#2|}%
    {\dtriangleppp(#1)|#2|/>`>`>/}}%
\def\dtriangleppp(#1)|#2|/#3`#4`#5/{%
    \ifnextchar<{\dtrianglepppp(#1)|#2|/#3`#4`#5/}%
    {\dtrianglepppp(#1)|#2|/#3`#4`#5/<\default,\default>}}%
\def\btrianglep(#1){\ifnextchar|{\btrianglepp(#1)}{\btrianglepp(#1)|lrb|}}%
\def\btrianglepp(#1)|#2|{\ifnextchar/{\btriangleppp(#1)|#2|}%
    {\btriangleppp(#1)|#2|/>`>`>/}}%
\def\btriangleppp(#1)|#2|/#3`#4`#5/{%
    \ifnextchar<{\btrianglepppp(#1)|#2|/#3`#4`#5/}%
    {\btrianglepppp(#1)|#2|/#3`#4`#5/<\default,\default>}}%
\def\Atrianglep(#1){\ifnextchar|{\Atrianglepp(#1)}{\Atrianglepp(#1)|lrb|}}%
\def\Atrianglepp(#1)|#2|{\ifnextchar/{\Atriangleppp(#1)|#2|}%
    {\Atriangleppp(#1)|#2|/>`>`>/}}%
\def\Atriangleppp(#1)|#2|/#3`#4`#5/{%
    \ifnextchar<{\Atrianglepppp(#1)|#2|/#3`#4`#5/}%
    {\Atrianglepppp(#1)|#2|/#3`#4`#5/<\default,\default>}}%
\def\Vtrianglep(#1){\ifnextchar|{\Vtrianglepp(#1)}{\Vtrianglepp(#1)|alb|}}%
\def\Vtrianglepp(#1)|#2|{\ifnextchar/{\Vtriangleppp(#1)|#2|}%
    {\Vtriangleppp(#1)|#2|/>`>`>/}}%
\def\Vtriangleppp(#1)|#2|/#3`#4`#5/{%
    \ifnextchar<{\Vtrianglepppp(#1)|#2|/#3`#4`#5/}%
    {\Vtrianglepppp(#1)|#2|/#3`#4`#5/<\default,\default>}}%
\def\Ctrianglep(#1){\ifnextchar|{\Ctrianglepp(#1)}{\Ctrianglepp(#1)|arb|}}%
\def\Ctrianglepp(#1)|#2|{\ifnextchar/{\Ctriangleppp(#1)|#2|}%
    {\Ctriangleppp(#1)|#2|/>`>`>/}}%
\def\Ctriangleppp(#1)|#2|/#3`#4`#5/{%
    \ifnextchar<{\Ctrianglepppp(#1)|#2|/#3`#4`#5/}%
    {\Ctrianglepppp(#1)|#2|/#3`#4`#5/<\default,\default>}}%
\def\Dtrianglep(#1){\ifnextchar|{\Dtrianglepp(#1)}{\Dtrianglepp(#1)|alb|}}%
\def\Dtrianglepp(#1)|#2|{\ifnextchar/{\Dtriangleppp(#1)|#2|}%
    {\Dtriangleppp(#1)|#2|/>`>`>/}}%
\def\Dtriangleppp(#1)|#2|/#3`#4`#5/{%
    \ifnextchar<{\Dtrianglepppp(#1)|#2|/#3`#4`#5/}%
    {\Dtrianglepppp(#1)|#2|/#3`#4`#5/<\default,\default>}}%
\def\Atrianglepairpppp(#1)|#2|/#3`#4`#5`#6`#7/<#8>[#9]{%
\def\next(##1,##2){\xpos##1\ypos##2}%
\next(#1)%
\def\next|##1##2##3##4##5|{\def\xa{##1}\def\xb{##2}%
\def\xc{##3}\def\xd{##4}\def\xe{##5}}%
\next|#2|%
\def\next<##1,##2>{\deltax=##1\deltay=##2\ignorespaces}%
\next<#8>%
\def\next[##1`##2`##3`##4;##5`##6`##7`##8`##9]{%
 \def\nodea{##1}\def\nodeb{##2}\def\nodec{##3}\def\noded{##4}%
 \def\labela{##5}\def\labelb{##6}\def\labelc{##7}\def\labeld{##8}\def\labele{##9}}%
\next[#9]%
\morphism(\xpos,\ypos)|\xd|/{#6}/<\deltax,0>[\nodeb`\nodec;\labeld]%
\advance\xpos by \deltax%
\morphism(\xpos,\ypos)|\xe|/{#7}/<\deltax,0>[\nodec`\noded;\labele]%
\advance\ypos by \deltay%
\morphism(\xpos,\ypos)|\xa|/{#3}/<-\deltax,-\deltay>[\nodea`\nodeb;\labela]%
\morphism(\xpos,\ypos)|\xb|/{#4}/<0,-\deltay>[\nodea`\nodec;\labelb]%
\morphism(\xpos,\ypos)|\xc|/{#5}/<\deltax,-\deltay>[\nodea`\noded;\labelc]%
\ignorespaces}%
\def\Vtrianglepairpppp(#1)|#2|/#3`#4`#5`#6`#7/<#8>[#9]{%
\def\next(##1,##2){\xpos##1\ypos##2}%
\next(#1)%
\def\next|##1##2##3##4##5|{\def\xa{##1}\def\xb{##2}%
\def\xc{##3}\def\xd{##4}\def\xe{##5}}%
\next|#2|%
\def\next<##1,##2>{\deltax=##1\deltay=##2\ignorespaces}%
\next<#8>%
\def\next[##1`##2`##3`##4;##5`##6`##7`##8`##9]{%
 \def\nodea{##1}\def\nodeb{##2}\def\nodec{##3}\def\noded{##4}%
 \def\labela{##5}\def\labelb{##6}\def\labelc{##7}\def\labeld{##8}\def\labele{##9}}%
\next[#9]%
\advance\ypos by \deltay%
\morphism(\xpos,\ypos)|\xa|/{#3}/<\deltax,0>[\nodea`\nodeb;\labela]%
\morphism(\xpos,\ypos)|\xc|/{#5}/<\deltax,-\deltay>[\nodea`\noded;\labelc]%
\advance\xpos by \deltax%
\morphism(\xpos,\ypos)|\xb|/{#4}/<\deltax,0>[\nodeb`\nodec;\labelb]%
\morphism(\xpos,\ypos)|\xd|/{#6}/<0,-\deltay>[\nodeb`\noded;\labeld]%
\advance\xpos by \deltax%
\morphism(\xpos,\ypos)|\xe|/{#7}/<-\deltax,-\deltay>[\nodec`\noded;\labele]%
\ignorespaces}%
\def\Ctrianglepairpppp(#1)|#2|/#3`#4`#5`#6`#7/<#8>[#9]{%
\def\next(##1,##2){\xpos##1\ypos##2}%
\next(#1)%
\def\next|##1##2##3##4##5|{\def\xa{##1}\def\xb{##2}%
\def\xc{##3}\def\xd{##4}\def\xe{##5}}%
\next|#2|%
\def\next<##1,##2>{\deltax=##1\deltay=##2\ignorespaces}%
\next<#8>%
\def\next[##1`##2`##3`##4;##5`##6`##7`##8`##9]{%
 \def\nodea{##1}\def\nodeb{##2}\def\nodec{##3}\def\noded{##4}%
 \def\labela{##5}\def\labelb{##6}\def\labelc{##7}\def\labeld{##8}\def\labele{##9}}%
\next[#9]%
\advance\ypos by \deltay%
\morphism(\xpos,\ypos)|\xe|/{#7}/<0,-\deltay>[\nodec`\noded;\labele]%
\advance\xpos by -\deltax%
\morphism(\xpos,\ypos)|\xc|/{#5}/<\deltax,0>[\nodeb`\nodec;\labelc]%
\morphism(\xpos,\ypos)|\xd|/{#6}/<\deltax,-\deltay>[\nodeb`\noded;\labeld]%
\advance\ypos by \deltay%
\advance\xpos by \deltax%
\morphism(\xpos,\ypos)|\xa|/{#3}/<-\deltax,-\deltay>[\nodea`\nodeb;\labela]%
\morphism(\xpos,\ypos)|\xb|/{#4}/<0,-\deltay>[\nodea`\nodec;\labelb]%
\ignorespaces}%
\def\Dtrianglepairpppp(#1)|#2|/#3`#4`#5`#6`#7/<#8>[#9]{%
\def\next(##1,##2){\xpos##1\ypos##2}%
\next(#1)%
\def\next|##1##2##3##4##5|{\def\xa{##1}\def\xb{##2}%
\def\xc{##3}\def\xd{##4}\def\xe{##5}}%
\next|#2|%
\def\next<##1,##2>{\deltax=##1\deltay=##2\ignorespaces}%
\next<#8>%
\def\next[##1`##2`##3`##4;##5`##6`##7`##8`##9]{%
 \def\nodea{##1}\def\nodeb{##2}\def\nodec{##3}\def\noded{##4}%
 \def\labela{##5}\def\labelb{##6}\def\labelc{##7}\def\labeld{##8}\def\labele{##9}}%
\next[#9]%
\advance\ypos by \deltay%
\morphism(\xpos,\ypos)|\xc|/{#5}/<\deltax,0>[\nodeb`\nodec;\labelc]%
\morphism(\xpos,\ypos)|\xd|/{#6}/<0,-\deltay>[\nodeb`\noded;\labeld]%
\advance\ypos by \deltay%
\morphism(\xpos,\ypos)|\xa|/{#3}/<0,-\deltay>[\nodea`\nodeb;\labela]%
\morphism(\xpos,\ypos)|\xb|/{#4}/<\deltax,-\deltay>[\nodea`\nodec;\labelb]%
\advance\ypos by -\deltay%
\advance\xpos by \deltax%
\morphism(\xpos,\ypos)|\xe|/{#7}/<-\deltax,-\deltay>[\nodec`\noded;\labele]%
\ignorespaces}%
\def\Atrianglepairp(#1){\ifnextchar|{\Atrianglepairpp(#1)}%
{\Atrianglepairpp(#1)|lmrbb|}}%
\def\Atrianglepairpp(#1)|#2|{\ifnextchar/{\Atrianglepairppp(#1)|#2|}%
    {\Atrianglepairppp(#1)|#2|/>`>`>`>`>/}}%
\def\Atrianglepairppp(#1)|#2|/#3`#4`#5`#6`#7/{%
    \ifnextchar<{\Atrianglepairpppp(#1)|#2|/#3`#4`#5`#6`#7/}%
    {\Atrianglepairpppp(#1)|#2|/#3`#4`#5`#6`#7/<\default,\default>}}%
\def\Vtrianglepairp(#1){\ifnextchar|{\Vtrianglepairpp(#1)}%
{\Vtrianglepairpp(#1)|aalmr|}}%
\def\Vtrianglepairpp(#1)|#2|{\ifnextchar/{\Vtrianglepairppp(#1)|#2|}%
    {\Vtrianglepairppp(#1)|#2|/>`>`>`>`>/}}%
\def\Vtrianglepairppp(#1)|#2|/#3`#4`#5`#6`#7/{%
    \ifnextchar<{\Vtrianglepairpppp(#1)|#2|/#3`#4`#5`#6`#7/}%
    {\Vtrianglepairpppp(#1)|#2|/#3`#4`#5`#6`#7/<\default,\default>}}%
\def\Ctrianglepairp(#1){\ifnextchar|{\Ctrianglepairpp(#1)}%
{\Ctrianglepairpp(#1)|lrmlr|}}%
\def\Ctrianglepairpp(#1)|#2|{\ifnextchar/{\Ctrianglepairppp(#1)|#2|}%
    {\Ctrianglepairppp(#1)|#2|/>`>`>`>`>/}}%
\def\Ctrianglepairppp(#1)|#2|/#3`#4`#5`#6`#7/{%
    \ifnextchar<{\Ctrianglepairpppp(#1)|#2|/#3`#4`#5`#6`#7/}%
    {\Ctrianglepairpppp(#1)|#2|/#3`#4`#5`#6`#7/<\default,\default>}}%
\def\Dtrianglepairp(#1){\ifnextchar|{\Dtrianglepairpp(#1)}%
{\Dtrianglepairpp(#1)|lrmlr|}}%
\def\Dtrianglepairpp(#1)|#2|{\ifnextchar/{\Dtrianglepairppp(#1)|#2|}%
    {\Dtrianglepairppp(#1)|#2|/>`>`>`>`>/}}%
\def\Dtrianglepairppp(#1)|#2|/#3`#4`#5`#6`#7/{%
    \ifnextchar<{\Dtrianglepairpppp(#1)|#2|/#3`#4`#5`#6`#7/}%
    {\Dtrianglepairpppp(#1)|#2|/#3`#4`#5`#6`#7/<\default,\default>}}%
\def\pplace[#1](#2,#3)[#4]{\POS(#2,#3)*+!!<0ex,\axis>!#1{#4}\ignorespaces}%
\def\cplace(#1,#2)[#3]{\POS(#1,#2)*+!!<0ex,\axis>{#3}\ignorespaces}%
\def\pullback#1]#2]{\square#1]\trident#2]\ignorespaces}%
\def\tridentppp|#1#2#3|/#4`#5`#6/<#7,#8>[#9]{%
\def\next[##1;##2`##3`##4]{\def\nodee{##1}\def\labele{##2}%
   \def\labelf{##3}\def\labelg{##4}}%
\next[#9]%
\advance \xpos by -\deltax%
\advance \xpos by -#7\advance \ypos by #8%
\advance\deltax by #7%
\morphism(\xpos,\ypos)|#1|/{#4}/<\deltax,-#8>[\nodee`\nodeb;\labele]%
\advance\deltax by -#7%
\morphism(\xpos,\ypos)|#2|/{#5}/<#7,-#8>[\nodee`\nodea;\labelf]%
\advance\deltay by #8%
\morphism(\xpos,\ypos)|#3|/{#6}/<#7,-\deltay>[\nodee`\nodec;\labelg]%
\ignorespaces}%
\def\trident{\ifnextchar|{\tridentp}{\tridentp|amb|}}%
\def\tridentp|#1|{\ifnextchar/{\tridentpp|#1|}{\tridentpp|#1|/{>}`{>}`{>}/}}%
\def\tridentpp|#1|/#2/{\ifnextchar<{\tridentppp|#1|/#2/}%
  {\tridentppp|#1|/#2/<500,500>}}%
\def\setmorphismwidth#1#2#3#4{%
 \setbox0=\hbox{$#1{\labelstyle#3#3}#2$}#4=\wd0%
 \divide #4 by 2 \divide #4 by \ul%
 \advance #4 by 350 \ratchet{#4}{500}}%
\def\setSquarewidth[#1`#2`#3`#4;#5`#6`#7`#8]{%
 \setmorphismwidth{#1}{#2}{#5}{\topw}%
 \setmorphismwidth{#3}{#4}{#8}{\botw}%
\ratchet{\topw}{\botw}}%
\def\Squarepppp(#1)|#2|/#3/<#4>[#5]{%
 \setSquarewidth[#5]%
 \squarepppp(#1)|#2|/#3/<\topw,#4>[#5]%
\ignorespaces}%
\def\Squarep(#1){\ifnextchar|{\Squarepp(#1)}{\Squarepp(#1)|alrb|}}%
\def\Squarepp(#1)|#2|{\ifnextchar/{\Squareppp(#1)|#2|}%
    {\Squareppp(#1)|#2|/>`>`>`>/}}%
\def\Squareppp(#1)|#2|/#3`#4`#5`#6/{%
    \ifnextchar<{\Squarepppp(#1)|#2|/#3`#4`#5`#6/}%
    {\Squarepppp(#1)|#2|/#3`#4`#5`#6/<\default>}}%
\def\hsquarespppp(#1,#2)|#3|/#4/<#5>[#6;#7]{%
\Xpos=#1\Ypos=#2%
\def\next|##1##2##3##4##5##6##7|{%
 \def\Xa{##1}\def\Xb{##2}\def\Xc{##3}\def\Xd{##4}%
 \def\Xe{##5}\def\Xf{##6}\def\Xg{##7}}%
\next|#3|%
\def\next<##1,##2,##3>{\deltaX=##1\deltaXprime=##2\deltaY=##3}%
\next<#5>%
\def\next[##1`##2`##3`##4`##5`##6]{%
 \def\Nodea{##1}\def\Nodeb{##2}\def\Nodec{##3}%
 \def\Noded{##4}\def\Nodee{##5}\def\Nodef{##6}}%
\next[#6]%
\def\next[##1`##2`##3`##4`##5`##6`##7]{%
 \def\Labela{##1}\def\Labelb{##2}\def\Labelc{##3}\def\Labeld{##4}%
 \def\Labele{##5}\def\Labelf{##6}\def\Labelg{##7}}%
\next[#7]%
\dohsquares/#4/}%
\def\dohsquares/#1`#2`#3`#4`#5`#6`#7/{%
\squarepppp(\Xpos,\Ypos)|\Xa\Xc\Xd\Xf|/#1`#3`#4`#6/<\deltaX,\deltaY>%
 [\Nodea`\Nodeb`\Noded`\Nodee;\Labela`\Labelc`\Labeld`\Labelf]%
 \advance \Xpos by \deltaX%
\squarepppp(\Xpos,\Ypos)|\Xb\Xd\Xe\Xg|/#2``#5`#7/<\deltaXprime,\deltaY>%
[\Nodeb`\Nodec`\Nodee`\Nodef;\Labelb``\Labele`\Labelg]%
\ignorespaces}%
\def\hsquaresp(#1){\ifnextchar|{\hsquarespp(#1)}{\hsquarespp%
(#1)|aalmrbb|}}%
\def\hsquarespp(#1)|#2|{\ifnextchar/{\hsquaresppp(#1)|#2|}%
    {\hsquaresppp(#1)|#2|/>`>`>`>`>`>`>/}}%
\def\hsquaresppp(#1)|#2|/#3/{%
    \ifnextchar<{\hsquarespppp(#1)|#2|/#3/}%
    {\hsquarespppp(#1)|#2|/#3/<\default,\default,\default>}}%
\def\hSquarespppp(#1,#2)|#3|/#4/<#5>[#6;#7]{%
\Xpos=#1\Ypos=#2%
\def\next|##1##2##3##4##5##6##7|{%
 \def\Xa{##1}\def\Xb{##2}\def\Xc{##3}\def\Xd{##4}%
 \def\Xe{##5}\def\Xf{##6}\def\Xg{##7}}%
\next|#3|%
\deltaY=#5%
\def\next[##1`##2`##3`##4`##5`##6]{%
 \def\Nodea{##1}\def\Nodeb{##2}\def\Nodec{##3}%
 \def\Noded{##4}\def\Nodee{##5}\def\Nodef{##6}}%
\next[#6]%
\def\next[##1`##2`##3`##4`##5`##6`##7]{%
 \def\Labela{##1}\def\Labelb{##2}\def\Labelc{##3}\def\Labeld{##4}%
 \def\Labele{##5}\def\Labelf{##6}\def\Labelg{##7}}%
\next[#7]%
\dohSquares/#4/}%
\def\dohSquares/#1`#2`#3`#4`#5`#6`#7/{%
\Squarepppp(\Xpos,\Ypos)|\Xa\Xc\Xd\Xf|/#1`#3`#4`#6/<\deltaY>%
 [\Nodea`\Nodeb`\Noded`\Nodee;\Labela`\Labelc`\Labeld`\Labelf]%
 \advance \Xpos by \topw%
\Squarepppp(\Xpos,\Ypos)|\Xb\Xd\Xe\Xg|/#2``#5`#7/<\deltaY>%
[\Nodeb`\Nodec`\Nodee`\Nodef;\Labelb``\Labele`\Labelg]%
\ignorespaces}%
\def\hSquaresp(#1){\ifnextchar|{\hSquarespp(#1)}{\hSquarespp%
(#1)|aalmrbb|}}%
\def\hSquarespp(#1)|#2|{\ifnextchar/{\hSquaresppp(#1)|#2|}%
    {\hSquaresppp(#1)|#2|/>`>`>`>`>`>`>/}}%
\def\hSquaresppp(#1)|#2|/#3/{%
    \ifnextchar<{\hSquarespppp(#1)|#2|/#3/}%
    {\hSquarespppp(#1)|#2|/#3/<\default>}}%
\def\vsquarespppp(#1,#2)|#3|/#4/<#5>[#6;#7]{%
\Xpos=#1\Ypos=#2%
\def\next|##1##2##3##4##5##6##7|{%
 \def\Xa{##1}\def\Xb{##2}\def\Xc{##3}\def\Xd{##4}%
 \def\Xe{##5}\def\Xf{##6}\def\Xg{##7}}%
\next|#3|%
\def\next<##1,##2,##3>{\deltaX=##1\deltaY=##2\deltaYprime=##3}%
\next<#5>%
\def\next[##1`##2`##3`##4`##5`##6]{%
 \def\Nodea{##1}\def\Nodeb{##2}\def\Nodec{##3}%
 \def\Noded{##4}\def\Nodee{##5}\def\Nodef{##6}}%
\next[#6]%
\def\next[##1`##2`##3`##4`##5`##6`##7]{%
 \def\Labela{##1}\def\Labelb{##2}\def\Labelc{##3}\def\Labeld{##4}%
 \def\Labele{##5}\def\Labelf{##6}\def\Labelg{##7}}%
\next[#7]%
\dovsquares/#4/}%
\def\dovsquares/#1`#2`#3`#4`#5`#6`#7/{%
\squarepppp(\Xpos,\Ypos)|\Xd\Xe\Xf\Xg|/`#5`#6`#7/<\deltaX,\deltaYprime>%
[\Nodec`\Noded`\Nodee`\Nodef;`\Labele`\Labelf`\Labelg]%
 \advance\Ypos by \deltaYprime%
\squarepppp(\Xpos,\Ypos)|\Xa\Xb\Xc\Xd|/#1`#2`#3`#4/<\deltaX,\deltaY>%
 [\Nodea`\Nodeb`\Nodec`\Noded;\Labela`\Labelb`\Labelc`\Labeld]%
\ignorespaces}%
\def\vsquaresp(#1){\ifnextchar|{\vsquarespp(#1)}{\vsquarespp%
(#1)|aalmrbb|}}%
\def\vsquarespp(#1)|#2|{\ifnextchar/{\vsquaresppp(#1)|#2|}%
    {\vsquaresppp(#1)|#2|/>`>`>`>`>`>`>/}}%
\def\vsquaresppp(#1)|#2|/#3/{%
    \ifnextchar<{\vsquarespppp(#1)|#2|/#3/}%
    {\vsquarespppp(#1)|#2|/#3/<\default,\default,\default>}}%
\def\vSquarespppp(#1,#2)|#3|/#4/<#5,#6>[#7;#8]{%
\Xpos=#1\Ypos=#2%
\def\next|##1##2##3##4##5##6##7|{%
 \def\Xa{##1}\def\Xb{##2}\def\Xc{##3}\def\Xd{##4}%
 \def\Xe{##5}\def\Xf{##6}\def\Xg{##7}}%
\next|#3|%
\deltaX=#5%
\deltaY=#6%
\def\next[##1`##2`##3`##4`##5`##6]{%
 \def\Nodea{##1}\def\Nodeb{##2}\def\Nodec{##3}%
 \def\Noded{##4}\def\Nodee{##5}\def\Nodef{##6}}%
\next[#7]%
\def\next[##1`##2`##3`##4`##5`##6`##7]{%
 \def\Labela{##1}\def\Labelb{##2}\def\Labelc{##3}\def\Labeld{##4}%
 \def\Labele{##5}\def\Labelf{##6}\def\Labelg{##7}}%
\next[#8]%
\dovSquares/#4/\ignorespaces}%
\def\dovSquares/#1`#2`#3`#4`#5`#6`#7/{%
\setmorphismwidth{\Nodea}{\Nodeb}{\Labela}{\topw}%
\setmorphismwidth{\Nodec}{\Noded}{\Labeld}{\botw}%
\ratchet{\topw}{\botw}%
\setmorphismwidth{\Nodee}{\Nodef}{\Labelg}{\botw}%
\ratchet{\topw}{\botw}%
\square(\Xpos,\Ypos)|\Xd\Xe\Xf\Xg|/`#5`#6`#7/<\topw,\deltaX>%
 [\Nodec`\Noded`\Nodee`\Nodef;`\Labele`\Labelf`\Labelg]%
\advance \Ypos by \deltaX%
\square(\Xpos,\Ypos)|\Xa\Xb\Xc\Xd|/#1`#2`#3`#4/<\topw,\deltaY>%
 [\Nodea`\Nodeb`\Nodec`\Noded;\Labela`\Labelb`\Labelc`\Labeld]%
}%
\def\vSquaresp(#1){\ifnextchar|{\vSquarespp(#1)}{\vSquarespp%
(#1)|alrmlrb|}}%
\def\vSquarespp(#1)|#2|{\ifnextchar/{\vSquaresppp(#1)|#2|}%
    {\vSquaresppp(#1)|#2|/>`>`>`>`>`>`>/}}%
\def\vSquaresppp(#1)|#2|/#3/{%
    \ifnextchar<{\vSquarespppp(#1)|#2|/#3/}%
    {\vSquarespppp(#1)|#2|/#3/<\default,\default>}}%
\def\osquarepppp(#1)|#2|/#3`#4`#5`#6/<#7>[#8]{\squarepppp%
 (#1)|#2|/#3`#4`#5`#6/<#7>[#8]%
 \let\Nodea\nodea\let\Nodeb\nodeb%
\let\Nodec\nodec\let\Noded\noded\Xpos=\xpos\Ypos=\ypos%
\deltaX=\deltax \deltaY=\deltay \isquare}%
\def\osquarep(#1){\ifnextchar|{\osquarepp(#1)}{\osquarepp(#1)|alrb|}}%
\def\osquarepp(#1)|#2|{\ifnextchar/{\osquareppp(#1)|#2|}%
    {\osquareppp(#1)|#2|/>`>`>`>/}}%
\def\osquareppp(#1)|#2|/#3`#4`#5`#6/{%
    \ifnextchar<{\osquarepppp(#1)|#2|/#3`#4`#5`#6/}%
    {\osquarepppp(#1)|#2|/#3`#4`#5`#6/<1500,1500>}}%
\def\isquarepppp(#1)|#2|/#3`#4`#5`#6/<#7>[#8]{%
 \squarepppp(#1)|#2|/#3`#4`#5`#6/<#7>[#8]%
\ifnextchar|{\cubep}{\cubep|mmmm|}}%
\def\cubep|#1|{\ifnextchar/{\cubepp|#1|}{\cubepp|#1|/>`>`>`>/}}%
\def\isquare{\ifnextchar({\isquarep}{\isquarep(\default,\default)}}%
\def\isquarep(#1){\ifnextchar|{\isquarepp(#1)}{\isquarepp(#1)|alrb|}}%
\def\isquarepp(#1)|#2|{\ifnextchar/{\isquareppp(#1)|#2|}%
    {\isquareppp(#1)|#2|/>`>`>`>/}}%
\def\isquareppp(#1)|#2|/#3`#4`#5`#6/{%
    \ifnextchar<{\isquarepppp(#1)|#2|/#3`#4`#5`#6/}%
    {\isquarepppp(#1)|#2|/#3`#4`#5`#6/<500,500>}}%
\def\cubepp|#1#2#3#4|/#5`#6`#7`#8/[#9]{%
\def\next[##1`##2`##3`##4]{\gdef\Labela{##1}%
\gdef\Labelb{##2}\gdef\Labelc{##3}\gdef\Labeld{##4}}\next[#9]%
\xend\xpos \yend\ypos%
\Xend\xend\advance\Xend by -\Xpos%
\Yend\yend\advance\Yend by -\Ypos%
\domorphism(\Xpos,\Ypos)|#2|/#6/<\Xend,\Yend>[\Nodeb`\nodeb;\Labelb]%
\advance\Xpos by-\deltaX%
\advance\xend by-\deltax%
\Xend\xend\advance\Xend by -\Xpos%
\domorphism(\Xpos,\Ypos)|#1|/#5/<\Xend,\Yend>[\Nodea`\nodea;\Labela]%
\advance\Ypos by-\deltaY%
\advance\yend by-\deltay%
\Yend\yend\advance\Yend by -\Ypos%
\domorphism(\Xpos,\Ypos)|#3|/#7/<\Xend,\Yend>[\Nodec`\nodec;\Labelc]%
\advance\Xpos by\deltaX%
\advance\xend by\deltax%
\Xend\xend\advance\Xend by -\Xpos%
\domorphism(\Xpos,\Ypos)|#4|/#8/<\Xend,\Yend>[\Noded`\noded;\Labeld]%
\ignorespaces}%
\def\setwdth#1#2{\setbox0\hbox{$\labelstyle#1$}\wdth=\wd0%
\setbox0\hbox{$\labelstyle#2$}\ifnum\wdth<\wd0 \wdth=\wd0 \fi}%
\def\topppp/#1/<#2>^#3_#4{\mathrel{
\ifnum#2=0
   \setwdth{#3}{#4}\deltax=\wdth \divide \deltax by \ul
   \advance \deltax by \defaultmargin  \ratchet{\deltax}{200}%
\else \deltax #2
\fi
\xy\ar@{#1}^{#3}_{#4}(\deltax,0) \endxy
 }}%
\def\toppp/#1/<#2>^#3{\ifnextchar_{\topppp/#1/<#2>^{#3}}{\topppp/#1/<#2>^{#3}_{}}}%
\def\topp/#1/<#2>{\ifnextchar^{\toppp/#1/<#2>}{\toppp/#1/<#2>^{}}}%
\def\toop/#1/{\ifnextchar<{\topp/#1/}{\topp/#1/<0>}}%
\def\twopppp/#1`#2/<#3>^#4_#5{\mathrel{
\ifnum0=#3%
  \setwdth{#4}{#5}\deltax=\wdth \divide \deltax by \ul \advance \deltax
  by \defaultmargin \ratchet{\deltax}{200}%
\else \deltax#3 \fi
\xy\ar@{#1}@<2.5pt>^{#4}(\deltax,0)%
\ar@{#2}@<-2.5pt>_{#5}(\deltax,0)\endxy}}%
\def\twoppp/#1`#2/<#3>^#4{\ifnextchar_{\twopppp/#1`#2/<#3>^{#4}}%
  {\twopppp/#1`#2/<#3>^{#4}_{}}}%
\def\twopp/#1`#2/<#3>{\ifnextchar^{\twoppp/#1`#2/<#3>}{\twoppp/#1`#2/<#3>^{}}}%
\def\twop/#1`#2/{\ifnextchar<{\twopp/#1`#2/}{\twopp/#1`#2/<0>}}%
\def\threeppppp/#1`#2`#3/<#4>^#5|#6_#7{\mathrel{
\ifnum0=#4%
\setbox0\hbox{$\labelstyle#5$}\wdth=\wd0
\setbox0\hbox{$\labelstyle#6$}\ifnum\wdth<\wd0 \wdth=\wd0 \fi
\setbox0\hbox{$\labelstyle#7$}\ifnum\wdth<\wd0 \wdth=\wd0 \fi
\deltax=\wdth \divide \deltax by \ul \advance \deltax by
\defaultmargin \ratchet{\deltax}{300}%
\else\deltax#4 \fi
    \xy \ifnum\wd0=0 \ar@{#2}(\deltax,0)%
    \else \ar@{#2}|{#6}(\deltax,0)\fi
\ar@{#1}@<4.5pt>^{#5}(\deltax,0)%
\ar@{#3}@<-4.5pt>_{#7}(\deltax,0)\endxy}}%
\def\threepppp/#1`#2`#3/<#4>^#5|#6{\ifnextchar_{\threeppppp%
  /#1`#2`#3/<#4>^{#5}|{#6}}{\threeppppp/#1`#2`#3/<#4>^{#5}|{#6}_{}}}%
\def\threeppp/#1`#2`#3/<#4>^#5{\ifnextchar|{\threepppp%
  /#1`#2`#3/<#4>^{#5}}{\threepppp/#1`#2`#3/<#4>^{#5}|{}}}%
\def\threepp/#1`#2`#3/<#4>{\ifnextchar^{\threeppp/#1`#2`#3/<#4>}%
  {\threeppp/#1`#2`#3/<#4>^{}}}%
\def\threep/#1`#2`#3/{\ifnextchar<{\threepp/#1`#2`#3/}%
  {\threepp/#1`#2`#3/<0>}}%
\def\twoar(#1,#2){{%
 \scalefactor{0.1}%
 \deltax#1\deltay#2%
 \deltaX=\ifnum\deltax<0-\fi\deltax%
 \deltaY=\ifnum\deltay<0-\fi\deltay%
 \Xend\deltax \multiply \Xend by \deltax%
 \Yend\deltay \multiply \Yend by \deltay%
 \advance\Xend by \Yend \multiply \Xend by 3%
 \ifnum \deltaX > \deltaY%
    \multiply \deltaX by 3 \advance \deltaX by \deltaY%
 \else%
    \multiply \deltaY by 3 \advance \deltaX by \deltaY%
 \fi%
 \multiply\deltax by 500%
 \multiply\deltay by 500%
 \xpos\deltax \multiply \xpos by 3 \divide\xpos by \deltaX%
 \Xpos\deltax \multiply \Xpos by \deltaX \divide \Xpos by \Xend%
 \advance \xpos by \Xpos%
 \ypos\deltay \multiply \ypos by 3 \divide\ypos by \deltaX%
 \Ypos\deltay \multiply \Ypos by \deltaX \divide \Ypos by \Xend%
 \advance \ypos by \Ypos%
 \xy \ar@{=>}(\xpos,\ypos) \endxy%
}\ignorespaces}%
\def\iiixiiipppppp(#1,#2)|#3|/#4/<#5>#6<#7>[#8;#9]{%
 \xpos#1\ypos#2\relax%
 \def\next|##1##2##3##4##5##6##7|{\def\xa{##1}\def\xb{##2}%
 \def\xc{##3}\def\xd{##4}\def\xe{##5}\def\xf{##6}\nextt|##7|}%
 \def\nextt|##1##2##3##4##5##6|{\def\xg{##1}\def\xh{##2}%
 \def\xi{##3}\def\xj{##4}\def\xk{##5}\def\xl{##6}}%
 \next|#3|%
 \def\next<##1,##2>{\deltax##1\deltay##2}%
 \next<#5>%
 \def\next<##1,##2>{\deltaX##1\deltaY##2}%
 \next<#7>%
 \def\next##1{\topw##1\relax%
 \ifodd\topw \def\zl{}\else\def\zl{\relax}\fi \divide\topw by 2
 \ifodd\topw \def\zk{}\else\def\zk{\relax}\fi \divide\topw by 2
 \ifodd\topw \def\zj{}\else\def\zj{\relax}\fi \divide\topw by 2
 \ifodd\topw \def\zi{}\else\def\zi{\relax}\fi \divide\topw by 2
 \ifodd\topw \def\zh{}\else\def\zh{\relax}\fi \divide\topw by 2
 \ifodd\topw \def\zg{}\else\def\zg{\relax}\fi \divide\topw by 2
 \ifodd\topw \def\zf{}\else\def\zf{\relax}\fi \divide\topw by 2
 \ifodd\topw \def\ze{}\else\def\ze{\relax}\fi \divide\topw by 2
 \ifodd\topw \def\zd{}\else\def\zd{\relax}\fi \divide\topw by 2
 \ifodd\topw \def\zc{}\else\def\zc{\relax}\fi \divide\topw by 2
 \ifodd\topw \def\zb{}\else\def\zb{\relax}\fi \divide\topw by 2
 \ifodd\topw \def\za{}\else\def\za{\relax}\fi}%
 \next{#6}%
 \def\next[##1`##2`##3`##4`##5`##6`##7`##8`##9]{%
 \def\nodea{##1}\def\nodeb{##2}\def\nodec{##3}%
 \def\noded{##4}\def\nodee{##5}\def\nodef{##6}%
 \def\nodeg{##7}\def\nodeh{##8}\def\nodei{##9}}%
 \next[#8]%
 \def\next[##1`##2`##3`##4`##5`##6`##7]{%
 \def\labela{##1}\def\labelb{##2}\def\labelc{##3}%
 \def\labeld{##4}\def\labele{##5}\def\labelf{##6}\nextt[##7]}%
 \def\nextt[##1`##2`##3`##4`##5`##6]{%
 \def\labelg{##1}\def\labelh{##2}\def\labeli{##3}%
 \def\labelj{##4}\def\labelk{##5}\def\labell{##6}}%
 \next[#9]%
 \def\next/##1`##2`##3`##4`##5`##6`##7/{%
\morphism(\xpos,\ypos)|\xe|/{##5}/<\deltax,0>[\nodeg`\nodeh;\labele]%
 \ifx\zi\empty\relax \morphism(\xpos,\ypos)||/<-/<-\deltaX,0>[\nodeg`0;]\fi%
 \ifx\zd\empty\relax \morphism(\xpos,\ypos)||<0,-\deltaY>[\nodeg`0;]\fi%
 \advance\xpos by \deltax%
 \morphism(\xpos,\ypos)|\xf|/{##6}/<\deltax,0>[\nodeh`\nodei;\labelf]%
 \ifx\ze\empty\relax \morphism(\xpos,\ypos)||<0,-\deltaY>[\nodeh`0;]\fi%
 \advance\xpos by \deltax%
 \ifx\zf\empty\relax \morphism(\xpos,\ypos)||<0,-\deltaY>[\nodei`0;]\fi%
 \ifx\zl\empty\relax \morphism(\xpos,\ypos)||<\deltaX,0>[\nodei`0;]\fi%
 \advance\ypos by \deltay%
 \ifx\zk\empty\relax \morphism(\xpos,\ypos)||<\deltaX,0>[\nodef`0;]\fi%
 \advance\xpos by -\deltax%
 \morphism(\xpos,\ypos)|\xd|/{##4}/<\deltax,0>[\nodee`\nodef;\labeld]%
 \advance\xpos by -\deltax%
 \morphism(\xpos,\ypos)|\xc|/{##3}/<\deltax,0>[\noded`\nodee;\labelc]%
 \ifx\zh\empty\relax \morphism(\xpos,\ypos)||/<-/<-\deltaX,0>[\noded`0;]\fi%
 \advance\ypos by \deltay%
 \morphism(\xpos,\ypos)|\xa|/{##1}/<\deltax,0>[\nodea`\nodeb;\labela]%
 \ifx\zg\empty\relax \morphism(\xpos,\ypos)||/<-/<-\deltaX,0>[\nodea`0;]\fi%
 \ifx\za\empty\relax \morphism(\xpos,\ypos)||/<-/<0,\deltaY>[\nodea`0;]\fi%
 \advance\xpos by \deltax%
 \morphism(\xpos,\ypos)|\xb|/{##2}/<\deltax,0>[\nodeb`\nodec;\labelb]%
 \ifx\zb\empty\relax \morphism(\xpos,\ypos)||/<-/<0,\deltaY>[\nodeb`0;]\fi%
 \advance\xpos by \deltax%
 \ifx\zc\empty\relax \morphism(\xpos,\ypos)||/<-/<0,\deltaY>[\nodec`0;]\fi%
 \ifx\zj\empty\relax \morphism(\xpos,\ypos)||<\deltaX,0>[\nodec`0;]\fi%
 \nextt/##7/}%
 \def\nextt/##1`##2`##3`##4`##5`##6/{%
 \morphism(\xpos,\ypos)|\xi|/{##3}/<0,-\deltay>[\nodec`\nodef;\labeli]%
 \advance\xpos by -\deltax%
 \morphism(\xpos,\ypos)|\xh|/{##2}/<0,-\deltay>[\nodeb`\nodee;\labelh]%
 \advance\xpos by -\deltax%
 \morphism(\xpos,\ypos)|\xg|/{##1}/<0,-\deltay>[\nodea`\noded;\labelg]%
 \advance\ypos by -\deltay%
 \morphism(\xpos,\ypos)|\xj|/{##4}/<0,-\deltay>[\noded`\nodeg;\labelj]%
 \advance\xpos by \deltax%
 \morphism(\xpos,\ypos)|\xk|/{##5}/<0,-\deltay>[\nodee`\nodeh;\labelk]%
 \advance\xpos by \deltax%
 \morphism(\xpos,\ypos)|\xl|/{##6}/<0,-\deltay>[\nodef`\nodei;\labell]}%
 \next/#4/\ignorespaces}%
\def\iiixiiip(#1){\ifnextchar|{\iiixiiipp(#1)}%
  {\iiixiiipp(#1)|aammbblmrlmr|}}%
\def\iiixiiipp(#1)|#2|{\ifnextchar/{\iiixiiippp(#1)|#2|}%
    {\iiixiiippp(#1)|#2|/>`>`>`>`>`>`>`>`>`>`>`>/}}%
\def\iiixiiippp(#1)|#2|/#3/{%
    \ifnextchar<{\iiixiiipppp(#1)|#2|/#3/}%
    {\iiixiiipppp(#1)|#2|/#3/<\default,\default>}}%
\def\iiixiiipppp(#1)|#2|/#3/<#4>{\ifnextchar[{\iiixiiippppp(#1)|#2|/#3/%
   <#4>0<0,0>}{\iiixiiippppp(#1)|#2|/#3/<#4>}}%
\def\iiixiiippppp(#1)|#2|/#3/<#4>#5{\ifnextchar<%
   {\iiixiiipppppp(#1)|#2|/#3/<#4>{#5}}%
   {\iiixiiipppppp(#1)|#2|/#3/<#4>{#5}<400,400>}}%
\def\iiixiipppppp(#1,#2)|#3|/#4/<#5>#6<#7>[#8;#9]{%
 \xpos#1\ypos#2\relax%
 \def\next|##1##2##3##4##5##6##7|{\def\xa{##1}\def\xb{##2}%
 \def\xc{##3}\def\xd{##4}\def\xe{##5}\def\xf{##6}\def\xg{##7}}%
 \next|#3|%
 \def\next<##1,##2>{\deltax##1\deltay##2}%
 \next<#5>%
 \deltaX#7
 \topw#6
 \def\next{%
 \ifodd\topw \def\za{}\else\def\za{\relax}\fi \divide\topw by 2
 \ifodd\topw \def\zb{}\else\def\zb{\relax}\fi \divide\topw by 2
 \ifodd\topw \def\zc{}\else\def\zc{\relax}\fi \divide\topw by 2
 \ifodd\topw \def\zd{}\else\def\zd{\relax}\fi}%
 \next%
 \def\next[##1`##2`##3`##4`##5`##6]{%
 \def\nodea{##1}\def\nodeb{##2}\def\nodec{##3}%
 \def\noded{##4}\def\nodee{##5}\def\nodef{##6}}%
 \next[#8]%
 \def\next[##1`##2`##3`##4`##5`##6`##7]{%
 \def\labela{##1}\def\labelb{##2}\def\labelc{##3}%
 \def\labeld{##4}\def\labele{##5}\def\labelf{##6}\def\labelg{##7}}%
 \next[#9]%
 \def\next/##1`##2`##3`##4`##5`##6`##7/{%
 \ifx\zc\empty\relax\morphism(\xpos,\ypos)<\deltaX,0>[0`\noded;]\fi%
 \advance\xpos by\deltaX%
 \morphism(\xpos,\ypos)|\xc|/##3/<\deltax,0>[\noded`\nodee;\labelc]%
 \advance\xpos by \deltax%
 \morphism(\xpos,\ypos)|\xd|/##4/<\deltax,0>[\nodee`\nodef;\labeld]%
 \advance\xpos by \deltax%
 \ifx\zd\empty\relax  \morphism(\xpos,\ypos)<\deltaX,0>[\nodef`0;]\fi%
 \advance\xpos by -\deltaX  \advance\xpos by -\deltax
 \advance\xpos by -\deltax  \advance\ypos by \deltay
 \ifx\za\empty\relax\morphism(\xpos,\ypos)<\deltaX,0>[0`\nodea;]\fi%
 \advance\xpos by\deltaX%
 \morphism(\xpos,\ypos)|\xa|/##1/<\deltax,0>[\nodea`\nodeb;\labela]%
 \morphism(\xpos,\ypos)|\xe|/##5/<0,-\deltay>[\nodea`\noded;\labele]%
 \advance\xpos by \deltax%
 \morphism(\xpos,\ypos)|\xb|/##2/<\deltax,0>[\nodeb`\nodec;\labelb]%
 \morphism(\xpos,\ypos)|\xf|/##6/<0,-\deltay>[\nodeb`\nodee;\labelf]%
 \advance\xpos by \deltax%
 \morphism(\xpos,\ypos)|\xg|/##7/<0,-\deltay>[\nodec`\nodef;\labelg]%
 \ifx\zb\empty\relax \morphism(\xpos,\ypos)<\deltaX,0>[\nodec`0;]\fi}%
 \next/#4/\ignorespaces}%
\def\iiixiip(#1){\ifnextchar|{\iiixiipp(#1)}%
  {\iiixiipp(#1)|aabblmr|}}%
\def\iiixiipp(#1)|#2|{\ifnextchar/{\iiixiippp(#1)|#2|}%
    {\iiixiippp(#1)|#2|/>`>`>`>`>`>`>/}}%
\def\iiixiippp(#1)|#2|/#3/{%
    \ifnextchar<{\iiixiipppp(#1)|#2|/#3/}%
    {\iiixiipppp(#1)|#2|/#3/<\default,\default>}}%
\def\iiixiipppp(#1)|#2|/#3/<#4>{\ifnextchar[{\iiixiippppp(#1)|#2|/#3/%
   <#4>{0}<0>}{\iiixiippppp(#1)|#2|/#3/<#4>}}%
\def\iiixiippppp(#1)|#2|/#3/<#4>#5{\ifnextchar<%
   {\iiixiipppppp(#1)|#2|/#3/<#4>{#5}}%
   {\iiixiipppppp(#1)|#2|/#3/<#4>{#5}<400>}}%
\def\node#1(#2,#3)[#4]{%
\expandafter\gdef\csname x@#1\endcsname{#2}%
\expandafter\gdef\csname y@#1\endcsname{#3}%
\expandafter\gdef\csname ob@#1\endcsname{#4}%
\ignorespaces}%
\def\arrowp|#1|{\ifnextchar/{\arrowpp|#1|}{\arrowpp|#1|/>/}}%
\def\arrowpp|#1|/#2/[#3`#4;#5]{%
\xfinish=\csname x@#4\endcsname%
\yfinish=\csname y@#4\endcsname%
\advance\xfinish by -\csname x@#3\endcsname%
\advance\yfinish by -\csname y@#3\endcsname%
\morphism(\csname x@#3\endcsname,\csname y@#3\endcsname)|#1|/#2/%
<\xfinish,\yfinish>[\csname ob@#3\endcsname`\csname ob@#4\endcsname;#5]%
}%
\def\Loop(#1,#2)#3(#4,#5){\POS(#1,#2)*+!!<0ex,\axis>{#3}\ar@(#4,#5)}%
\def\iloop#1(#2,#3){\xy\Loop(0,0)#1(#2,#3)\endxy}%
     \let \PATHafterPOS\PATHafterPOS@default%
     \let \arsavedPATHafterPOS@@\relax%
     \let\afterar@@\relax%
\xydef@\endxyobj{\if\inxy@\else\xyerror@{Unexpected \string\endxy}{}\fi%
>  \relax%
>   \dimen@=\Y@max \advance\dimen@-\Y@min%
>   \ifdim\dimen@<\z@ \dimen@=\z@ \Y@min=\z@ \Y@max=\z@ \fi%
>   \dimen@=\X@max \advance\dimen@-\X@min%
>   \ifdim\dimen@<\z@ \dimen@=\z@ \X@min=\z@ \X@max=\z@ \fi%
>   \edef\tmp@{\egroup%
>     \setboxz@h{\kern-\the\X@min \boxz@}%
>     \ht\z@=\the\Y@max \dp\z@=-\the\Y@min \wdz@=\the\dimen@%
>     \noexpand\maybeunraise@ \raise\dimen@\boxz@%
>     \noexpand\recoverXyStyle@ \egroup \noexpand\xy@end%
>     \U@c=\the\Y@max \advance\U@c-\the\Y@c%
>     \D@c=-\the\Y@min \advance\D@c\the\Y@c%
>     \L@c=-\the\X@min  \advance\L@c\the\X@c%
>     \R@c=\the\X@max  \advance\R@c-\the\X@c%
>    }\tmp@}%
\gdef\xymerge@MinMax{}%
\xydef@\twocell{\hbox\bgroup\xysave@MinMax\@twocell}%
\xydef@\uppertwocell{\hbox\bgroup\xysave@MinMax\@uppertwocell}%
\xydef@\lowertwocell{\hbox\bgroup\xysave@MinMax\@lowertwocell}%
\xydef@\compositemap{\hbox\bgroup\xysave@MinMax\@compositemap}%
\xydef@\xysave@MinMax{\xdef\xymerge@MinMax{%
   \noexpand\ifdim\X@max<\the\X@max \X@max=\the\X@max\noexpand\fi%
   \noexpand\ifdim\X@min>\the\X@min \X@min=\the\X@min\noexpand\fi%
   \noexpand\ifdim\Y@max<\the\Y@max \Y@max=\the\Y@max\noexpand\fi%
   \noexpand\ifdim\Y@min>\the\Y@min \Y@min=\the\Y@min\noexpand\fi%
  }}%
\xydef@\drop@Twocell{\boxz@ \xymerge@MinMax}%
\xydef@\twocell@DONE{%
  \edef\tmp@{\egroup%
   \X@min=\the\X@min \X@max=\the\X@max%
   \Y@min=\the\Y@min \Y@max=\the\Y@max}\tmp@%
  \L@c=\X@c \advance\L@c-\X@min \R@c=\X@max \advance\R@c-\X@c%
  \D@c=\Y@c \advance\D@c-\Y@min \U@c=\Y@max \advance\U@c-\Y@c%
  \ht\z@=\U@c \dp\z@=\D@c \dimen@=\L@c \advance\dimen@\R@c \wdz@=\dimen@%
  \computeLeftUpness@%
  \setboxz@h{\kern-\X@p \raise-\Y@c\boxz@ }%
  \dimen@=\L@c \advance\dimen@\R@c \wdz@=\dimen@ \ht\z@=\U@c \dp\z@=\D@c%
  \Edge@c={\rectangleEdge}\Invisible@false \Hidden@false%
  \edef\Drop@@{\noexpand\drop@Twocell%
   \noexpand\def\noexpand\Leftness@{\Leftness@}%
   \noexpand\def\noexpand\Upness@{\Upness@}}%
  \edef\Connect@@{\noexpand\connect@Twocell%
   \noexpand\ifdim\X@max<\the\X@max \X@max=\the\X@max\noexpand\fi%
   \noexpand\ifdim\X@min>\the\X@min \X@min=\the\X@min\noexpand\fi%
   \noexpand\ifdim\Y@max<\the\Y@max \Y@max=\the\Y@max\noexpand\fi%
   \noexpand\ifdim\Y@min>\the\Y@min \Y@min=\the\Y@min\noexpand\fi }%
  \xymerge@MinMax%
}%
\newcommand{\ed}{\end{document}}
\def\ve{\varepsilon}
\def\vp{\varphi}
\newcommand{\keyw}[1]{{\bf #1}}
\newcommand{\ta}[2]{#1_#2\tilde{\phantom{.}}}
\newcommand{\cb}[1]{\mathcal{#1}}
\newcommand{\matrepr}[1]{\left[#1 \right]}
\newcommand{\iu}{{\underline{i}}}
\newcommand{\ju}{{\underline{j}}}
\newcommand{\ku}{{\underline{k}}}
\newcommand{\lu}{{\underline{l}}}
\newcommand{\Hom}{\mathrm{Hom}}
\newcommand{\End}{\mathrm{End}}
\newcommand{\cml}{{\mathrm{cmul}}}
\newcommand{\spl}{\mathrm{split}}
\newcommand{\tve}{t_\ve}
\newcommand{\tp}{\ta{T}{\ve}}
\newcommand{\E}[2]{\mathrm{E}^{#1}_{#2}}
\newcommand{\F}[2]{\mathrm{F}^{#1}_{#2}}
\newcommand{\lang}{\mathopen{<}}
\newcommand{\rang}{\mathclose{>}}
\renewcommand{\qedsymbol}{\APLbox}
\DeclareMathOperator{\hotimes}{\Hat{\otimes}}
\theoremstyle{plain}
\newtheorem{theorem}{Theorem}
\newtheorem{corollary}{Corollary}
\newtheorem{lemma}{Lemma}
\newtheorem{proposition}{Proposition}
\theoremstyle{definition}
\newtheorem{definition}{Definition}
\newtheorem{example}{Example}
\newtheorem*{remark}{Remark}
\begin{document}
\title[On the Transposition Anti-Involution in Real Clifford Algebras I: ...]
{On the Transposition Anti-Involution in Real\\ Clifford Algebras I:
The Transposition Map}
\author[Rafa\l \ Ab\l amowicz]{Rafa\l \ Ab\l amowicz}

\address{%
Department of Mathematics, Box 5054\\
Tennessee Technological University\\
Cookeville, TN 38505, USA}
\email{rablamowicz@tntech.edu}

\author[Bertfried Fauser]{Bertfried Fauser}
\address{%
School of Computer Science\\
The University of Birmingham\\
Edgbaston-Birmingham, W. Midlands, B15 2TT\\
United Kingdom}
\email{B.Fauser@cs.bham.ac.uk}

\subjclass{11E88, 15A66, 16S34, 20C05, 68W30}

\keywords{conjugation, contraction, correlation, dual space, exterior algebra, grade involution, graded tensor product, spinor modules, indecomposable module, involution, left regular representation, minimal left ideal, monomial order, primitive idempotent, quadratic form, reversion, simple algebra, transpose of linear mapping}

\date{\today}


\begin{abstract} A particular orthogonal map on a finite dimensional real quadratic vector space $(V,Q)$ with a non-degenerate quadratic form $Q$ of any signature $(p,q)$ is considered. It can be 
viewed as a correlation of the vector space that leads to a dual Clifford algebra $\cl(V^\ast,Q)$ of linear functionals (multiforms) acting on the universal Clifford algebra $\cl(V,Q).$  The map 
results in a unique involutive automorphism and a unique involutive anti-automorphism of $\cl(V,Q).$ The anti-involution reduces to reversion (resp. conjugation) for any Euclidean (resp. anti-Euclidean)
signature. When applied to a general element of the algebra, it results in transposition of the element matrix in the left regular representation of $\cl(V,Q)$. We give also an example for real spinor spaces. The general setting for spinor representations will be treated in part II of this work~\cite{part2}.
\end{abstract}
\maketitle

\section{Introduction}\label{introduction}
\medskip

In this paper we study a particular involution and a particular anti-involution of a universal real Clifford algebra $\cl(V,Q)$ of a non-degenerate quadratic real vector space of dimension $n.$ Except for the non-degeneracy of the quadratic form $Q$, we do not assume any particular signature $(p,q)$ of~$Q$ and we state our results for all signatures. Thus, we denote the universal $2^n$-dimensional Clifford algebra of~$(V,Q)$ by $\cl_n$ which should not be confused with $\cl_{n,0}$ or $\cl_{0,n}$, that is, respectively, with the Clifford algebra of the Euclidean or the anti-Euclidean quadratic 
space~$V.$ For the structure theory of Clifford algebras we refer to \cite{lam, helmmicali, lounesto, porteous} and, in particular, we adopt Chevalley's definition of the Clifford algebra $\cl(V,Q)$ as 
a subalgebra of the algebra of endomorphisms of the exterior algebra $\bigwedge V$~\cite{lounesto}. For the study of correlations and involutions we refer to \cite{porteous}.

We adopt (with a small modification) Porteous' definition \cite{porteous} of an $\bL^\alpha$-Clifford algebra where $\bL^\alpha$ denotes a \textit{superfield}. In general, a \textit{superfield} 
$\bL^\alpha$, with a fixed field~$k$, consists of a commutative algebra $\bL$ with a unit element $1$ over a commutative field $k$ and an involution $\alpha$ of $\bL,$ whose set of fixed points is the set of scalar multiples of $1$, identified as usual with~$k.$\footnote{Of course, as a special case of the superfield $\bL^\alpha$, we take in this paper $\bL^\alpha=\BR^\alpha$ with $\alpha$ being 
the identity map on $\BR.$ However, in general, the algebra $\bL$ itself need not be a field.} 

Let $(V,Q)$ be a finite-dimensional quadratic space over a commutative field~$k$ and, let~$A$ be an associative $\bL$-algebra with unity $1_A$ where $\bL$ is identified with a subalgebra of~$A.$ Then, 
$A$ is said to be an $\bL^\alpha$-\textit{Clifford algebra for $(V,Q)$} if it contains~$V$ as a $k$-linear subspace in such a way that, for all $\bx \in V$, $\bx^2 = Q(\bx)\cdot 1_A$, and provided also that $A$ is generated as a ring by $\bL$ and $V.$~\footnote{In Porteous' definition, $V$ is a quadratic space over some not necessarily real field $k$ and he uses the negative sign in 
$\bx^2 = -Q(\bx) \cdot 1_A$.} Let $\cl(V,Q)$ be the universal (real) Clifford algebra and let $A$ be the universal $\bL^\alpha$-Clifford algebra where $\bL^\alpha$ is a superfield with a fixed field $k.$ Then we have  $\cl(V,Q) \cong A \otimes_k \bL$~\cite[Prop. 15.31]{porteous}.  

The following theorem from Porteous~\cite[Thm. 15.32]{porteous} modified later to the  case when $k=\BR$ and $\bL^\alpha=\BR^\alpha$ with $\alpha$ being the identity map on $\BR$, is fundamental to 
this work\footnote{Here, the term $k$-\textit{orthogonal space} $V$ is  a synonym for a $k$-quadratic space $V$, or, the pair $(V,Q)$ in our notation.}: 
\begin{theorem}[Porteous]
Let $A$ be an universal $\bL^\alpha$-Clifford algebra for a finite-dimensional $k$-orthogonal space $V$, $\bL^\alpha$ being a superfield with involution $\alpha$ and fixed commutative field  $k$. Then for any orthogonal automorphism $t : V \rightarrow V$, there is a unique $\bL$-algebra automorphism $t_A:A \rightarrow A$, sending any $\lambda \in \bL$ to $\lambda$, and a unique $k$-algebra anti-automorphism $\ta{t}{A}: A \rightarrow A$ sending any $\lambda$ to $\lambda^\alpha$, such that the diagrams 
\[
\bfig
\square|arra|/>` >->` >->`>/[V`V`A`A;t`\iota`\iota`t_A]
\efig
\qquad \mbox{and} \qquad
\bfig
\square|arra|/>` >->` >->`>/[V`V`A`A;t`\iota`\iota`\ta{t}{A}]
\efig
\]
commute. Moreover, $(1_V)_A = 1_A$ and, for any $t,u \in O(V)$,
\begin{align}
(ut)_A &= u_At_A=\ta{u}{A} \ta{t}{A}.
\label{eq:com1}
\end{align}       
\end{theorem}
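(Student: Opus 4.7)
The plan is to apply the universal property of the $\bL^\alpha$-Clifford algebra $A$ twice: once directly to construct the automorphism $t_A$, and once applied to the opposite algebra of $A$ to construct the anti-automorphism $\ta{t}{A}$.

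For $t_A$, I would first observe that the composite $\iota \circ t : V \to A$ is $k$-linear and, since $t$ is orthogonal, $(\iota \circ t)(\bx)^2 = Q(t(\bx)) \cdot 1_A = Q(\bx) \cdot 1_A$ for every $\bx \in V$. The universal property of $A$ then yields a unique $\bL$-algebra homomorphism $t_A : A \to A$ that fixes $\bL$ pointwise and makes the first diagram commute. Applying the same construction to $t^{-1}$ and invoking uniqueness gives $(t^{-1})_A \, t_A = (1_V)_A = 1_A$, so $t_A$ is an automorphism. The equality $(1_V)_A = 1_A$ is itself an instance of uniqueness: both sides are $\bL$-algebra endomorphisms of $A$ that restrict to the identity on $V$, so they must coincide.

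For $\ta{t}{A}$, I would pass to the opposite ring $A^{\mathrm{op}}$. Because elements of $V$ square to the same scalar in $A$ and in $A^{\mathrm{op}}$, the subspace $\iota(V) \subset A^{\mathrm{op}}$ still satisfies the Clifford relation, and $A^{\mathrm{op}}$ is still generated as a ring by $\bL$ and $V$. Twisting the $\bL$-scalar action on $A^{\mathrm{op}}$ via $\alpha$ (that is, letting $\lambda$ act on $a$ by $\lambda^\alpha \cdot a$) makes $A^{\mathrm{op}}$ into an $\bL^\alpha$-Clifford algebra for $(V,Q)$. The map $\iota \circ t : V \to A^{\mathrm{op}}$ still satisfies the Clifford relation, so the universal property of $A$ produces a unique $\bL$-algebra homomorphism $\phi : A \to A^{\mathrm{op}}$ extending it; viewed as a $k$-linear map $A \to A$, this $\phi$ is the required $k$-algebra anti-automorphism $\ta{t}{A}$, and the twist forces it to send each $\lambda \in \bL$ to $\lambda^\alpha$.

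Uniqueness of both maps then follows, in each case, from the fact that $A$ is generated by $\bL$ and $V$, so the prescribed values on these generators determine any (anti-)homomorphism. For the multiplicativity in~\eqref{eq:com1}, the composite $u_A \circ t_A$ is an $\bL$-algebra endomorphism of $A$ whose restriction to $V$ equals $\iota \circ (ut)$, and so must equal $(ut)_A$ by uniqueness; similarly $\ta{u}{A} \circ \ta{t}{A}$ is a composition of two anti-homomorphisms, hence an $\bL$-algebra homomorphism once one uses $\alpha^2 = \mathrm{id}$ to recover the correct action on $\bL$, and its restriction to $V$ is again $\iota \circ (ut)$, so it also equals $(ut)_A$. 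I expect the main technical point to be verifying carefully that $A^{\mathrm{op}}$ with the twisted $\bL$-action genuinely qualifies as an $\bL^\alpha$-Clifford algebra for $(V,Q)$ in Porteous' sense; once that is established, the remainder of the argument is diagram chasing via the universal property.
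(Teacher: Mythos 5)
The paper does not actually prove this statement: it is reproduced from Porteous's book (Theorem~15.32) as a foundational result, and no proof is given in the text. Your argument is therefore being compared against the standard argument one would expect to find in Porteous, and on those terms it is correct. The device you use — applying the universal property of the universal $\bL^\alpha$-Clifford algebra $A$ once with target $A$ itself (to get $t_A$), and once with target $A^{\mathrm{op}}$ carrying the $\alpha$-twisted $\bL$-structure (to get $\ta{t}{A}$) — is exactly the standard route for producing both automorphisms and anti-automorphisms simultaneously from a single universal property. The checks you would need to do for the twist are all routine: $\bL$ is commutative so $\bL^{\mathrm{op}} = \bL$ sits centrally in $A^{\mathrm{op}}$; $\alpha$ fixes $k$ pointwise, so $V$ remains a $k$-subspace of $A^{\mathrm{op}}$; squaring is unaffected by passage to the opposite ring, so the Clifford relation $\bx^2 = Q(\bx)\cdot 1$ persists; and generation by $\bL$ and $V$ is inherited from $A$. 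Your reduction of $(ut)_A = u_A t_A = \ta{u}{A}\ta{t}{A}$ to uniqueness on generators, using $\alpha^2 = \mathrm{id}$, is also correct.

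Two things are worth making explicit. First, you verify that $t_A$ is invertible (via $(t^{-1})_A$), but you do not say why $\ta{t}{A}$ is an anti-\emph{automorphism} rather than merely an anti-homomorphism; this follows from your own composition rule by setting $u = t^{-1}$, which exhibits $\ta{(t^{-1})}{A}$ as a two-sided inverse, but the step should be written down. Second, and more substantively, you should state precisely which universal property of $A$ you are invoking. The argument needs the version in which the target is allowed to be \emph{any} $\bL^\alpha$-Clifford algebra for $(V,Q)$ together with a $k$-linear Clifford map $V \to B$, not just $B = A$ with a self-map of $V$, because $A^{\mathrm{op}}$ with the $\alpha$-twisted $\bL$-embedding is generally a different object from $A$. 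You flag this as the main technical point at the end, which is the right instinct; with that universal property in hand the rest of the proof is, as you say, diagram chasing.
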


If $t$ is an orthogonal \textit{involution} of $V$, then $t_A$ is the \textit{involution} of $A$ and $\ta{t}{A}$ is the \textit{anti-involution} of $A.$ In the above theorem, the map 
$\xy \morphism/ >->/<500,0>[V`A;\iota]\endxy$  is the injection of the vector space $V$ into $A$, and $\lambda^\alpha = \alpha(\lambda), \, \lambda \in \bL.$ Two special cases are well known: When 
$t = -1_V$ is the negative identity linear orthogonal involution on $V$, then the involution of $\cl(V,Q)$ induced by $t$ is the \textit{main involution} or the \textit{grade involution}, and it is denoted by $u \mapsto \hat{u}$. The anti-involution of $\cl(V,Q)$ induced by $t = 1_V$ is called the \textit{reversion} and it is denoted by $u \mapsto \tilde{u}.$\footnote{The reversion symbol 
$\tilde{u}, \, u \in \cl(V,Q)$, should not be confused with Porteous' notation for the anti-automorphism $\ta{t}{A}: A \rightarrow A$.} Finally, the  anti-involution of $\cl(V,Q)$ induced by 
$t = -1_V$ is called the \textit{(Clifford) conjugation} and it is denoted by $u \mapsto \bar{u}$~\cite{lounesto,porteous}. 

In Section~2, we introduce definitions and notation. 

In Section~3, we define a particular orthogonal involution $t_\varepsilon$ of $V$, where $\ve$ denotes the signature of $Q$ over $\mathbb{R}$, and compute the unique algebra involution and the unique algebra anti-involution $\tp$ of $\cl(V,Q)$ which $t_\ve$ induces. We show that this anti-involution gives the ordinary transposition of matrices in the left-regular representation of 
$\cl(V,Q)$ \emph{for any signature} of~$Q$. 

In Section~4, we introduce a universal Clifford algebra $\cl(V^\ast,Q)$ of the dual space $V^\ast$, endowed with the same quadratic form $Q$, of linear forms on $\cl(V,Q).$ This Clifford algebra will be generated by the \textit{inverse} (or \textit{reciprocal}) orthonormal basis of $V^\ast$. It is important not to confuse this dual Clifford algebra with the linear dual Clifford algebra $\cl^*(V,Q)$, which we do not consider throughout this work. We show that this unique anti-involution $\tp$ of $\cl(V,Q)$ reduces to reversion (resp. conjugation) for Euclidean (resp. anti Euclidean) signatures. Furthermore, it is responsible for transposition of matrices in the left-regular representation of $\cl(V,Q).$ 

Section~5 includes examples which illustrate results from the previous sections. We give examples for left regular representations and for real spinor spaces. The general spinor case is deferred to part II of this work~\cite{part2}.
 
In Section~6, we put our results into a bigger picture and explain how this work may be generalized and elaborated. 

After having studied in the present article the left regular representation and its transposition anti-automorphism, one finds a far more interesting problem how the involution $\tp$ acts on irreducible spinor representations. Such representations develop further a particularity that the spinor spaces may need a field extension to complex numbers, quaternions or even double (skew) fields. We will investigate this situation in the second paper \cite{part2}.

\section{Basic definitions and notation}\label{definitions}
\medskip

Let $V$ be an $n$-dimensional real vector space endowed with a non-degenerate quadratic form $Q$ such that
\begin{align}
Q(\bx) &= \ve_1x_1^2 + \ve_2x_2^2 + \cdots + \ve_nx_n^2,
\label{eq:Q}     
\end{align}
where $\ve_i = \pm 1$ and $\bx = x_1\be_1 + \cdots + x_n\be_n \in V$ for a particular orthonormal basis $\mathcal{B}_1 = \{\be_i,1\le i \le n\}$. We allow $Q$ to have an arbitrary signature 
$-n \le p-q \le n$ where, as usual, $p$ denotes the number of $+1$'s and $q$ denotes the number of $-1$'s in~(\ref{eq:Q}), and $p+q=n$. It is well known, that the equivalence classes of
non-degenerate real quadratic forms under linear transformations are parameterized by the signature. Let $B: V \times V \rightarrow \BR$ be the symmetric bilinear form on~$V$ defined by~$Q$ as 
$B(\bu,\bv) = \frac12(Q(\bu+\bv) - Q(\bu)-Q(\bv))$, and let $\cl_n$ be the universal Clifford algebra of $(V,Q).$ To fix our formalism, we define the associative Clifford product on the exterior algebra 
$\bigwedge V$ through Chevalley's construction~\cite{lounesto}. Namely, first one defines the Clifford product of a vector $\bx \in V$ and a multivector $u \in \bigwedge V$ as 
\begin{align}
\bx u &= \bx \JJ_B u + \bx \wedge u
\label{eq:xu}
\end{align}
where $\bx \JJ_B u$ denotes the \textit{left contraction} of $u$ by $\bx$ with respect to the bilinear form~$B$, and $\bx \wedge u$ is just the wedge product of $\bx$ and $u$ in $\bigwedge V.$ Then, 
one extends this product to two arbitrary multivectors in $\bigwedge V$ by utilizing the following three defining properties of the left contraction:
\begin{enumerate}
\item $\bx \JJ_B \by = B(\bx,\by) = B(\by,\bx)$ (symmetry of the bilinear form) 
\item $\bx \JJ_B (u \wedge v) = (\bx \JJ_B u) \wedge v  + \hat{u} \wedge (\bx \JJ_B v)$,
(Leibniz rule, contractions by grade one elements are derivations)
\item $(u \wedge v ) \JJ_B w = u \JJ_B (v \JJ_B w)$
(left module structure)
\end{enumerate}
for any $\bx,\by \in V$ and $u,v,w \in \bigwedge V.$ Thus, the orthonormal basis $\cb{B}_1$ satisfies the following well-known set of relations in $\cl_n$:
\begin{subequations}
\begin{gather}
\be_i^2 = B(\be_i,\be_i) \cdot 1=Q(\be_i) \cdot 1 = \ve_i \cdot 1,\quad  1 \le i \leq n, 
          \label{eq:B1a}\\
\be_i\be_j + \be_j \be_i = 0, \quad i \neq j,\quad  1 \le i,j \leq n. 
          \label{eq:B1b}
\end{gather}
\label{eq:B1}
\end{subequations}
\hspace*{-1.1ex}
Let $\cb{B}$ be the canonical basis of $\bigwedge V$ generated by $\cb{B}_1.$ That is, let $[n]=\{1,2,\ldots,n\}$ and denote arbitrary, canonically ordered subsets of $[n]$, by underlined Roman characters. The basis elements of $\bigwedge V$, or, of $\cl_n$ due to the linear space isomorphism $\bigwedge V \rightarrow \cl_n$~\cite{lounesto}, can be indexed by these finite ordered subsets as $\be_\iu = \wedge_{i \in \iu}\, \be_i$. Then, an arbitrary element of $\bigwedge V \cong \cl_n$ can be written as
\begin{align}
u &= \sum_{\iu \in 2^{[n]}} u_\iu \be_{\iu}
\label{eq:u}
\end{align}
where $u_\iu \in \BR$ for each $\iu \in 2^{[n]}$ and we identify the unit element $1$ of $\cl_n$ with $\be_\emptyset$. Thus we assume that if $\iu = \{i_1,i_2,\ldots,i_s\}$ for $s=|\iu|$ then 
$\be_\iu = \be_{i_1}\be_{i_2} \cdots \be_{i_s} = \be_{i_1} \wedge \be_{i_2}\wedge  \cdots \wedge \be_{i_s}$ where $i_1 < i_2 < \cdots <i_s$. Therefore, it is better to think of the index set $\iu$ as a list $[i_1, i_2, \cdots,i_s]$ sorted by $<$. Then we declare $\iu = \ju$ if and only if $|\iu|=|\ju|=s$ and $i_1=j_1,i_2=j_2, \ldots,  i_s=j_s$.

Since later we will discuss matrix representations of $\cl_n$, we need to choose a basis for $\cl_n$: Our preferred basis is the exterior algebra basis of simple multivectors $\be_\iu$  (also called \textit{extensors} by Rota \textit{et al.} \cite{rota1994}) sorted by one of four possible \textit{admissible} monomial orders~$\prec$ on $\bigwedge V.$ For a definition and examples of admissible monomial orders in the exterior algebra see~\cite{ablamowicz2009a,GfG} and references therein. We choose for $\prec$ the monomial order called $\mathtt{InvLex}$, or, the \textit{inverse lexicographic order}. For example, when $n=3$, then the basis $\cb{B}$ sorted by $\mathtt{InvLex}$ looks as follows
\begin{equation}
\cb{B}  = \{ 1 \prec \be_1 \prec \be_2 \prec \be_{12} \prec \be_3 \prec \be_{13} \prec \be_{23} \prec \be_{123} \}
\label{eq:sortedB}
\end{equation}
where we have abbreviated $\be_{12}=\be_1 \be_2 = \be_1 \wedge \be_2$, $\be_{123}=\be_1 \be_2 \be_3 = \be_1 \wedge \be_2 \wedge \be_3$, etc., due to the orthogonality of the basis $\cb{B}_1$. The same monomial order has been chosen in~\cite{schottstaples2009} as it allows for a convenient nesting of Clifford algebras, e.g., $\cl_1 \subset \cl_2 \subset \cl_3 \subset \cdots \subset \cl_n$ needed by the authors. The fact that this order is admissible can prove useful later when computing kernels of various operators acting on $\cl_n.$ This is because then the leading terms with respect to $\prec$ of multivector polynomials can easily be identified and methods of non-commutative Gr\"{o}bner bases can be employed. From now on, under the term \textit{sorted basis}, we will understand the standard basis of Grassmann monomials $\cb{B}$ sorted by our chosen monomial order~$\texttt{InvLex}$.

We recall the extension $\lang \cdot,\cdot \rang: \bigwedge V \times \bigwedge V \rightarrow \BR$ of the bilinear form~$B$ to the entire exterior algebra $\bigwedge V$. We will need this symmetric bilinear extension of~$B$ later when we define the Clifford algebra $\cl(V^\ast,Q).$ Following Lounesto~\cite[Chapter 22]{lounesto}, we first extend~$B$ to simple $k$-vectors in 
$\bigwedge^k V$ via
\begin{equation}
\lang \bx_1 \w \bx_2 \w \ldots \w \bx_k,  \by_1 \w \by_2 \w \ldots \w \by_k \rang = 
\det B(\bx_\iu,\by_\ju)  
\end{equation} 
where
\begin{equation}
\det B(\bx_\iu,\by_\ju) = 
\left|
\begin{matrix}    
B(\bx_1,\by_1) & B(\bx_1,\by_2) & \cdots & B(\bx_1,\by_k)\\
B(\bx_2,\by_1) & B(\bx_2,\by_2) & \cdots & B(\bx_2,\by_k)\\
\vdots         & \vdots         & \ddots & \vdots\\
B(\bx_k,\by_1) & B(\bx_k,\by_2) & \cdots & B(\bx_k,\by_k)
\end{matrix}
\right|
\label{eq:detmatrix}
\end{equation}
and further by linearity to all of $\bigwedge^k V$ and by orthogonality to all of $\bigwedge V.$\footnote{In a Hopf algebraic setting this bilinear form resembles a Laplace 
pairing~\cite{fauser-habilitation}.} This last postulate means that a simple $k$-vector is declared orthogonal to any simple $l$-vector relative to the basis  $\cb{B}$ whenever $k \neq l.$ Note that this notion of degree is independent of the choice of a basis, and hence it is well defined. Since our choice of basis $\{\be_i\}$ diagonalizes the polar bilinear form $B$ of $Q$, the matrix of the extended bilinear form $\lang \cdot,\cdot \rang: \bigwedge V \times \bigwedge V \rightarrow \BR$ will be also diagonal in the sorted and also orthonormal basis~$\cb{B}$. For example, for $\cb{B}$ given
in~(\ref{eq:sortedB}) we get the following diagonal matrix:
\begin{align}
\lang \be_\iu,\be_\ju \rang &= \diag (1, \ve_1, \ve_2, \ve_{1}\ve_{2}, \ve_3, 
       \ve_{1}\ve_{3}, \ve_{2}\ve_{3}, \ve_{1}\ve_{2}\ve_{3})
\label{eq:B71}      
\end{align}
where $\be_\iu,\be_\ju \in \cb{B}$. 

Lounesto shows that in the case of the non-degenerate quadratic form $Q$, the left contraction $u \JJ_B v$ of $v \in \bigwedge V$ by $u \in \bigwedge V$ is defined as the dual of the exterior product through the requirement\footnote{In~\cite{fauser2004} this requirement is shown to be the result of product/co-product duality in Grass\-mann-Hopf gebra.} that
\begin{align}
\lang u \JJ_B v, w \rang &= \lang v,\tilde{u} \wedge w \rang \quad \mbox{for all} \; w \in \bigwedge V.
\label{eq:Lduality}
\end{align}
With the help of the three defining properties of the left contraction listed above and the requirement~(\ref{eq:Lduality}), we state and prove the following technical lemma which will be needed later. The proof of this lemma can be found in Appendix~\ref{AppendB}.
\begin{lemma}
Let $\cb{B}$ be the sorted orthonormal basis in $\cl_n=\cl(V,Q)$ for a non-degenerate quadratic form $Q$ defined in~(\ref{eq:Q}). Let $\be_\iu, \be_\ju, \be_\ku$ be any three basis elements in $\cb{B}$ where $\iu,\ju,\ku$ are index lists sorted by $<$. Denote the reversion $(\be_\iu)\tilde{}$ by~$\tilde{\be}_\iu$. The following identities are true:
\begin{itemize}
\item[(i)] Let $\be_\iu = \be_{i_1}\be_{i_2} \cdots \be_{i_s}$ where $s=|\iu|\ge 1.$ Then,
\begin{gather}
\be_\iu \tilde{\be}_\iu 
=\tilde{\be}_\iu \be_\iu 
=(\be_{i_s}\be_{i_{s-1}} \cdots \be_{i_1})(\be_{i_1}\be_{i_2} \cdots \be_{i_s}) 
= \ve_{i_1}\ve_{i_2} \cdots \ve_{i_s} 
\label{eq:idents1}
\end{gather}
\item[(ii)]
\begin{gather}
\lang \be_\iu,\be_\ju  \rang = 
\begin{cases}
0 &\textit{if $\iu \neq \ju$;}\\
1 &\textit{if $\iu=\ju=\emptyset$;}\\
\ve_{i_1}\ve_{i_2} \cdots \ve_{i_s} &\textit{if $\iu = \ju$ and $s=|\iu|\ge 1$}
\end{cases}
\label{eq:idents2}
\end{gather}
\item[(iii)] 
\begin{gather}
\lang \be_\iu, \be_\ju\be_\ku \rang =  
\begin{cases}
\lang \tilde{\be}_\ju \be_\iu, \be_\ku \rang = 0                                             
&\textit{if $\be_\iu \neq \pm \be_\ju\be_\ku$;}\\
\lang \tilde{\be}_\ju \be_\iu, \be_\ku \rang \neq 0 
&\textit{if $\be_\iu = \pm \be_\ju\be_\ku$.}
\end{cases}
\label{eq:idents3}
\end{gather}
\item[(iv)] Let $u,v,w$ be arbitrary multivectors in $\cl_n$. Then we have the following duality formula:
\begin{gather}
\lang u, vw \rang = \lang \tilde{v}u,w \rang 
\label{eq:idents4}
\end{gather}
\end{itemize}
\label{L:lemma1}
\end{lemma}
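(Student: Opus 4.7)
My plan is to treat the four parts in the order (i), (ii), (iv), (iii), because (iii) is most naturally obtained as a corollary of the general duality (iv) applied to basis elements, once (ii) is in hand to identify the zero case.

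For part (i), I would argue by induction on $s=|\iu|$. The base case $s=1$ is immediate from $\be_i^2=\ve_i$. For the inductive step I would use the fact that multiplication in $\cl_n$ is associative and pair up the two innermost factors: writing
\[
\be_\iu \tilde{\be}_\iu
= (\be_{i_1}\cdots \be_{i_{s-1}})\bigl(\be_{i_s}\be_{i_s}\bigr)(\be_{i_{s-1}}\cdots\be_{i_1})
= \ve_{i_s}\,(\be_{i_1}\cdots\be_{i_{s-1}})(\be_{i_{s-1}}\cdots\be_{i_1}),
\]
and iterating. The identity $\tilde{\be}_\iu\be_\iu=\ve_{i_1}\cdots\ve_{i_s}$ is obtained symmetrically by pairing the innermost $\be_{i_1}\be_{i_1}$ first. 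Part (ii) I would read off directly from the determinant formula~(\ref{eq:detmatrix}) together with the declared grade orthogonality. If $|\iu|\neq|\ju|$, the basis elements lie in different homogeneous components and orthogonality across grades gives $0$. If $|\iu|=|\ju|=s\geq 1$ but $\iu\neq\ju$, pick $i_m\in\iu\setminus\ju$; then the $m$-th row of the matrix $[B(\be_{i_a},\be_{j_b})]$ is identically zero because $B(\be_{i_m},\be_{j_b})=0$ for every $b$, so the determinant vanishes. If $\iu=\ju$ with $s\ge 1$, the matrix is diagonal with entries $B(\be_{i_a},\be_{i_a})=\ve_{i_a}$, producing the claimed product.

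For part (iv), which is the real content, the plan is to extend the contraction duality~(\ref{eq:Lduality}) to the full Clifford product by induction on the length of a reduced expression for $v$. First I want an auxiliary wedge/contraction adjointness: by the involutivity of reversion and the symmetry of $\lang\cdot,\cdot\rang$, rewriting~(\ref{eq:Lduality}) yields
\[
\lang a\wedge b,\, c\rang \;=\;\lang b,\,\tilde{a}\JJ_B c\rang \qquad\text{for all } a,b,c.
\]
Taking $v=\bx\in V$ as the base case and noting $\tilde{\bx}=\bx$, decompose $\bx w=\bx\JJ_B w+\bx\wedge w$ using~(\ref{eq:xu}) and apply the two dualities:
\[
\lang u,\bx w\rang = \lang u,\bx\JJ_B w\rang + \lang u,\bx\wedge w\rang
= \lang \bx\wedge u,w\rang + \lang \bx\JJ_B u,w\rang
= \lang \bx u,w\rang = \lang \tilde{\bx} u,w\rang.
\]
For the inductive step, suppose (iv) holds for $v$ of a given length and consider $\bx v$: associativity plus the base case plus the induction hypothesis plus $\widetilde{\bx v}=\tilde v\,\bx$ give
\[
\lang u,(\bx v)w\rang = \lang u,\bx(vw)\rang = \lang \bx u,vw\rang = \lang \tilde v\,\bx u,w\rang = \lang \widetilde{\bx v}\,u,w\rang.
\]
Bilinearity then promotes this to all $v\in\cl_n$.

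Part (iii) I then obtain as a corollary of (iv) and (ii): in the orthonormal basis $\cb{B}$, the Clifford product $\be_\ju\be_\ku$ equals $\sigma\,\be_\lu$ for a sign/$\ve$-product $\sigma\in\{\pm\ve_{i_1}\cdots\}$ and $\lu$ the canonically ordered symmetric difference of $\ju$ and $\ku$, because each shared index contributes an $\ve_i$ via $\be_i^2=\ve_i$ and anticommutation~(\ref{eq:B1b}) produces the sign. By (ii), $\lang \be_\iu,\sigma\be_\lu\rang=0$ unless $\iu=\lu$, i.e.\ unless $\be_\iu=\pm\be_\ju\be_\ku$, and in that case the value is nonzero. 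The equality $\lang\be_\iu,\be_\ju\be_\ku\rang=\lang\tilde{\be}_\ju\be_\iu,\be_\ku\rang$ is the special case of (iv) with $u=\be_\iu$, $v=\be_\ju$, $w=\be_\ku$; the vanishing of one side is equivalent to the vanishing of the other by the same symmetric-difference argument.

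The main obstacle is the base case of (iv): one must correctly marshal the two adjointness formulas (contraction-adjoint-to-wedge and wedge-adjoint-to-contraction) so that the decomposition $\bx w=\bx\JJ_B w+\bx\wedge w$ recombines on the other side of the pairing as $\bx u=\bx\JJ_B u+\bx\wedge u$. Everything else is bookkeeping with signs, symmetric differences, and induction.
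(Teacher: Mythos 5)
Your proof is correct, and it takes a genuinely different route from the paper's in one important place. The paper proves part (iii) first, by induction on $|\ju|$, and the base case $|\ju|=1$ requires a two-way case analysis on whether $|\iu|=|\ku|-1$ or $|\iu|=|\ku|+1$ (assuming the pairing is nonzero), each case invoking Lounesto's duality~(\ref{eq:Lduality}) in a different direction and discarding the term of wrong grade. Then the paper deduces (iv) from (iii) by bilinearity. You invert this order: you prove (iv) directly by induction on the length of $v$, with an arbitrary vector $\bx$ peeled off, using the two adjointness formulas (contraction-adjoint-to-wedge and its reversed form) \emph{simultaneously} so that the split $\bx w = \bx\JJ_B w + \bx\wedge w$ reassembles on the other side of the pairing into $\bx u = \bx\JJ_B u + \bx\wedge u$ without any grade bookkeeping. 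Part (iii) then falls out as the specialization to basis monomials, with the nonvanishing criterion handled by (ii) and the symmetric-difference description of $\be_\ju\be_\ku$. Your approach avoids the casework entirely and is arguably cleaner; the cost is that you must state and verify the auxiliary wedge-adjointness identity $\lang a\wedge b, c\rang = \lang b, \tilde{a}\JJ_B c\rang$ up front, which you do correctly via the symmetry of $\lang\cdot,\cdot\rang$ and involutivity of reversion. Parts (i) and (ii) match the paper's reasoning. The inductive step of (iv) is also handled correctly: associativity, the base case, the inductive hypothesis, and $\widetilde{\bx v} = \tilde{v}\tilde{\bx} = \tilde{v}\bx$ combine exactly as you write, and bilinearity in $v$ lifts the result from products of vectors to all of $\cl_n$.
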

Formula~(\ref{eq:idents4}) generalizes Lounesto's duality $\lang u \JJ_B v, w \rang = \lang v,\tilde{u} \wedge w \rang$ for all $u,v, w \in \bigwedge V$ to the Clifford algebra $\cl_n$ of the quadratic explicitly non-degenerate form~$Q$. We emphasize here that formula~(\ref{eq:idents4}) is not valid when the form $Q$ is degenerate.

\section{The transposition anti-involution of $\cl_n$}\label{new1}
\medskip

Let $V$ and $Q$ be as in~(\ref{eq:Q}). In particular, recall that in $\cl_n$ we have $\be_i^{-1} \stackrel{\mathrm{def}}{=} (\be_i)^{-1} = \dfrac{\be_i}{\ve_i} = \ve_i \be_i$ since 
$\be_i^2 = \ve_i \cdot 1$ for any $\be_i \in \cb{B}_1$ and $\ve^2=1$ for all $i$.
\begin{definition}
Let $t_\ve: V \rightarrow V$ be the linear map defined, dependent on the signature $\ve$ of~$Q$, as
\begin{equation}
t_\ve(\bx) = t_\ve(\sum_{i=1}^n x_i \be_i) = \sum_{i=1}^n x_i \left(\frac{\be_i}{\ve_i}\right) =
                                             \sum_{i=1}^n x_i \left(\ve_i \be_i\right)
\label{eq:tve} 
\end{equation}
for any $\bx \in V$ and for the orthonormal basis $\cb{B}_1 = \{\be_i, \, 1\leq i \leq n\}$ in $V$ diagonalizing $Q$.
\end{definition}
As the following lemma shows, there are two ways to look at $t_\ve$: (1) As just a linear orthogonal map of $V$; (2) As a \textit{correlation}~\cite{porteous} mapping 
$t_\ve: V \rightarrow V^\ast \cong V$. In the following, we explore both of these 
points of view.
\begin{lemma}
Let $t_\ve$ be the linear map defined in~(\ref{eq:tve}), $\cb{B}_1$ be the orthonormal basis for $V$, and let $\cl_n$ be the universal Clifford algebra of $(V,Q)$.
\begin{itemize}
\item[(i)] $t_\ve$ is an orthogonal involution $V \rightarrow V.$ 
\item[(ii)] The set of vectors $\cb{B}_1^\ast = \{t_\ve(\be_i) = \ve_i \be_i, \, 1\leq i \leq n\}$ gives an orthonormal basis in the dual space $(V^\ast,Q).$
\item[(iii)] Under the identification $V \cong V^\ast$, $t_\ve$ is a symmetric non-degenerate correlation on~$V$ thus making the pair $(V,t_\ve)$ into a \textit{non-degenerate real correlated (linear) space}.
\end{itemize}
\label{l:l1}
\end{lemma}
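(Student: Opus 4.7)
All three parts reduce to direct computation in the orthonormal basis $\cb{B}_1$ combined with the canonical identification $V\cong V^\ast$ provided by the non-degenerate polar form $B$. I would handle the three parts in the order (i), (ii), (iii) since (ii) makes use of (i) and (iii) makes use of (ii).

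For (i), the involution property is immediate from $\ve_i^2=1$: since $t_\ve(\be_i)=\ve_i\be_i$, one computes $t_\ve\bigl(t_\ve(\be_i)\bigr)=\ve_i^2\be_i=\be_i$, and by linearity $t_\ve^{\,2}=1_V$. For orthogonality, I would evaluate $Q$ on $t_\ve(\bx)=\sum_i x_i\ve_i\be_i$ using the diagonal form~(\ref{eq:Q}):
\[
Q\bigl(t_\ve(\bx)\bigr)=\sum_{i=1}^n \ve_i(\ve_i x_i)^2=\sum_{i=1}^n \ve_i^{\,3}x_i^{\,2}=\sum_{i=1}^n \ve_i x_i^{\,2}=Q(\bx),
\]
so $t_\ve\in O(V,Q)$, completing (i).

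For (ii), the plan is to use the non-degenerate form $B$ to identify $V\cong V^\ast$ via $\bv\mapsto B(\bv,\,\cdot\,)$, and to verify that the family $\cb{B}_1^\ast=\{\ve_i\be_i\}$ is precisely the reciprocal (dual) basis of $\cb{B}_1$. Indeed,
\[
B(\ve_i\be_i,\be_j)=\ve_i B(\be_i,\be_j)=\ve_i^{\,2}\delta_{ij}=\delta_{ij},
\]
which exhibits $\{\ve_i\be_i\}$ as the basis dual to $\{\be_i\}$ under $B$. To check that this basis is orthonormal with respect to the induced quadratic form on $V^\ast$, I would note that the Gram matrix of $B$ in $\cb{B}_1$ is $\operatorname{diag}(\ve_1,\ldots,\ve_n)$, the Gram matrix of the dual form on the reciprocal basis is its inverse, and this diagonal matrix is self-inverse because $\ve_i^{-1}=\ve_i$. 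Hence $\cb{B}_1^\ast$ is orthonormal for $(V^\ast,Q)$ with the same signature as $\cb{B}_1$.

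For (iii), under the identification $V\cong V^\ast$ from (ii), the map $t_\ve$ is read as a linear map $V\to V^\ast$, i.e.\ a correlation. It is non-degenerate because $t_\ve$ is an isomorphism (involutivity from (i) implies bijectivity). For symmetry, I would examine the bilinear form on $V$ associated to the correlation, namely $(\bx,\by)\mapsto \langle t_\ve(\bx)\mid \by\rangle$ (evaluation pairing). Passing through the identification this equals
\[
B\bigl(t_\ve(\bx),\by\bigr)=\sum_{i,j}x_i\ve_i\, y_j\,B(\be_i,\be_j)=\sum_{i}\ve_i^{\,2}x_i y_i=\sum_{i=1}^n x_i y_i,
\]
which is the standard Euclidean pairing in the basis $\cb{B}_1$ and is manifestly symmetric (and non-degenerate). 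Hence $(V,t_\ve)$ is a non-degenerate real symmetric correlated space, establishing (iii).

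The calculations are routine; the only conceptual care needed is to fix unambiguously the identification $V\cong V^\ast$ used throughout, and to recognize that the associated bilinear form of the correlation $t_\ve$ is the Euclidean (signature-independent) form on $V$ in the basis $\cb{B}_1$. This latter observation is what makes $t_\ve$ the natural carrier of the \emph{transposition} phenomenon explored in the following sections, but no additional argument is required for the lemma itself.
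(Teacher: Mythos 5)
Your proof is correct and follows essentially the same route as the paper's. Parts (i) and (iii) are computationally identical in spirit: you verify $t_\ve^2=1_V$ and $Q\circ t_\ve = Q$ directly, and in (iii) you show symmetry and non-degeneracy of the correlation's associated bilinear form (the paper records symmetry as $t_\ve(\bx)(\by)=t_\ve(\by)(\bx)$ and appeals to non-degeneracy of $\lang\cdot,\cdot\rang$; you make the same fact transparent by evaluating $B(t_\ve(\bx),\by)=\sum_i x_i y_i$, which is a slightly more explicit version of the same argument). In (ii) you organize the check differently — verifying $B(\ve_i\be_i,\be_j)=\delta_{ij}$ to identify $\cb{B}_1^\ast$ as the reciprocal basis, then noting the Gram matrix $\operatorname{diag}(\ve_1,\dots,\ve_n)$ is its own inverse so the dual form has the same signature — whereas the paper computes $\lang t_\ve(\be_i),t_\ve(\be_j)\rang$ directly; these are equivalent since $\lang\cdot,\cdot\rang|_{V\times V}=B$. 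Your Gram-matrix phrasing arguably makes the ``orthonormal with the same signature'' conclusion cleaner, as the paper's displayed value $\ve_i^2\delta_{i,j}$ appears to drop a factor of $\ve_i$ coming from $\lang\be_i,\be_i\rang=\ve_i$; the intended value is $\ve_i\delta_{i,j}$, consistent with your argument.
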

\begin{proof}
(i) For every $\bx \in V$ we have in $\cl_n$ the following identity: 
\begin{gather*}
Q(t_\ve(\bx)) = t_\ve(\bx)t_\ve(\bx) 
=\sum_{i,j} x_i (\ve_i \be_i) x_j (\ve_j \be_j)
=\sum_i x_i^2 \ve_i^2 \be_i^2 = \sum_i \ve_i x_i^2 = Q(\bx). 
\end{gather*}
It can be easily checked that $t_\ve(t_\ve(\bx)) = \bx, \, \forall \bx \in V.$ Hence, $t_\ve$ is an orthogonal involution on $V.$

(ii) Observe that:
\begin{equation}
\hspace*{2.5ex}\lang t_\ve(\be_i),t_\ve(\be_j) \rang = \lang \ve_i \be_i,\ve_j \be_j \rang = \ve_i \ve_j \lang \be_i, \be_j \rang = \ve_i^2 \delta_{i,j} = 
\begin{cases} 1 & \text{if $i=j$;}\\
              0 & \text{if $i\neq j$}
\end{cases}
\label{eq:dualB1}
\end{equation}
because $\ve_i^2=1, \forall i.$ Thus, the basis $\cb{B}_1^\ast$ is orthonormal with respect to the same quadratic form $Q$. Furthermore, viewing the map $\tve$ as a correlation $V \rightarrow V^\ast$, we can define action of $t_\ve(\bx)\in V^\ast$ on $\by \in V$ for any $\bx \in V$ as
\begin{equation}
t_\ve(\bx)(\by) = \lang t_\ve(\bx), \by \rang.
\label{eq:action1}
\end{equation}
Then we get the expected duality relation among the basis elements in $\cb{B}_1$ and 
$\cb{B}_1^\ast$:
\begin{equation}
t_\ve(\be_i)(\be_j) = \lang \ve_i \be_i, \be_j  \rang = 
\ve_i \lang \be_i, \be_j \rang = \ve_i\ve_j\delta_{i,j} = \delta_{i,j}.
\label{eq:action2}
\end{equation}
Finally, let $\vp = \sum_i \vp(\be_i)t_\ve(\be_i) = \sum_i (\vp_i \ve_i) \be_i \in V^\ast$, where $\vp_i = \vp(\be_i) \in \BR$, be a linear form. Then, under the action~(\ref{eq:action1}), we find the usual result
\begin{equation}
\vp(\bx) = \lang \vp, \bx \rang = 
\sum_{i,j} \vp_i x_j \lang \ve_i \be_i, \be_j \rang =       \sum_i \vp_i x_i.
\label{eq:action3} 
\end{equation}

(iii) From~(\ref{eq:action1}) we get easily that $t_\ve(\bx)(\by) =t_\ve(\by)(\bx)$ for all $\bx,\by \in V$ which means that the correlation $t_\ve$ is symmetric. The rest follows from the fact that the inner product $\lang \cdot,\cdot \rang$ is non-degenerate~\cite{porteous}. 
\end{proof}
We will return to the duality $V \rightarrow V^\ast$ and extend it to the Clifford algebras $\cl(V,Q) \rightarrow \cl(V^\ast,Q)$ in the following section. 

Now we apply Porteous' theorem to the orthogonal involution~$t_\ve$. For now, we take $\bL=\BR$ and $\alpha = 1_\BR$.
\begin{proposition}
Let $A=\cl_n$ be the universal Clifford algebra of $(V,Q)$ and let $t_\ve:V \rightarrow V$ be the orthogonal involution of $V$ defined in~(\ref{eq:tve}). Then there exists a unique algebra involution 
$T_\ve$ of $A$ and a unique algebra anti-involution $\ta{T}{\ve}$ of $A$ such that the following diagrams commute:
\begin{equation}
\bfig
\square|arra|/>` >->` >->`>/[V`V`A`A;t_\ve`\iota`\iota`T_\ve]
\efig
\qquad \mbox{and} \qquad
\bfig
\square|arra|/>` >->` >->`>/[V`V`A`A;t_\ve`\iota`\iota`\ta{T}{\ve}]
\efig
\label{d:diag1}
\end{equation}
In particular, we can define $T_\ve$ and $\ta{T}{\ve}$ as follows:
\begin{itemize}
\item[(i)] For simple $k$-vectors $\be_\iu$ in $\cb{B}$, let $T_\ve(\be_\iu) = T_\ve (\prod_{i \in \iu} \be_i) = \prod_{i \in \iu} t_\ve(\be_i)$ where $k= |\iu|$ and $T_\ve(1_A) = 1_A$. Then, extend 
by linearity to all of $A$.
\item[(ii)] For simple $k$-vectors $\be_\iu$ in $\cb{B}$, let 
\begin{gather}
\ta{T}{\ve}(\be_i) = \ta{T}{\ve} (\prod_{i \in \iu} \be_i) = (\prod_{i \in \iu} t_\ve(\be_i))\tilde{} = (-1)^{\frac{k(k-1)}{2}}\prod_{i \in \iu} t_\ve(\be_i)
\label{eq:defT}
\end{gather} 
where $k=|\iu|$ and $\ta{T}{\ve}(1_A) = 1_A$. Then, extend by linearity to all of $A$.
\end{itemize}
\label{p:p1}
\end{proposition}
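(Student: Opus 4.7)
The plan is to invoke Porteous' theorem directly for existence and uniqueness, and then to identify the action of the resulting maps on simple $k$-vectors by a short bookkeeping argument.

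By Lemma~\ref{l:l1}(i) the map $\tve$ is an orthogonal involution of $(V,Q)$, so Porteous' theorem, applied with the superfield $\bL^\alpha = \BR^\alpha$ and $\alpha = 1_\BR$, produces a unique $\BR$-algebra involution $T_\ve$ and a unique $\BR$-algebra anti-involution $\tp$ of $A$ whose restrictions to $V$ agree with $\tve$; by construction both squares in~(\ref{d:diag1}) commute, and the identities $T_\ve(1_A) = 1_A$ and $\tp(1_A) = 1_A$ are automatic from unitality. No additional work is required for this half of the proposition.

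To verify (i), fix $\iu = \{i_1 < i_2 < \cdots < i_s\} \subseteq [n]$. Orthogonality of $\cb{B}_1$ implies that $\be_\iu = \be_{i_1}\be_{i_2}\cdots\be_{i_s}$ as a Clifford product. Since $T_\ve$ is a unital $\BR$-algebra homomorphism extending $\tve$, one computes
\begin{equation*}
T_\ve(\be_\iu) = T_\ve(\be_{i_1})\,T_\ve(\be_{i_2})\cdots T_\ve(\be_{i_s}) = \tve(\be_{i_1})\,\tve(\be_{i_2})\cdots \tve(\be_{i_s}) = \prod_{i\in\iu}\tve(\be_i),
\end{equation*}
and linear extension covers all of $A$. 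For (ii), the anti-homomorphism property of $\tp$ reverses the order of factors:
\begin{equation*}
\tp(\be_\iu) = \tp(\be_{i_s})\cdots \tp(\be_{i_2})\,\tp(\be_{i_1}) = \tve(\be_{i_s})\cdots \tve(\be_{i_2})\,\tve(\be_{i_1}).
\end{equation*}
Because $\tve$ is orthogonal, the images $\tve(\be_i) = \ve_i\be_i$ form another orthonormal system for $Q$ and therefore anticommute pairwise in $\cl_n$. Reordering $s$ pairwise anticommuting vectors into forward order introduces the familiar sign $(-1)^{s(s-1)/2}$, yielding
\begin{equation*}
\tp(\be_\iu) = (-1)^{\frac{s(s-1)}{2}}\prod_{i\in\iu}\tve(\be_i) = \Bigl(\,\prod_{i\in\iu}\tve(\be_i)\Bigr)\tilde{\phantom{.}},
\end{equation*}
the last equality being the standard reversion formula for a simple $s$-vector built from mutually orthogonal factors.

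The only genuinely substantive (though mild) obstacle is the observation underlying (ii): one must notice that the orthogonality of $\tve$ forces $\{\tve(\be_i)\}$ to again be pairwise anticommuting in $\cl_n$, so that the reverse product equals the forward product up to the sign $(-1)^{s(s-1)/2}$, which in turn is exactly the reversion sign. Everything else is simply the universal property of the Clifford algebra as packaged by Porteous' theorem.
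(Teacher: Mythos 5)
Your proof is correct and relies on the same key input as the paper's, namely Porteous' theorem for the existence and uniqueness of the (anti)involution pair $T_\ve$, $\ta{T}{\ve}$. The presentation is merely inverted: you take the maps' existence for granted and \emph{derive} the explicit formulas in (i)--(ii) from the (anti)homomorphism property together with the mutual anticommutativity of the images $t_\ve(\be_i)=\ve_i\be_i$, whereas the paper \emph{posits} those formulas, checks the diagrams commute on $V$, and then invokes uniqueness --- the same ingredients run in the opposite direction, with your route having the mild advantage of not needing to verify separately that the posited maps are genuine (anti)automorphisms before uniqueness applies.
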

\begin{proof}
Due to the uniqueness of $T_\ve$ and $\ta{T}{\ve}$ it is enough to check that the diagrams~(\ref{d:diag1}) commute and that these two maps are, respectively, the involution and the anti-involution induced by $t_\ve$. The latter property follows from the fact that the reversion and $t_\ve$ are commuting involutions. Chasing these diagrams gives for every $\bx \in V$:
\begin{align}
T_\ve(\iota(\bx)) 
  &= T_\ve(\bx) = T_\ve\left(\sum_i x_i \be_i \right)=\sum_i x_i t_\ve(\be_i) 
   = t_\ve(\bx) = \iota(t_\ve(\bx))
\label{eq:T1}  
\end{align} 
where by the abuse of notation we have identified $\bx$ with its image $\iota(\bx)$. Likewise,
\begin{align}
\ta{T}{\ve}(\iota(\bx)) 
  &= \ta{T}{\ve}(\bx) = \ta{T}{\ve}\left(\sum_i x_i \be_i \right)
   = \sum_i x_i (t_\ve(\be_i))\tilde{} = t_\ve(\bx) = \iota(t_\ve(\bx))  
\label{eq:T2}  
\end{align} 
because under reversion $(t_\ve(\be_i))\tilde{} = t_\ve(\be_i), \, \forall i$. 
\end{proof}
Display (\ref{eq:T2}) shows that reversion in the definition of~$\ta{T}{\ve}$ cannot be replaced with conjugation. In Appendix~\ref{AppendA} we show our Maple code of a procedure~$\mathtt{tp}$ which implements the anti-involution $\ta{T}{\ve}$ in~$\cl_n$. 

In the following corollary, we denote the grade involution of the Clifford algebra $\cl_n$ by~$\alpha$, the reversion by~$\beta$, and the conjugation by~$\gamma$.

\begin{corollary}
Let $A = \cl_{p,q}$ and let $T_\ve: A \rightarrow A$ and $\ta{T}{\ve}: A \rightarrow A$ be the involution and the anti-involution of $A$ from Proposition~\ref{p:p1}.
\begin{itemize}
\item[(i)] For the Euclidean signature $(p,q)=(n,0)$, or $p-q=n$, we have $t_\ve = 1_V$. Thus, $T_\ve$ is the identity map $1_A$ on $A$ and $\ta{T}{\ve}$ is the reversion~$\beta$ of $A$.
\item[(ii)] For the anti-Euclidean signature $(p,q)=(0,n)$, or $p-q=-n$, we have $t_\ve = -1_V$. Thus, $T_\ve$ is the grade involution $\alpha$ of $A$ and $\ta{T}{\ve}$ is the conjugation~$\gamma$ of $A$.
\item[(iii)] For all other signatures   $-n < p-q < n$, we have $t_\ve = 1_{V_1} \otimes -1_{V_2}$ where $(V,Q) = (V_1,Q_1) \perp (V_2,Q_2)$. Here, $(V_1,Q_1)$ is the Euclidean subspace of $(V,Q)$ of dimension $p$ spanned by $\{\be_i, \, 1\leq i \leq p \}$ with $Q_1 =Q|_{V_1}$ while $(V_2,Q_2)$ is the anti-Euclidean subspace of $(V,Q)$ of dimension $q$ spanned by 
$\{\be_i, \, p+1\leq i \leq p+q=n \}$ with $Q_2 =Q|_{V_2}$.  Let $A_1 = \cl(V_1,Q_1)$ and $A_2 = \cl(V_2,Q_2)$ so $\cl(V,Q) \cong \cl(V_1,Q_1) \hotimes \cl(V_2,Q_2)$. Thus, 
$$
T_\ve = 1_{A_1} \otimes \alpha_{A_2} \quad \mbox{and} \quad \ta{T}{\ve} = (\beta_{A_1} \otimes\,  \gamma_{A_2}) \, \circ \,(\hat{S}\, \circ\, S)
$$ 
where $S$ is the \emph{ungraded switch} whereas $\hat{S}$ is the \emph{graded switch} defined on $\cl(V_1,Q_1)$ $\hotimes \cl(V_2,Q_2)$.
\item[(iv)] The anti-involution $\ta{T}{\ve}$ is related to the involution $T_\ve$ through the reversion~$\beta$ as follows:
$
\ta{T}{\ve} = T_\ve \circ \beta =  \beta \circ T_\ve. 
$
\end{itemize}
\end{corollary}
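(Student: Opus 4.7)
I would begin with part (iv), since it reduces the computation of the anti-involution $\ta{T}{\ve}$ to that of the involution $T_\ve$ and thereby shortens the other parts. The argument is pure uniqueness: the composition $T_\ve\circ\beta$ is an anti-automorphism of $A$ (being an algebra automorphism followed by an anti-automorphism), and because reversion $\beta$ is the identity on $V\subset A$, we have $(T_\ve\circ\beta)|_V=T_\ve|_V=t_\ve$. Proposition~\ref{p:p1} then forces $T_\ve\circ\beta=\ta{T}{\ve}$; the same argument with the factors reversed yields $\beta\circ T_\ve=\ta{T}{\ve}$.

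Parts (i) and (ii) then follow with essentially no computation. In the Euclidean case, all $\ve_i=+1$, so $t_\ve(\be_i)=\be_i$ and thus $t_\ve=1_V$; uniqueness gives $T_\ve=1_A$, and (iv) immediately yields $\ta{T}{\ve}=\beta$. In the anti-Euclidean case, all $\ve_i=-1$, so $t_\ve=-1_V$, and by the very definitions of $\alpha$ and $\gamma$ recalled after the statement of Porteous' theorem, $T_\ve=\alpha$ and $\ta{T}{\ve}=\alpha\circ\beta=\gamma$.

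For the mixed signature part (iii), I would first establish the tensor factorization. Write $V=V_1\perp V_2$ with $V_1=\mathrm{span}\{\be_1,\dots,\be_p\}$ Euclidean and $V_2=\mathrm{span}\{\be_{p+1},\dots,\be_n\}$ anti-Euclidean, so that $A\cong A_1\hotimes A_2$ under the identification $\be_i\leftrightarrow \be_i\otimes 1$ for $i\le p$ and $\be_i\leftrightarrow 1\otimes\be_i$ for $i>p$. Since $t_\ve$ acts as $+1_{V_1}$ on $V_1$ and $-1_{V_2}$ on $V_2$, the map $1_{A_1}\otimes\alpha_{A_2}$ is an algebra automorphism of the graded tensor product that agrees with $t_\ve$ on all generators; uniqueness then gives $T_\ve=1_{A_1}\otimes\alpha_{A_2}$.

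For the anti-involution, I would invoke (iv) to write $\ta{T}{\ve}=T_\ve\circ\beta$, and then express $\beta$ explicitly on $A_1\hotimes A_2$. Writing $a\otimes b=(a\otimes 1)(1\otimes b)$ for homogeneous $a,b$, applying $\beta$ (which reverses order), and moving the two tensor factors past each other with the graded multiplication rule gives
\[
\beta(a\otimes b)=(-1)^{|a||b|}\,\beta_{A_1}(a)\otimes\beta_{A_2}(b).
\]
Composing with $T_\ve=1_{A_1}\otimes\alpha_{A_2}$ and using $\alpha\circ\beta=\gamma$ on $A_2$ yields $\ta{T}{\ve}(a\otimes b)=(-1)^{|a||b|}\beta_{A_1}(a)\otimes\gamma_{A_2}(b)$. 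On the other hand, $(\hat S\circ S)(a\otimes b)=\hat S(b\otimes a)=(-1)^{|a||b|}(a\otimes b)$ by the definition of the graded switch, so $(\beta_{A_1}\otimes\gamma_{A_2})\circ(\hat S\circ S)$ produces the same element, and extending by linearity gives the claimed identity on all of $A_1\hotimes A_2$.

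The main obstacle is the sign bookkeeping in the last paragraph: one must correctly derive the formula for $\beta$ on the graded tensor product from the graded multiplication rule, and then match the resulting sign $(-1)^{|a||b|}$ against the effect of $\hat S\circ S$. Once that is pinned down, the rest of the corollary is essentially a uniqueness argument applied repeatedly.
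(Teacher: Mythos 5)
Your proposal is correct, and it takes a genuinely cleaner route than the paper. The key difference is the ordering: you prove (iv) first, via a pure uniqueness argument ($T_\ve\circ\beta$ is an anti-automorphism that restricts to $t_\ve$ on $V$ because $\beta|_V=1_V$, so Porteous' uniqueness forces $T_\ve\circ\beta=\ta{T}{\ve}$), and then use (iv) to dispatch the anti-involution halves of (i), (ii), and (iii) essentially for free. The paper instead proves (i)--(iii) first, with (iii) carried out via a lengthy verification that the $\spl$ map is an algebra isomorphism, explicit chasing of the commutative diagrams~(\ref{eq:diagram12}), and a separate diagram~(\ref{eq:diagram4}) for $\beta$, and only afterwards establishes (iv) by a direct computation on basis monomials. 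What the paper's approach buys is a self-contained and explicit description of the $\spl$ isomorphism and of how $\beta$, $T_\ve$, $\ta{T}{\ve}$ factor through it, which is exploited again later in the paper; what your approach buys is brevity and conceptual clarity, since once (iv) is in hand the sign factor $(-1)^{|a||b|}$ arises exactly once, in the identity $\beta(a\otimes b)=(-1)^{|a||b|}\beta_{A_1}(a)\otimes\beta_{A_2}(b)$, which you correctly derive from the graded multiplication rule and then match against $\hat S\circ S$. One small point worth making explicit in a final write-up is that $1_{A_1}\otimes\alpha_{A_2}$ is indeed an algebra automorphism of the \emph{graded} tensor product precisely because both factors preserve parity; the uniqueness argument for $T_\ve$ requires this, and it is easy to overlook when the tensor product is twisted by signs.
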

\begin{proof}
(i) For Euclidean signatures, it follows from~(\ref{eq:tve}) that $t_\ve = 1_V$. In this case, the identity map $1_A$ is the unique involution on $A$ induced by~$t_\ve$ while the reversion~$\beta$ is the unique anti-involution on $A$ induced by~$t_\ve$. 

(ii) For anti-Euclidean signatures, it follows from~(\ref{eq:tve}) that $t_\ve = -1_V$. In this case, the grade involution $\alpha$ of $A$ is the unique involution on $A$ induced by~$t_\ve$ while the conjugation $\gamma$ is the unique anti-involution on $A$ induced by~$t_\ve$. 

(iii) The orthogonal sum decomposition $(V,Q) = (V_1,Q_1) \perp (V_2,Q_2)$ where $(V_1,Q_1)$ is Euclidean and $(V_2,Q_2)$ is anti-Euclidean follows from the theory of quadratic forms~\cite{lam}. As a consequence, in the category of $\BZ_2$-graded associative algebras, we have the isomorphism 
\begin{gather}
\xy \morphism/ ->/<750,0>[\cl_{p,q}`\cl_{p,0} \hotimes \cl_{0,q};\spl]\endxy
\label{eq:spt}
\end{gather}
That is, $\cl(V_1,Q_1) \hotimes \cl(V_2,Q_2)$ is the \textit{graded} (or \textit{skew}) tensor product of the Clifford algebras~\cite{greub,lam}. Under this identification, we have the following two commutative diagrams:
\begin{equation}
\hspace*{4ex}
\bfig
\square(0,0)|arra|/>`>`>`>/<750,500>%
[\cl_{p,q}`\cl_{p,0} \hotimes \cl_{0,q}`\cl_{p,q}`\cl_{p,0} \hotimes \cl_{0,q};%
 \spl`T_\ve`1_{A_1} \otimes\, \alpha_{A_2}`\spl]
\efig
\quad
\bfig
\square(0,0)|arra|/>`>`>`>/<750,500>%
[\cl_{p,q}`\cl_{p,0} \hotimes \cl_{0,q}`\cl_{p,q}`\cl_{p,0} \hotimes \cl_{0,q};%
 \spl`\ta{T}{\ve}`(\beta_{A_1} \otimes\,  \gamma_{A_2}) \, \circ \,(\hat{S}\, \circ\, S)`\spl]
\efig
\label{eq:diagram12}
\end{equation}
where $S$ is the \textit{ungraded switch} whereas $\hat{S}$ is the \textit{graded switch}~\cite{fauser2004}. Note that we work here in the opposite direction as in the paper~\cite{fauserablam2000}, where we studied the decomposition of (quantum) Clifford algebras. Let\footnote{Here, $\cb{B}_{p,0}$ and $\cb{B}_{0,q}$ denote, respectively, the Grassmann monomial basis in $\cl_{p,0}$ and $\cl_{0,q}$.} 
$\cb{B}_{p,0}$ and~$\cb{B}_{0,q}$ be, respectively, the sorted Grassmann bases for $\cl_{p,0}$ and~$\cl_{0,q}$. The switches are defined on the basis tensors 
$\be_\iu \hotimes \be_\ju \in \cl_{p,0} \hotimes \cl_{0,q}$ for $\be_\iu \in \cb{B}_{p,0}$ and $\be_\ju \in \cb{B}_{0,q}$ as
\begin{align*}
S(\be_\iu \hotimes \be_\ju) = \be_\ju \hotimes \be_\iu \quad \mbox{and} \quad \hat{S}(\be_\iu \hotimes \be_\ju) = (-1)^{|\iu||\ju|}\be_\ju \hotimes \be_\iu.
\end{align*}
Then, their action is extended by linearity to the graded product $\cl_{p,0} \hotimes \cl_{0,q}$. Notice that the combined action of the two switches is obviously
\begin{align*}
(\hat{S} \circ S)(\be_\iu \hotimes \be_\ju) = (-1)^{|\iu||\ju|}\be_\iu \hotimes \be_\ju.
\end{align*}
The extra factor $(-1)^{|\iu||\ju|}$ is needed in the right diagram in~(\ref{eq:diagram12}) due to the fact that the reversion $\beta$ present in $\ta{T}{\ve}$ is an anti-automorphism of  the Clifford algebra $\cl_{p,q}$ dependent on $B$. One can address this sign factor as a bi-character on the grade group. This factor is not needed in the diagram on the left since $T_\ve$ is an automorphism of~$\cl_{p,q}.$ However, the same factor is also implicitly built into the definition of the algebra product of basis monomials in $\cl_{p,0} \hotimes \cl_{0,q}$. We make it clear by explicitly defining the map~(\ref{eq:spt}) 
and showing that it is the required algebra isomorphism. We define the map $\spl$ as follows:
\begin{itemize}
\item[(i)] On the identity element, $\spl(1_{A}) = 1_{A_1} \hotimes 1_{A_2}$.
\item[(ii)] Let $\be_\iu \in \cb{B}$ and let $\iu = \iu_1 \cup \iu_2, \iu_1 \cap \iu_2 =\emptyset$. Then, $\spl(\be_\iu) = \be_{\iu_1} \hotimes \,\be_{\iu_2}$ where $\be_{\iu_1} \in 
\cb{B}_{p,0}$ and $\be_{\iu_2} \in \cb{B}_{0,q}$.
In case $\iu_1$ or $\iu_2$ is empty, we recall that $\be_{\emptyset}=1$.
\item[(iii)] We extend the map $\spl$ by linearity to all elements of $\cl_{p,q}.$
\end{itemize}
When defining the split of $\be_\iu$ into $\be_{\iu_1} \hotimes \,\be_{\iu_2}$, for convenience we relabel the basis elements in $\cb{B}_{0,q}$ modulo $p$. For example, let $p=1$ and $q=2$: 
\begin{multline*}
\cb{B} = [1, \be_1, \be_2, \be_{12},\be_3,\be_{13},\be_{23},\be_{123}] 
\stackrel{\mathrm{split}}{\longmapsto} \\
[1 \hotimes 1, \be_1 \hotimes 1, 1 \hotimes \be_2, \be_{1} \hotimes \be_{1},
1 \hotimes \be_2, \be_{1}   \hotimes \be_{2}, 1 \hotimes \be_{12}, \be_{1} \hotimes \be_{12}]
\end{multline*}
By the reason of dimensionality, namely, $|\cb{B}| = 2^{n} =2^{p+q} = |\cb{B}_{p,0}| |\cb{B}_{0,q}|$, it is clear that the map $\spl$ is a vector space isomorphism. Now we show that it is  the algebra isomorphism. Let $\be_\iu =\be_{\iu_1} \be_{\iu_2}$ and $\be_\ju = \be_{\ju_1} \be_{\ju_2}$ for $\be_{\iu_1},\be_{\ju_1} \in \cb{B}_{p,0}$ and $\be_{\iu_2},\be_{\ju_2} \in \cb{B}_{0,q}$. Then we have the following commutative diagram:
\begin{equation}
\bfig
\square(0,0)|arra|/>`>`>`>/<1500,500>%
[\cl_{p,q} \otimes \cl_{p,q}`(\cl_{p,0} \hotimes \cl_{0,q}) 
           \otimes (\cl_{p,0} \hotimes \cl_{0,q})`
 \cl_{p,q}`\cl_{p,0} \hotimes \cl_{0,q};%
 \spl\, \otimes\, \spl`\cml`\cml`\spl]
\efig
\label{eq:diagram3}
\end{equation}
where the map $\cml$ denotes the Clifford product in $\cl_{p,q}$ and $\cl_{p,0} \hotimes \cl_{0,q}$. When restricted to the basis elements in $\cb{B} \times \cb{B}$, we get when going down and to the right in the above diagram:
\begin{align*}
(\spl \circ \cml) (\be_\iu,\be_\ju) 
&= \spl(\be_\iu \be_\ju) = \spl(\be_{\iu_1}\be_{\iu_2}\be_{\ju_1}\be_{\ju_2})\\
&= (-1)^{|\iu_2||\ju_1|} \spl(\be_{\iu_1}\be_{\ju_1}\be_{\iu_2}\be_{\ju_2})\\
&=(-1)^{|\iu_2||\ju_1|} (\be_{\iu_1}\be_{\ju_1}) \hotimes \, (\be_{\iu_2}\be_{\ju_2})
\end{align*}
where $\be_{\iu_1}\be_{\ju_1} \in \cl_{p,0}$ and $\be_{\iu_2}\be_{\ju_2} \in \cl_{0,q}.$ The factor $(-1)^{|\iu_2||\ju_1|}$ appears due to the defining anticommutation relations on the generators of $\cl_{p,q}$. When going to the right and down we get:
\begin{align*}
(\cml \circ (\spl \times \spl))(\be_\iu,\be_\ju)
&= \cml(\spl(\be_{\iu}), \spl(\be_{\ju}))\\ 
&= \cml(\be_{\iu_1} \hotimes \,\be_{\iu_2},\be_{\ju_1} \hotimes \, \be_{\ju_2})\\
&= (-1)^{|\iu_2||\ju_1|} (\be_{\iu_1}\be_{\ju_1}) \hotimes \, (\be_{\iu_2}\be_{\ju_2})
\end{align*}
where we have used the definition of multiplication of the basis tensors in the product $\cl_{p,0} \hotimes \cl_{0,q}$~\cite{greub,lam,lounesto,porteous}. The factor $(-1)^{|\iu_2||\ju_1|}$ appears now due to this definition.\footnote{See also \cite[Display (2.5)]{fauser2004}.} It can be checked by direct computation that the diagram~(\ref{eq:diagram3}) commutes for any two general elements 
$x,y \in \cl_{p,q},$ or, 
$$
\spl(\cml(x,y)) = \cml(\spl(x),\spl(y)).
$$ 
Thus, the map $\spl$ is an algebra homomorphism, and, so it is an isomorphism. 

The left diagram in~(\ref{eq:diagram12}) commutes due to the equality 
$$
\Hom(\cl_{p,0} \hotimes \cl_{0,q})=\Hom(\cl_{p,0}) \hotimes \Hom(\cl_{0,q}).
$$ 
The mapping shown as the right down arrow is just $T_\ve \hotimes T_\ve$ where $T_\ve$ in the first tensor slot maps $\cl_{p,0} \rightarrow \cl_{p,0}$, hence it reduces to the identity map by part (i). The mapping $T_\ve$ in the second tensor slot maps $\cl_{0,q} \rightarrow \cl_{0,q}$, hence it reduces to the grade involution $\alpha$ by part~(ii). Observe, that since grade involution is an automorphism, no extra sign factor is needed to make this diagram commute.   

The diagram on the right in~(\ref{eq:diagram12}) commutes because any linear mapping 
$$
\chi: \cl_{p,0} \hotimes \cl_{0,q} \longrightarrow \cl_{p,0} \hotimes \cl_{0,q}
$$ 
where $\chi = \sum_k \chi_{1k} \hotimes \chi_{2k}$,  is a tensor product $\varphi \hotimes \psi$ of two linear mappings  $\varphi:\cl_{p,0} \rightarrow \cl_{p,0}$ and 
$\psi:\cl_{0,q} \rightarrow \cl_{0,q}$. This tensor product needs to be modified now by an appropriate sign change on the basis elements since reversion $\beta$ (dependent on $B$) is involved in the definition of $\ta{T}{\ve}$. Let $\be_\iu \in \cb{B}$ and like before we write $\be_\iu = \be_{\iu_1} \be_{\iu_2}$ where  $\be_{\iu_1} \in \cb{B}_{p,0}$ and $\be_{\iu_2} \in \cb{B}_{0,q}$. Consider the following commutative diagram:
\begin{equation}
\bfig
\square(0,0)|arra|/>`>`>`>/<1000,500>%
[\cl_{p,q}`\cl_{p,0} \hotimes \cl_{0,q}`
 \cl_{p,q}`\cl_{p,0} \hotimes \cl_{0,q};%
 \spl `\beta`(\beta_p \otimes \beta_q) \circ (\hat{S} \circ S)`\spl]
\efig
\label{eq:diagram4}
\end{equation}
where $\beta$ is the reversion in $\cl_{p,q}$ dependent on the bilinear form~$B$, $\beta_p$ is the reversion in $\cl_{p,0}$ dependent on the bilinear form~$B_p$ defined by~$Q_1$, and $\beta_q$ is the reversion in $\cl_{0,q}$ dependent on the bilinear form~$B_q$ defined by~$Q_2$. When restricted to the basis elements in $\cb{B}$, we get when going down and to the right in the above diagram:
\begin{align*}
(\spl \circ \beta)(\be_\iu) 
&= \spl (\beta(\be_{\iu_1}\be_{\iu_2})) = \spl (\beta(\be_{\iu_2}) \beta(\be_{\iu_1}))\\
&= \spl ((-1)^{|\iu_1||\iu_2|} \beta(\be_{\iu_1}) \beta(\be_{\iu_2})) 
= (-1)^{|\iu_1||\iu_2|} \beta_p(\be_{\iu_1}) \hotimes \beta_q(\be_{\iu_2})
\end{align*}
for any $\be_{\iu} \in \cb{B}$. Likewise, when going to the right and then down, we get
\begin{align*}
((\beta_p \otimes \beta_q) &\circ (\hat{S} \circ S) \circ \spl )(\be_\iu)\\[0.5ex]
&=(\beta_p \otimes \beta_q) \circ (\hat{S} \circ S)( \be_{\iu_1} \hotimes \, \be_{\iu_2} )
=(\beta_p \otimes \beta_q) \circ \hat{S} (\be_{\iu_2} \hotimes \, \be_{\iu_1})\\[0.5ex]
&=(\beta_p \otimes \beta_q) (-1)^{|\iu_1||\iu_2|} (\be_{\iu_1} \hotimes \, \be_{\iu_2})
=(-1)^{|\iu_1||\iu_2|} \beta_p(\be_{\iu_1}) \hotimes \, \beta_q(\be_{\iu_2})
\end{align*}

Thus, the diagram~(\ref{eq:diagram4}) commutes when input is restricted to the basis monomials in $\cb{B}.$ By direct computation and using linearity one can show that this diagram in fact commutes for any general element $u \in \cl_{p,q}$.

From the commutativity of the left diagram in~(\ref{eq:diagram12}) and the diagram~(\ref{eq:diagram4}) as well as the fact that $\ta{T}{\ve} = T_\ve \circ \beta,$ we obtain commutativity of the right diagram in~(\ref{eq:diagram12}). In fact,  the right down arrow in that diagram is  
\begin{gather*}
(\ta{T}{\ve} \otimes \ta{T}{\ve}) \circ (\hat{S} \circ S) = ((\beta_p \circ 1_{A_1}) \otimes (\beta_q \circ \alpha_{A_2})) \circ (\hat{S} \circ S)
=(\beta_{A_1} \otimes \gamma_{A_2}) \circ (\hat{S} \circ S)
\end{gather*}
where as before $A_1 = \cl_{p,0}$ and $A_2 = \cl_{0,q}$.

(iv) This follows directly from the definitions of $T_\ve$ and $\ta{T}{\ve}$ given in Proposition~\ref{p:p1}. On the basis elements of $\cb{B}$ we get:
\begin{gather*}
\ta{T}{\ve}(\be_\iu) = \ta{T}{\ve} (\prod_{i \in \iu} \be_i) = 
\beta(\prod_{i \in \iu} t_\ve(\be_i)) = \beta (T_\ve(\be_\iu))
                   = (-1)^{\frac{k(k-1)}{2}} T_\ve(\be_\iu) = T_\ve(\beta(\be_\iu))
\end{gather*}
and the rest follows due to the linearity of $T_\ve$.
\end{proof}
In the case of the diagonal form $B$ used in this paper, the Clifford product of the basis generators coincides with their exterior product. This is because the generators are orthogonal so 
$\be_i\be_j = \be_i \wedge \be_j$  whenever $i\neq j$, etc.. Thus, the Grassmann basis $\cb{B}^\wedge$ for $\cl_n \cong \bigwedge V$ consisting of $1$ and Grassmann monomials, e.g., 
$\be_{i_1} \wedge \be_{i_2} \wedge \cdots \wedge \be_{i_s}$ coincides with the Clifford basis $\cb{B}$ while in general these two bases are \textit{different}. It turns out that the involution 
$T_\ve: \cl_n \rightarrow \cl_n$ and the anti-involution $\ta{T}{\ve}: \cl_n \rightarrow \cl_n$ happen to be also, respectively, the involution and the anti-involution of $\bigwedge V$. Thus, for completeness, we state without proof the following corollary.   
 
\begin{corollary}
Let $T_\ve: \cl_n \rightarrow \cl_n$ and $\ta{T}{\ve}: \cl_n \rightarrow \cl_n$ be the involution and the anti-involution of $\cl_n$ from Proposition~\ref{p:p1}. Then, $T_\ve$ is an involution of 
$\bigwedge V$ and~$\ta{T}{\ve}$ is an anti-involution of $\bigwedge V$.
\end{corollary}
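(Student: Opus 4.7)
The plan is to exploit the fact that, because the chosen basis $\cb{B}_1$ is orthonormal with respect to $B$, the Clifford and exterior products agree on Grassmann monomials. More precisely, the anticommutation relation $\be_i\be_j = -\be_j\be_i$ (for $i\neq j$) gives $\be_{i_1}\be_{i_2}\cdots\be_{i_s} = \be_{i_1}\wedge\be_{i_2}\wedge\cdots\wedge\be_{i_s}$ whenever $i_1<i_2<\cdots<i_s$, so that the Clifford basis $\cb{B}$ coincides with the Grassmann basis $\cb{B}^\wedge$. Thus $T_\ve$ and $\ta{T}{\ve}$, already defined on these basis elements by Proposition~\ref{p:p1}, can be read simultaneously as maps on $\bigwedge V$ and on $\cl_n$; what must be checked is only their \emph{multiplicativity} with respect to the wedge product.

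For $T_\ve$, first note that $t_\ve(\be_i)=\ve_i\be_i$ and hence $T_\ve(\be_\iu)=\bigl(\prod_{i\in\iu}\ve_i\bigr)\be_\iu$; so $T_\ve$ is a diagonal rescaling of the sorted basis. By bilinearity, it suffices to prove $T_\ve(\be_\iu\wedge\be_\ju)=T_\ve(\be_\iu)\wedge T_\ve(\be_\ju)$. If $\iu\cap\ju\neq\emptyset$, both sides vanish. Otherwise $\be_\iu\wedge\be_\ju = \sigma(\iu,\ju)\,\be_{\iu\cup\ju}$ for some sign $\sigma(\iu,\ju)\in\{\pm 1\}$, and both sides are equal to $\sigma(\iu,\ju)\bigl(\prod_{k\in\iu\cup\ju}\ve_k\bigr)\be_{\iu\cup\ju}$, since the scalars $\ve_k$ commute past the sign $\sigma(\iu,\ju)$.

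For $\ta{T}{\ve}$, I would invoke part (iv) of the preceding corollary, which gives $\ta{T}{\ve} = T_\ve\circ\beta$, and use the classical fact that the reversion $\beta$ is an anti-involution of $\bigwedge V$: on a simple $k$-vector, $\beta(\be_\iu)=(-1)^{k(k-1)/2}\be_\iu$, and the identity $\beta(\be_\iu\wedge\be_\ju)=\beta(\be_\ju)\wedge\beta(\be_\iu)$ follows from the sign count in reversing the wedge order. Since the composition of an involutive algebra map with an involutive anti-algebra map is again an anti-involutive anti-algebra map, and $T_\ve$ has just been shown to preserve the wedge product, one obtains
\[
\ta{T}{\ve}(u\wedge v) = T_\ve(\beta(u\wedge v)) = T_\ve(\beta(v))\wedge T_\ve(\beta(u)) = \ta{T}{\ve}(v)\wedge\ta{T}{\ve}(u),
\]
together with $\ta{T}{\ve}\circ\ta{T}{\ve}=\mathrm{id}$, which follows from $T_\ve\circ\beta=\beta\circ T_\ve$ and both being involutions.

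There is no serious obstacle: the work is entirely in confirming that the sign bookkeeping from the wedge reordering and the scalar bookkeeping from the eigenvalues $\ve_i$ do not interfere. The only point to be careful about is that the claim fails for an arbitrary (non-orthonormal) generating set, where the Clifford and Grassmann products no longer coincide on basis monomials; the argument above uses orthonormality of $\cb{B}_1$ in an essential way, precisely to identify $\cb{B}$ with $\cb{B}^\wedge$ at the outset.
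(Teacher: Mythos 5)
Your proof is correct and follows exactly the route the paper sketches in the paragraph immediately preceding the corollary (coincidence of the Clifford and Grassmann bases for an orthonormal generating set, $T_\ve$ as a diagonal rescaling, and $\ta{T}{\ve}=T_\ve\circ\beta$ combined with the fact that reversion is an anti-involution of $\bigwedge V$). The paper explicitly states this corollary without proof, and your argument supplies precisely the missing details in the intended way.
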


\section{Clifford algebra over the dual space}
\label{dual}
Since $(V^\ast,Q)$ is a non-degenerate quadratic space spanned by the orthonormal basis $\cb{B}^\ast_1$ from Lemma~\ref{l:l1}, part (iii), we can define the Clifford algebra $\cl(V^\ast,Q)$ as expected. 
\begin{definition}
The \textit{Clifford algebra over the dual space} $V^\ast$ is the universal Clifford algebra $\cl(V^\ast,Q)$ of the quadratic pair $(V^\ast,Q)$. For short, we denote this algebra by $\cl_n^\ast$.
\label{def:dualcliff}
\end{definition}

\begin{remark}
Although from now on we denote the Clifford algebra of the dual $(V^\ast,Q)$ via $\cl_n^\ast$, we do not claim that $\cl_n^\ast$ is the \textit{dual} algebra of $\cl_n$ in categorical sense as it was considered in~\cite{fauserstumpf1997} and references therein. Connection between $\cl(V^\ast,Q)$ and the dual $(\cl(V,Q))^\ast$ of the Clifford algebra $\cl(V,Q)$ will be investigated elsewhere.
\end{remark}

Thus, the orthonormal basis $\cb{B}^\ast_1$ satisfies the same relations in $\cl^\ast_n$ as the basis $\cb{B}_1$ satisfies in $\cl_n$ namely
\begin{subequations}   
\begin{gather}
(t_\ve(\be_i))^2 = B(t_\ve(\be_i),t_\ve(\be_i)) \cdot 1=Q(t_\ve(\be_i)) \cdot 1 = \ve_i \cdot 1,\quad  1 \le i \leq n,
         \label{eq:B1star1}\\
t_\ve(\be_i)t_\ve(\be_j) + t_\ve(\be_j) t_\ve(\be_i) = 0, \quad i \neq j,\quad  1 \le i,j \leq n.
         \label{eq:B1star2}
\end{gather}
\label{eq:B1star}
\end{subequations}
\hspace*{-1.8ex}
We denote by $\cb{B}^\ast$ the canonical basis of $\bigwedge V^\ast \cong \cl^\ast_n$ generated by $\cb{B}^\ast_1$ and sorted by~$\mathtt{InvLex}$. That is, we define
$\cb{B}^\ast = \{T_\ve(\be_\iu) \,| \, \be_\iu \in \cb{B}\}$ given that
\begin{equation}
\lang T_\ve(\be_\iu),\be_\ju \rang = \delta_{\iu,\ju} 
\label{eq:dualbasisClstar}
\end{equation}
for $\be_\iu,\be_\ju \in \cb{B}$ and $T_\ve(\be_\iu) \in \cb{B}^\ast$. For example, when $n=3$, then
\begin{multline}
\cb{B}^\ast  = \{ 1 \prec \ve_1\be_1 \prec \ve_2\be_2 \prec 
                  \ve_1\ve_2\be_{12} \prec \ve_3\be_3 \prec 
                  \ve_1\ve_3 \be_{13} \prec \ve_2\ve_3\be_{23} \prec 
                  \ve_1\ve_2\ve_3\be_{123} \}
\label{eq:sortedBstar}
\end{multline}
Then, an arbitrary linear form $\vp$ in 
$\bigwedge V^\ast \cong \cl^\ast_n$ can be written as
\begin{equation}
\vp = \sum_{\iu \in 2^{[n]}} \vp_\iu T_\ve(\be_\iu)
\label{eq:vp}
\end{equation}
where $\vp_\iu \in \BR$ for each $\iu \in 2^{[n]}$. Due to the linear isomorphisms $V \cong V^\ast$ and $\bigwedge V^\ast \cong \cl(V^\ast,Q)$, we extend, by a small abuse of notation, the inner product $\lang \cdot,\cdot \rang$ defined in Section~2 to 
\begin{equation}
\lang \cdot,\cdot \rang: \bigwedge V^\ast \times \bigwedge V^\ast \rightarrow \BR.
\label{eq:dualprod}
\end{equation}
This way we find, as expected, that the matrix of this inner product on $\bigwedge V^\ast$ is also diagonal, that is, that the basis~$\cb{B}^\ast$ is orthonormal with respect to 
$\lang \cdot, \cdot \rang$. For example, for~$\cb{B}^\ast$ given in~(\ref{eq:sortedBstar}) we get the same matrix as in~(\ref{eq:B71}), namely,
\begin{equation}
\lang T_\ve(\be_\iu),T_\ve(\be_\ju) \rang = \diag (1, \ve_1, \ve_2, \ve_{1}\ve_{2}, \ve_3, 
       \ve_{1}\ve_{3}, \ve_{2}\ve_{3}, \ve_{1}\ve_{2}\ve_{3})
\label{eq:B72}      
\end{equation}
where $T_\ve(\be_\iu),T_\ve(\be_\ju) \in \cb{B}^\ast$. We extend the action of dual vectors from $V^\ast$ on $V$ to all linear forms $\varphi$ in $\cl^\ast_n$ acting on multivectors $v$ in $\cl_n$ via the inner product~(\ref{eq:dualprod}) as
\begin{equation}
\vp(v) = \lang \vp,v \rang = \sum_{\iu \in 2^{[n]}} \vp_\iu v_\iu
\label{eq:phiaction}
\end{equation}
given that 
$$
\vp = \sum_{\iu \in 2^{[n]}} \vp_\iu T_\ve(\be_\iu) \in \cl^\ast_n 
\quad \mbox{where} \quad 
\vp_\iu = \vp(\be_\iu) \in \BR
$$ 
and $v = \sum_{\iu \in 2^{[n]}} v_\iu \be_\iu  \in \cl_n$ for some coefficients $v_\iu \in \BR$.

We recall the definition of the \textit{transpose} of a linear mapping~\cite{lang}.
\begin{definition}
Let $T:V \rightarrow U$ be an arbitrary linear mapping from a $k$-vector space~$V$ into a $k$-vector space~$U$. Now for any functional $\vp \in U^\ast$, the composition $\vp \circ T$ is a linear mapping from $V$ to $k$:
\begin{equation}
\bfig
\qtriangle|alr|/>`>`>/[V`U`k;T`\vp \circ T`\vp]
\efig
\label{eq:Tphidiag}
\end{equation}
That is, $\vp \circ T \in V^\ast$. Thus, the correspondence $\vp \mapsto \vp \circ T$ is a mapping from $U^\ast$ to $V^\ast$. We denote it by $T^t$ and call it the \textit{transpose} of $T$. That is, 
$T^t: U^\ast \rightarrow V^\ast$ is defined by $T^t(\vp) = \vp \circ T$ and $(T^t(\vp))(v) =  \vp(T(v))$ for every $v \in V$.
\label{def:transpose}
\end{definition}

The following two facts about the mapping $T^t$ are standard in linear algebra and can be found, for example, in~\cite{lipschutz}:
\begin{theorem} 
Let $T:V \rightarrow U$ be a linear mapping and let $T^t: U^\ast \rightarrow V^\ast$ be the transpose of~$T$.
\begin{itemize}
\item[(i)] $T^t$ is linear. 
\item[(ii)] Let $M$ be the matrix representation of $T$ relative to bases $\{v_i\}$ in $V$ and $\{u_j\}$ in $U$. Then the transpose matrix $M^t$ is the matrix representation of the transpose 
$T^t: U^\ast \rightarrow V^\ast$ relative to the bases dual to $\{u_j\}$ and $\{v_i\}$.
\end{itemize}
\label{t:fromlang}
\end{theorem}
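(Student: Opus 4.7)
The plan is to prove both parts by direct computation from the definition $T^{t}(\vp) = \vp \circ T$ from Definition~\ref{def:transpose}, since this is a standard result in linear algebra and the obstacle is not conceptual but essentially bookkeeping with index conventions.

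For part (i), I would verify that for $\vp, \psi \in U^{\ast}$ and scalars $\alpha,\beta \in k$,
\[
T^{t}(\alpha \vp + \beta \psi) = (\alpha \vp + \beta \psi) \circ T = \alpha (\vp \circ T) + \beta (\psi \circ T) = \alpha T^{t}(\vp) + \beta T^{t}(\psi),
\]
where the middle equality uses the linearity of $T$ to move scalars and sums past it. That suffices.

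For part (ii), I would set up coordinates carefully. Let $\{v_{i}\}_{i=1}^{\dim V}$ and $\{u_{j}\}_{j=1}^{\dim U}$ be the given bases, and let $\{v_{i}^{\ast}\}$, $\{u_{j}^{\ast}\}$ be the dual bases characterized by $v_{i}^{\ast}(v_{k}) = \delta_{ik}$ and $u_{j}^{\ast}(u_{l}) = \delta_{jl}$. Write the matrix $M = [a_{ji}]$ of $T$ via
\[
T(v_{i}) = \sum_{j} a_{ji}\, u_{j},
\]
so that the $i$-th column of $M$ records the coordinates of $T(v_{i})$.

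Now I would compute the matrix of $T^{t}$ by applying $T^{t}(u_{j}^{\ast})$ to the basis vectors $v_{i}$:
\[
\bigl(T^{t}(u_{j}^{\ast})\bigr)(v_{i}) = u_{j}^{\ast}(T(v_{i})) = u_{j}^{\ast}\!\left(\sum_{k} a_{ki}\, u_{k}\right) = a_{ji}.
\]
Since a linear functional on $V$ is determined by its values on the basis $\{v_{i}\}$, this shows
\[
T^{t}(u_{j}^{\ast}) = \sum_{i} a_{ji}\, v_{i}^{\ast}.
\]
Reading off the coefficients, the $(i,j)$-entry of the matrix of $T^{t}$ in the dual bases is $a_{ji}$, which is by definition the $(i,j)$-entry of $M^{t}$. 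Hence the matrix of $T^{t}$ equals $M^{t}$, proving (ii).

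The only potential pitfall is a convention clash (rows vs. columns, or $T(v_{i}) = \sum_{j} a_{ij} u_{j}$ vs. $T(v_{i}) = \sum_{j} a_{ji} u_{j}$), but once one commits to one convention the calculation above is forced. Because the paper cites this as a standard fact from Lipschutz, a short proof of this form — or simply a reference — should suffice.
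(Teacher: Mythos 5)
Your proposal is correct, and it fills in what the paper deliberately omits: the paper states this result as a standard fact and simply cites Lipschutz, giving no proof of its own. Your computation in part (ii) — reading off the matrix of $T^{t}$ from $T^{t}(u_{j}^{\ast})(v_{i}) = a_{ji}$ and matching it against the entries of $M^{t}$ — is exactly the textbook argument, with the index bookkeeping handled correctly.

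One small wording slip in part (i): the equality $(\alpha\vp + \beta\psi)\circ T = \alpha(\vp\circ T) + \beta(\psi\circ T)$ does not use the linearity of $T$. It follows from the pointwise definition of the vector-space operations on $U^{\ast}$ (and on maps $V \to k$ generally): for each $v$, $(\alpha\vp + \beta\psi)(T(v)) = \alpha\,\vp(T(v)) + \beta\,\psi(T(v))$ by definition of the sum and scalar multiple of functionals, with no role for how $T$ acts. Linearity of $T$ is needed only to guarantee that each $\vp\circ T$ is itself linear, i.e., that $T^{t}$ actually lands in $V^{\ast}$ — a separate (and also easy) point worth noting explicitly if you want the proof to be self-contained. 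This does not affect the correctness of the displayed computation.
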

\noindent
Since we are interested in matrices of the left regular representation of $\cl_n$, we define the following left multiplication operator on $\cl_n$.
\begin{definition}
Let $u$ be an arbitrary multivector in $\cl_n$. Then, the \textit{left multiplication} operator $L_u$ is simply the map $L_u : \cl_n \rightarrow \cl_n, \, v \mapsto uv, \, \forall v \in \cl_n.$
\label{def:Lu} 
\end{definition}
\noindent
We have the following proposition.
\begin{proposition}
Let $\cb{B}$ and $\cb{B}^\ast$ be, respectively, the sorted bases in $\cl_n$ and $\cl^\ast_n$, and let $L_u:\cl_n \rightarrow \cl_n$, $u \in \cl_n$, be the left multiplication operator.
\begin{itemize}
\item[(i)] The operator $L_{\tilde{u}}$ is the dual of $L_u$ with respect to the inner product $\lang \cdot,\cdot \rang: \bigwedge V \times \bigwedge V \rightarrow \BR$ defined in 
Section~\ref{definitions}. That is, 
\begin{gather}
\lang u, L_v(w) \rang = \lang L_{\tilde{v}}(u),w \rang
\label{eq:res1}
\end{gather} 
for any $v,w \in \cl_n$.
\item[(ii)] Let $u,v \in \cl_n$ and $\vp \in \cl_n^\ast$. Then,
\begin{subequations}
\begin{equation}
L_u^t(\vp)(v) = (\vp \circ L_u)(v)  = \vp(L_u(v)) = L_{\tilde{u}}(\vp)(v) 
\label{eq:res2a}
\end{equation}
or, equivalently,
\begin{equation}
\lang L_u^t(\vp),v   \rang = \lang \vp \circ L_u, v   \rang = 
\lang \vp, L_u(v)\rang = \lang L_{\tilde{u}}(\vp),v   \rang
\label{eq:res2b}
\end{equation}
\label{eq:res2}
\end{subequations}
\item[(iii)] Let $T_\ve$ and $\ta{T}{\ve}$ be as above, $\vp \in \cl^\ast_n \cong \cl_n$, and $u \in \cl_n$. Then,
\begin{equation}
L_{\tilde{u}}(\vp) = T_{\ve}(L_{\ta{T}{\ve}(u)}(\vp_{\cb{B}})) = 
(\ta{T}{\ve}(L_{\ta{T}{\ve}(u)}(\vp_{\cb{B}})))\tilde{}
\label{eq:res3}
\end{equation}
where $\vp_{\cb{B}} = T_\ve(\vp)$ is the form $\vp$ expressed in the $\cb{B}$ basis.
\item[(iv)] Let $u \in \cl_n$. If $\matrepr{L_u}$ is the matrix of the operator $L_u$ relative to the basis $\cb{B}$ and $\matrepr{L_{\tilde{u}}}$ is the matrix of the operator $L_{\tilde{u}}$ relative to the basis  $\cb{B}^\ast$, then 
\begin{equation}
\matrepr{L_u}^T = \matrepr{L_{\tilde{u}}}
\label{eq:res4}
\end{equation}
where $\matrepr{L_u}^T$ is the matrix transpose of $\matrepr{L_u}$.
\item[(v)] Let $u \in \cl_n$. If $\matrepr{L_u}$ is the matrix of the operator $L_u$ relative to the basis $\cb{B}$ and $\matrepr{L_{\ta{T}{\ve}(u)}}$ is the matrix of the operator $L_{\ta{T}{\ve}(u)}$ relative to the basis $\cb{B}$, then 
\begin{equation}
\matrepr{L_u}^T = \matrepr{L_{\ta{T}{\ve}(u)}} = \matrepr{  L_{T_\ve(\tilde{u})}   } 
\label{eq:res5}
\end{equation}
where $\matrepr{L_u}^T$ is the matrix transpose of $\matrepr{L_u}$. 
\item[(vi)] Let $u \in \cl_n$.  The anti-involution $\ta{T}{\ve}$ applied to $u$ results in the transposition of the matrix $\matrepr{L_u}$ in the left regular representation 
$L_u: \cl_n \rightarrow \cl_n$ relative to the basis~$\cb{B}$.
\end{itemize}
\label{p:main}
\end{proposition}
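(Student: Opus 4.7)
The proof will proceed part by part, leveraging the duality formula \eqref{eq:idents4} of Lemma~\ref{L:lemma1} and the algebraic compatibility between $T_\ve$ and the reversion $\beta$ established earlier. Part (vi) is really a matrix-level restatement of (v), so the plan is to build up (i)–(v) and then read off (vi).

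For part (i), the identity $\lang u, L_v(w)\rang = \lang u, vw\rang = \lang \tilde{v}u, w\rang = \lang L_{\tilde{v}}(u), w\rang$ is immediate from Lemma~\ref{L:lemma1}(iv). For part (ii), unwind the definition of the transpose: $L_u^t(\vp)(v) = (\vp\circ L_u)(v) = \vp(uv)$. Using the pairing presentation $\vp(x)=\lang\vp,x\rang$ coming from \eqref{eq:phiaction} and part (i), one gets $\vp(uv) = \lang\vp, uv\rang = \lang \tilde{u}\vp, v\rang = L_{\tilde{u}}(\vp)(v)$, which is \eqref{eq:res2}.

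For part (iii), the key tool is that $T_\ve$ is an $\mathbb{R}$-algebra automorphism together with the relation $T_\ve\circ \ta{T}{\ve} = T_\ve\circ\beta\circ T_\ve = \beta$ (since $T_\ve$ and $\beta$ commute and $T_\ve^2=1$, by the preceding corollary part (iv)). Writing $\vp_{\cb B} = T_\ve(\vp)$, one computes
\begin{equation*}
T_\ve\bigl(L_{\ta{T}{\ve}(u)}(\vp_{\cb B})\bigr)
 = T_\ve\bigl(\ta{T}{\ve}(u)\cdot T_\ve(\vp)\bigr)
 = T_\ve(\ta{T}{\ve}(u))\cdot T_\ve(T_\ve(\vp)) = \tilde{u}\cdot \vp = L_{\tilde{u}}(\vp),
\end{equation*}
and the second equality in \eqref{eq:res3} follows from $T_\ve = \ta{T}{\ve}\circ\beta$. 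For part (iv), use the classical fact (Theorem~\ref{t:fromlang}(ii)) that the matrix of the transpose map in dual bases is the transpose of the matrix. Since $\cb B^\ast$ is the dual basis of $\cb B$ with respect to the pairing \eqref{eq:dualbasisClstar}, part (ii) gives $L_u^t = L_{\tilde{u}}$, and so $[L_u]_{\cb B}^T = [L_{\tilde{u}}]_{\cb B^\ast}$.

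For part (v), the change-of-basis interpretation of (iii) is what is needed: the map $T_\ve$ carries $\cb B$ bijectively onto $\cb B^\ast$, so the coordinate vector of $\vp\in\cl_n^\ast$ in $\cb B^\ast$ coincides with that of $\vp_{\cb B}=T_\ve(\vp)\in\cl_n$ in $\cb B$. Therefore the matrix of $L_{\tilde{u}}$ in $\cb B^\ast$ equals the matrix of $L_{\ta{T}{\ve}(u)}$ in $\cb B$, and combining with (iv) yields $[L_u]_{\cb B}^T = [L_{\ta{T}{\ve}(u)}]_{\cb B}$. Part (vi) is then the verbal translation of this matrix identity. The main conceptual obstacle is bookkeeping the three distinct roles played by the pair $(T_\ve,\ta{T}{\ve})$: as an algebra (anti-)automorphism of $\cl_n$, as the vector-space isomorphism $\cl_n\to\cl_n^\ast$ carrying $\cb B$ to $\cb B^\ast$, and — via (iii) — as the intertwiner converting multiplication in $\cl_n^\ast$ into multiplication in $\cl_n$; once this trident of interpretations is kept straight, every step reduces to either the duality \eqref{eq:idents4} or the classical transpose-matrix formula.
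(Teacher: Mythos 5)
Your proposal is correct and follows essentially the same path as the paper: part (i) from Lemma~\ref{L:lemma1}(iv), part (ii) by unwinding the definition of the transpose and invoking (i), part (iii) by the algebra-automorphism property of $T_\ve$ together with $T_\ve^2=1$ and $T_\ve\circ\beta=\beta\circ T_\ve$, and parts (iv)--(vi) by Theorem~\ref{t:fromlang}(ii) plus the identification of $\cb B^\ast=T_\ve(\cb B)$. The only stylistic divergence is minor: in (iii) you run the computation from the right-hand side $T_\ve(L_{\ta{T}{\ve}(u)}(\vp_{\cb B}))$ to $L_{\tilde u}(\vp)$, whereas the paper inserts $T_\ve\circ T_\ve$ on the left-hand side and pushes through; and in (v) you phrase the final step as a clean change-of-basis conjugation statement ($L_{\tilde u}=T_\ve\circ L_{\ta{T}{\ve}(u)}\circ T_\ve$ with $T_\ve$ carrying $\cb B$ to $\cb B^\ast$), whereas the paper composes the operator equalities explicitly to arrive at $(T_\ve\circ L_u^t\circ T_\ve)(\vp_{\cb B})=L_{\ta{T}{\ve}(u)}(\vp_{\cb B})$ and then reads off the matrix consequence. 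Both phrasings encode the same content, and neither introduces a gap.
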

\begin{proof}
For easier reading, in the following we will use interchangeably $\beta(u)$ and $\beta(\vp)$ to denote, respectively, the reversion $\tilde{u}$ of $u \in \cl_n$ and the reversion $\tilde{\vp}$ of  
$\vp \in \cl^\ast_n$. Recall that reversion is an anti-involution of $\cl_n$, $\cl^\ast_n$ and $\bigwedge V$.

(i) This just follows directly from~(\ref{eq:idents4}) and the definition of $L_u$.

(ii) We apply~(\ref{eq:Tphidiag}) to the linear mapping $L_u:\cl_n \rightarrow \cl_n$ and get, for every $\vp \in \cl_n^\ast$, the following commutative diagram:
\begin{equation}
\bfig
\qtriangle|alr|/>`>`>/[\cl_n`\cl_n`k;L_u`L_u^t (\vp) = \vp \,\circ\, L_u`\vp]
\efig
\label{eq:Luphidiag}
\end{equation}
where $L_u^t = (L_u)^t$ is the transpose of $L_u$, that is, $L_u^t : \cl_n^\ast \rightarrow \cl_n^\ast$ is the correspondence $\vp \mapsto \vp \circ L_u$. We show now that 
$L_u^t(\vp) = L_{\tilde{u}}(\vp)$ for every $\vp \in \cl_n^\ast$. 

Let $v$ be an arbitrary multivector in $\cl_n$. Then,
\begin{align}
L_u^t(\vp)(v) = (\vp \circ L_u)(v) 
&=\vp(uv) = \lang \vp,uv \rang =\lang \tilde{u}\vp, v \rang \notag \\
&= \lang L_{\tilde{u}}(\vp),v \rang = L_{\tilde{u}}(\vp)(v), \quad \forall v \in \cl_n 
\end{align}
where we have used the definition of the action of the multiform $\vp \in \cl_n$ on~$v$ given in~(\ref{eq:phiaction}), formula~(\ref{eq:idents4}), and the linear isomorphism $\cl^\ast_n \cong \cl_n$. Thus, $L_u^t(\vp) = L_{\tilde{u}}(\vp)$ which is~(\ref{eq:res2a}). Display~(\ref{eq:res2b}) is just a reformulation of~(\ref{eq:res2a}) in terms of the inner product. 

(iii) We use the fact that $T_\ve \circ \beta = \beta \circ T_\ve$ where $T_\ve$ is the involution of $\cl_n$. Recall also that $\vp_{\cb{B}} = T_\ve(\vp)$  and 
$\ta{T}{\ve} = T_\ve \circ \beta = \beta \circ T_\ve$. 
Then we get the following:
\begin{align*}
L_{\tilde{u}}(\vp)
&= \tilde{u}\vp =  T_\ve(T_\ve(\tilde{u}\vp)) =T_\ve(T_\ve(\tilde{u})T_\ve(\vp)) \notag \\
&= T_\ve(\ta{T}{\ve}(u) \vp_{\cb{B}}) =T_\ve(L_{\ta{T}{\ve}(u)}(\vp_{\cb{B}})) =
   (\ta{T}{\ve}   (L_{\ta{T}{\ve}(u)}(\vp_{\cb{B}})))\tilde{}
\label{eq:der1}
\end{align*}

(iv) We apply the second part of Theorem~\ref{t:fromlang}. From~(\ref{eq:res2}) above we get that the transpose $L_u^t$ of $L_u$ is $L_{\tilde{u}}$, that is, 
$L_u^t(\vp) = L_{\tilde{u}}(\vp)$ for every $\vp \in \cl_n^\ast$. This means, that the matrix of $L_u$ relative to the basis  $\cb{B}$ of $\cl_n$ is the transpose of the matrix of 
$L_{\tilde{u}}$ relative to the basis  $\cb{B}^\ast$ of $\cl_n^\ast$. This is precisely the condition~(\ref{eq:res4}).

(v) We combine relation $L_u^t(\vp) = L_{\tilde{u}}(\vp), \forall \vp \in \cl_n^\ast$, from~(\ref{eq:res2}) with relation $L_{\tilde{u}}(\vp) = T_{\ve}(L_{\ta{T}{\ve}(u)}(\vp_{\cb{B}}))$ 
from~(\ref{eq:res3}). Recall that $\vp = T_\ve(\vp_{\cb{B}})$ where  $\vp_{\cb{B}}$ is the multiform $\vp$ expressed relative to the basis  $\cb{B}$ of~$\cl_n$. Then we get the following sequence of equalities:
\begin{align*}
L_u^t(\vp) = L_{\tilde{u}}(\vp) &= T_{\ve}(L_{\ta{T}{\ve}(u)}(\vp_{\cb{B}})),\notag\\
(L_u^t \circ T_\ve)(\vp_\cb{B}) &= T_{\ve}(L_{\ta{T}{\ve}(u)}(\vp_{\cb{B}})),\notag\\
(T_\ve \circ L_u^t \circ T_\ve)(\vp_\cb{B}) &= L_{\ta{T}{\ve}(u)}(\vp_{\cb{B}}) 
\end{align*}
where the last equality means that the matrix of $L_u$ relative to the basis  
$\cb{B}$ of $\cl_n$ is the transpose of the matrix of $L_{\ta{T}{\ve}(u)}$ relative to the basis  $\cb{B}$ of $\cl_n$. This is because the latter is the matrix of the transpose  $L_u^t$ of $L_u$ in the $\cb{B}$ basis.

(vi) This is just a restatement of part (v) and formula~(\ref{eq:res5}).
\end{proof}

\section{Examples}
\label{examples}
In this section we provide a few examples to illustrate parts (iv) and (v) of Proposition~\ref{p:main}. In the first example we verify relation~(\ref{eq:res4}) in low dimensions.
\begin{example}($n=1$) Let $\cb{B} = \{1, \be_1\}$ be the sorted basis for $\cl_1$. Then, 
$\cb{B}^\ast = \{1, \ve_1\be_1\}$ is the sorted basis for the Clifford algebra 
$\cl_1^\ast$. Let $u=a_1 1+a_2 \be_1$, $a_1,a_2 \in \BR$, be an arbitrary element in $\cl_1$. Then, matrix $\matrepr{L_u}$ of the operator $L_u$ relative to $\cb{B}$ and matrix $\matrepr{L_{\tilde{u}}}$ of the operator $L_{\tilde{u}}$ relative to $\cb{B}^\ast$ are
\begin{equation}
\matrepr{L_u} =\left[\begin{matrix} a_1 & \ve_1 a_2\\a_2 & a_1 \end{matrix}\right] 
\quad \mbox{and} \quad
\matrepr{L_{\tilde{u}}} =\left[\begin{matrix} a_1 & a_2\\ \ve_1 a_2 & a_1 \end{matrix}\right].
\label{eq:example1a}
\end{equation}
Hence, as expected, they are related by the transposition.

We repeat this computation for $n=2$. Then, $\cb{B} = \{1,\be_1,\be_2,\be_{12}\}$ is the sorted basis for $\cl_2$ and $\cb{B}^\ast = \{1,\ve_1\be_1,\ve_2\be_2,\ve_1\ve_2\be_{12}\}$ is the sorted basis for $\cl_2^\ast$. Let $u = a_1 1 + a_2 \be_1 + a_3 \be_2 + a_4 \be_{12}$ be a general element in $\cl_2$. The reversed element is then $\tilde{u} = a_1 1 + a_2 \be_1 + a_3 \be_2 - a_4 \be_{12}$ and we easily find matrices for $L_{u}$, relative to $\cb{B},$ and $L_{\tilde{u}}$, relative to $\cb{B}^\ast$, as:
\begin{equation}
\matrepr{L_u} =\left[\begin{matrix}%
a_1 & \ve_1 a_2  & \ve_2 a_3 & -\ve_1\ve_2 a_4 \\
a_2 & a_1        & \ve_2 a_4 & -\ve_2 a_3\\ 
a_3 & -\ve_1 a_4 & a_1       & \ve_1 a_2 \\
a_4 & -a_3       & a_2       & a_1%
\end{matrix}\right],
\label{eq:example1b1}
\end{equation}
and
\begin{equation}
\matrepr{L_{\tilde{u}}} =\left[\begin{matrix}%
a_1 & a_2  & a_3 & a_4 \\
\ve_1 a_2 & a_1      & -\ve_1  a_4 & -a_3\\ 
\ve_2 a_3 & \ve_2 a_4 & a_1       & a_2 \\
-\ve_1 \ve_2 a_4 & -\ve_2 a_3     & \ve_1 a_2       & a_1%
\end{matrix}\right]. 
\label{eq:example1b2}
\end{equation}
Matrices displayed in~(\ref{eq:example1a}) appear as two diagonal blocks in, respectively, matrices~(\ref{eq:example1b1}) and~(\ref{eq:example1b2}). 

Finally, let $n=3$. Then, in $\cl_3$ we choose the sorted basis $\cb{B}$ displayed in~(\ref{eq:sortedB}) whereas the sorted basis $\cb{B}^\ast$ displayed in~(\ref{eq:sortedBstar}) is our  basis for the Clifford algebra $\cl_3^\ast$.  Let $u = a_1 1 +a_2 \be_1+a_3 \be_2+a_4 \be_{12}+a_5 \be_3+a_6 \be_{13}+a_7 \be_{23}+a_8 \be_{123}$ be a general element in $\cl_3$. Then, the matrix of $L_u$ is 
\begin{equation}
\matrepr{L_u} =\left[\begin{matrix}%
a_{1} & \ve_1 a_{2}& \ve_2 a_{3}&  -\ve_{12} a_{4}& \ve_3 a_{5}& -\ve_{13}a_{6} 
      &  -\ve_{23} a_{7}&  -\ve_{123} a_{8}\\
a_{2} & a_{1} & \ve_2 a_{4}&  -\ve_2 a_{3}& \ve_3 a_{6}&  -\ve_3 a_{5}&  -\ve_{23}a_{8} 
      &  -\ve_{23} a_{7}\\
a_{3} &  -\ve_1 a_{4}& a_{1} & \ve_{1} a_{2}& \ve_3 a_{7}& \ve_{13} a_{8}&  -\ve_3a_{5} 
      & \ve_{13} a_{6}\\
a_{4} &  -a_{3} & a_{2} & a_{1} & \ve_3 a_{8}& \ve_3 a_{7}&  -\ve_3 a_{6}& \ve_3 a_{5}\\
a_{5} &  -\ve_1 a_{6}&  -\ve_2 a_{7}&  -\ve_{12} a_{8}& a_{1} & \ve_1 a_{2}
      & \ve_2 a_{3}&  -\ve_{12} a_{4}\\
a_{6} &  - a_{5} &  -\ve_2 a_{8}&  -\ve_2 a_{7}& a_{2} & a_{1} & \ve_2a_{4} 
      &  -\ve_2 a_{3}\\
a_{7} & \ve_1 a_{8}&  -a_{5} & \ve_1 a_{6}& a_{3} &  -\ve_1 a_{4}& a_{1} 
      & \ve_1 a_{2}\\
a_{8} & a_{7} &  -a_{6} & a_{5} & a_{4} &  - a_{3} & a_{2} & a_{1}
\end{matrix}\right]
\label{eq:example1c1}
\end{equation}
where to shorten the output we have set $\ve_{12}=\ve_1\ve_2,$ $\ve_{123}=\ve_1\ve_2\ve_3$, etc. The matrix of $\matrepr{L_{\tilde{u}}}$ is again the matrix transpose of~(\ref{eq:example1c1}) which can be verified by direct computation. Furthermore, again due to our \texttt{InvLex} sorting of $\cb{B}$, matrix~(\ref{eq:example1b1}) appears twice on the diagonal of~(\ref{eq:example1c1}).
\label{ex:one}
\end{example}
It can be easily observed from the form of the matrices $\matrepr{L_u}$ in dimensions $n=1,2,3$ that the lower dimensional matrix always appears \textit{twice} on the diagonal of the next higher dimensional matrix. This should be true for all dimensions. Using \texttt{CLIFFORD}, \cite{ablamfauser2009} we have verified  relation~(\ref{eq:res4}) in dimensions $1$ through $8.$

If we define a mapping $L:\cl_n \rightarrow \End (\cl_n),\, u\mapsto L_u,$ where $L_u$ is the left multiplication operator from Definition~\ref{def:Lu}, then $L$ is an algebra 
\textit{homomorphism} referred to as the \textit{left regular representation}. Therefore, rather than finding matrix $\matrepr{L_u}$ for an arbitrary element $ u \in \cl_n$, it is of course sufficient to find matrices that represent the basis generators $\be_1,\be_2,\ldots,\be_n \in 
\cb{B}_1$. This is because 
$$
\matrepr{L_{\be_\iu}} = \matrepr{L_{\be_{i_1}}}\matrepr{L_{\be_{i_2}}} \cdots 
                        \matrepr{L_{\be_{i_s}}} 
\quad  \mbox{whenever} \quad
\be_\iu = \be_{i_1} \be_{i_2} \cdots \be_{i_s} \in \cb{B}.
$$
Thus, $\matrepr{L_u} = \sum_\iu a_{\iu} \matrepr{L_{\be_{\iu}}}$ whenever $u =\sum_\iu a_{\iu} \be_\iu \in \cl_n$. It is interesting to observe the block structure of these generic matrices 
$\matrepr{L_{\be_{i}}}$ relative to the sorted basis $\cb{B}$ which in this paper coincides with the standard basis $\cb{B}^\wedge$ for the exterior algebra~$\bigwedge V$. 

To simplify our notation, we let $\matrepr{\be_{i}} \equiv \matrepr{L_{\be_{i}}}$. Also, by $ M \oplus N$ we mean the block-diagonal matrix 
$\left[\begin{array}{c|c} M & 0\\ \hline 0 & N\end{array}\right]$ which we abbreviate to $2M$ if $M=N$, etc.. Furthermore, let $J_n$ be the $2^{n-1} \times 2^{n-1}$ matrix defined recursively as 
$$
J_n = J_{n-1} \oplus -J_{n-1} \quad \mbox{with} \quad 
J_1 = \left[\begin{matrix} 1 \end{matrix} \right].
$$ 

\begin{example}
Let $\cb{B}_1 =\{\be_1,\ldots,\be_n\}$ be as before the set of $1$-vector generators for~$\cl_n$. Denote the $2^n \times 2^n$ matrix $\matrepr{L_{\be_{i}}}$ by $\E{n}{i}$ for all $1 \le i \le n$. Then, either by direct computation or from~(\ref{eq:example1a}), (\ref{eq:example1b1}), and (\ref{eq:example1c1}), we get the following matrices representing the $1$-vector generators in the left regular representation relative to~$\cb{B}$.
\begin{itemize}
\item[(i)] When $n=1,$ we get
$$
\E{1}{1} =\left[\begin{array}{c|c} 0 & \ve_1 J_1\\ \hline 1 & 0  \end{array}\right] = 
          \left[\begin{matrix} 0 & \ve_1\\1 & 0 \end{matrix}\right].
$$
\item[(ii)] When $n=2,$ we get
$$
\E{2}{1} = 2\E{1}{1}=\left[\begin{matrix} 0 &\ve_1 & 0 & 0\\
                                          1 & 0    & 0 & 0\\
                                          0 & 0    & 0 & \ve_1\\
                                          0 & 0    & 1 & 0  
                     \end{matrix}\right], 
\quad 
\E{2}{2} = \left[\begin{array}{c|c} 0 & \ve_2 J_2\\ \hline J_2 & 0\end{array}\right]=
                     \left[\begin{matrix} 0 & 0 & \ve_2 & 0\\
                                          0 & 0 & 0 & -\ve_2\\
                                          1 & 0  & 0 & 0\\
                                          0 & -1 & 0 & 0  
                     \end{matrix}\right].
$$  
\item[(iii)] When $n=3,$ we get
$$
\E{3}{1} = 2\E{2}{1}, \qquad 
\E{3}{2} = 2\E{2}{2}, \qquad
\E{3}{3} = \left[\begin{array}{c|c} 0 &  \ve_3 J_3 \\ \hline 
                                    J_3 &  0%
           \end{array}\right].
$$    
\item[(iv)] When $n=4,$ we get
$$
\E{4}{1} = 2\E{3}{1}, \qquad
\E{4}{2} = 2\E{3}{2}, \qquad
\E{4}{3} = 2\E{3}{3}, \qquad
\E{4}{4} = \left[\begin{array}{c|c}  0 &  \ve_4 J_4 \\ \hline 
                                    J_4 & 0%
           \end{array}\right].
$$
\end{itemize}
From the above, we can easily conjecture the form of all matrices $\E{n}{i}, \,1\le i \le n,$ for $n \ge 2$, via the following recursive relation:
\begin{equation}
\E{n}{1} = 2\E{n-1}{1},\;
\cdots, \;
\E{n}{n-1} = 2\E{n-1}{n-1},\;
\E{n}{n} = \left[\begin{array}{c|c}  0 &  \ve_n J_n \\ \hline 
                                     J_n & 0%
           \end{array}\right], \; 
\E{1}{1} = \left[\begin{matrix} 0 & \ve_1\\1 & 0 \end{matrix}\right].
\label{eq:recE}
\end{equation}
This recursive relation has been checked with \texttt{CLIFFORD} \cite{ablamfauser2009} for $n \le 6$. Finally, for each $n,$ these matrices of course satisfy the necessary relations:
$$
(\E{n}{i})^2 = \ve_i \mathbf{1} \quad \mbox{and} \quad \E{n}{i}\E{n}{j} = -\E{n}{j}\E{n}{i} 
                 \quad \mbox{for} \quad 1 \le i,j \le n \quad \mbox{and} \quad i\neq j
$$
where $\mathbf{1}$ is the $2^n \times 2^n$ identity matrix.
\label{ex:two}
\end{example}
Using now result~(\ref{eq:res5}), we can easily find matrices in the left regular representation $L$ that represent the dual $1$-vectors $\ta{T}{\ve}(\be_i)=\ve_i\be_i \in \cb{B}_1^\ast$.

\begin{example}
Let $\cb{B}_1^\ast =\{\ve_1\be_1,\ldots,\ve_n\be_n\}$ be as before the set of $1$-vector generators for~$\cl_n^\ast$. Denote the $2^n \times 2^n$ matrix $\matrepr{L_{\ta{T}{\ve}(\be_i)}}$ by $\F{n}{i}$ for all $1 \le i \le n$. Then, either by direct computation or from~(\ref{eq:res5}) and Example~\ref{ex:two}, we get the following matrices representing the $1$-vector generators in the left regular representation  relative to~$\cb{B}$.
\begin{itemize}
\item[(i)] When $n=1,$ we get
$$
\F{1}{1} =\left[\begin{array}{c|c} 0 & 1\\ \hline \ve_1 J_1 & 0  \end{array}\right] = 
          \left[\begin{matrix} 0 & 1\\\ve_1 & 0 \end{matrix}\right].
$$
\item[(ii)] When $n=2,$ we get
$$
\F{2}{1} = 2\F{1}{1}=\left[\begin{matrix} 0 & 1 & 0 & 0\\
                                          \ve_1 & 0    & 0 & 0\\
                                          0 & 0    & 0 & 1\\
                                          0 & 0    & \ve_1 & 0  
                     \end{matrix}\right], 
\quad 
\F{2}{2} = \left[\begin{array}{c|c} 0 & J_2\\ \hline \ve_2 J_2 & 0\end{array}\right]=
                     \left[\begin{matrix} 0 & 0 & 1 & 0\\
                                          0 & 0 & 0 & -1\\
                                          \ve_2 & 0  & 0 & 0\\
                                          0 & -\ve_2 & 0 & 0  
                     \end{matrix}\right],
$$  
\end{itemize}
and so on. In general, for an arbitrary $n$, we get the matrix transpose of matrices in~(\ref{eq:recE}):
\begin{equation}
\F{n}{1} = 2\F{n-1}{1},\;
\cdots, \;
\F{n}{n-1} = 2\F{n-1}{n-1},\;
\F{n}{n} = \left[\begin{array}{c|c}  0 &  J_n \\ \hline 
                                     \ve_n J_n & 0%
           \end{array}\right], \; 
\F{1}{1} = \left[\begin{matrix} 0 & 1\\\ve_1 & 0 \end{matrix}\right].
\label{eq:recF}
\end{equation}
\label{ex:three}
\end{example}

Given that for a specific signature $(p,q)$, the unity of the universal Clifford algebra $\cl_{p,q}$ has a finite decomposition into a sum of mutually annihilating primitive idempotents, the Clifford algebra itself is a direct sum of left (or right) irreducible $\cl_{p,q}$-modules of the form $\cl_{p,q}f$ for some \textit{primitive} idempotent~$f.$ Thus, in general, the left regular representation of $\cl_n$ is reducible and can be expressed as a direct sum of irreducible (spinor) representations.\footnote{This is a special feature of quadratic spaces. If a Clifford algebra is defined for a general non-symmetric bilinear form such a decomposition into irreducibles is no longer possible, see~\cite{fauserablam2000,fauser-vaccum}.} Obviously, since each spinor representation is the restriction of the left regular representation to a spinor module (minimal left ideal) of the form $\cl_{p,q}f$, relation~(\ref{eq:res4}) remains true for the spinor representation in the case when $f\cl_{p,q}f \cong \BR,$ that is, when $p-q = 0,1,2 \bmod 8$ and the algebra  $\cl_{p,q}$ is simple, i.e., when $p-q \neq 1 \bmod 4$. We illustrate this in the next example whereas the general case of spinor representation is treated in the next section.

\begin{example}
Consider $\cl_{1,1} \cong \Mat(2,\BR)$ with the decomposition of the unity $1 = f_1 + f_2$ where $f_1= \frac12(1+\be_{12})$ and $f_2= \frac12(1-\be_{12})$ are two primitive mutually annihilating idempotents. Define $S_1$ and $S_2$ as the following two minimal (spinor) left ideals $S_1 = \cl_{1,1}f_1 = \spn_\BR\{f_1, \be_1f_1 \}$ and $S_2 = \cl_{1,1}f_2 = \spn_\BR\{f_2, \be_1f_2 \}.$ Then, 
$\cl_{1,1} = S_1 \oplus S_2$ is the decomposition of $\cl_{1,1}$ as the left $\cl_{1,1}$-module. Let $u = a_1 1+a_2 \be_1+a_3 \be_2+a_4\be_{12} \in \cl_{1,1}$. The restrictions of the left regular representation $L_u |_{S_i}, i=1,2,$ are irreducible and faithful. In particular, we easily find that
$$
\matrepr{L_u |_{S_1}} = 
\left[\begin{matrix}a_1+a_4 & a_2-a_3\\a_2 + a_3 & a_1 - a_4 \end{matrix} \right], \quad
\matrepr{L_u |_{S_2}} = 
\left[\begin{matrix}a_1 -a_4 & a_2+a_3\\a_2 - a_3 & a_1 + a_4 \end{matrix} \right].
$$
Furthermore, as expected, the left regular representation $L_u$ decomposes into two irreducible components:
$$
\matrepr{L_u} =  \matrepr{L_u |_{S_1}} \oplus  \matrepr{L_u |_{S_2}} = 
\left[\begin{array}{c|c} \matrepr{L_u |_{S_1}} &  0 \\ \hline 
                         0 & \matrepr{L_u |_{S_2}} \end{array} \right].
$$
Using result~(\ref{eq:res5}), we confirm by direct computation that matrices $\matrepr{L_u |_{S_1}}$ and $\matrepr{L_{\ta{T}{\ve}(u)} |_{S_1}}$ are related via the transposition since
$$
\matrepr{L_{\ta{T}{\ve}(u)} |_{S_1}} = 
\left[\begin{matrix}a_1+a_4 & a_2+a_3\\a_2 - a_3 & a_1 - a_4 \end{matrix} \right]
$$
where $\ta{T}{\ve}(u) = a_1 1 +a_2 \be_1-a_3 \be_2+a_4 \be_{12}$.
\label{ex:four}
\end{example}

We repeat this last example for Clifford algebras of signatures $(4,2)$ and $(1,3)$ which are important in describing the symmetry group of the hydrogen atom, the AdS space, and the symmetries of the Minkowski space, respectively.

\begin{example}
We consider $\cl_{4,2} \cong \Mat(8,\BR)$ with the decomposition of unity $1 = \sum_{i=1}^8 f_i$ where we let $f_1 = \frac18 (1+\be_1)(1+\be_{35})(1+\be_{46})$ and the remaining seven primitive idempotents are obtained from $f_1$ by independently changing the plus signs to minus signs. Then, the following list of eight basis monomials
$$
\cb{M} = [1,\be_2,\be_3,\be_4,\be_{23},\be_{24},\be_{34},\be_{234}]
$$
can be selected to generate spinor ideals 
$$
S_i = \cl_{4,2}f_i = \spn_\BR \{1f_i,\be_2 f_i,\be_3f_i,\be_4f_i,\be_{23}f_i,\be_{24}f_i,
\be_{34}f_i,\be_{234}f_i \}
$$ 
for $i=1,\ldots,8.$ Then, $\cl_{4,2} = \oplus_{i=1}^8 S_i$ is the direct decomposition of $\cl_{4,2}$ as the left $\cl_{4,2}$-module. Rather than finding a matrix of an arbitrary element 
$u \in \cl_{4,2}$ in the left regular representation $L_u: S_i \rightarrow S_i$ for some $i$, due to the linearity of the representation we find matrices $\matrepr{L_{\be_k}}$ for $k=1,\ldots,8.$ Knowing these matrices is sufficient to find $\matrepr{L_u}$ for any $u \in \cl_{4,2}.$ Then, we find matrices $\matrepr{L_{ \ta{T}{\ve}(\be_k)}} = \matrepr{L_{ t_\ve(\be_k)}}$  and realize that they are related to the former via the transposition.

Recall, that $t_\ve(\be_k) = \be_k$ for $k=1,\ldots,4$ and $t_\ve(\be_k) = -\be_k$ for $k=5,6$. This means that matrices $\matrepr{L_{\be_k}}$ and $\matrepr{L_{ t_\ve(\be_k)}}$ for $k=1,\ldots,5$
are obviously identical and symmetric, whereas these matrices for $k=5,6$ will differ by an overall sign. Then, we obtain:  
\begin{multline*}
\matrepr{L_{\be_1}},\matrepr{L_{ t_\ve(\be_1)}} = \\ 
\left[ 
{\begin{matrix}
1 & 0 & 0 & 0 & 0 & 0 & 0 & 0 \\
0 & -1 & 0 & 0 & 0 & 0 & 0 & 0 \\
0 & 0 & -1 & 0 & 0 & 0 & 0 & 0 \\
0 & 0 & 0 & -1 & 0 & 0 & 0 & 0 \\
0 & 0 & 0 & 0 & 1 & 0 & 0 & 0 \\
0 & 0 & 0 & 0 & 0 & 1 & 0 & 0 \\
0 & 0 & 0 & 0 & 0 & 0 & 1 & 0 \\
0 & 0 & 0 & 0 & 0 & 0 & 0 & -1
\end{matrix}}
 \right] , \, \left[ 
{\begin{matrix}
1 & 0 & 0 & 0 & 0 & 0 & 0 & 0 \\
0 & -1 & 0 & 0 & 0 & 0 & 0 & 0 \\
0 & 0 & -1 & 0 & 0 & 0 & 0 & 0 \\
0 & 0 & 0 & -1 & 0 & 0 & 0 & 0 \\
0 & 0 & 0 & 0 & 1 & 0 & 0 & 0 \\
0 & 0 & 0 & 0 & 0 & 1 & 0 & 0 \\
0 & 0 & 0 & 0 & 0 & 0 & 1 & 0 \\
0 & 0 & 0 & 0 & 0 & 0 & 0 & -1
\end{matrix}}
 \right] 
\end{multline*}
\begin{multline*}
\matrepr{L_{\be_2}},\matrepr{L_{ t_\ve(\be_2)}} = \\ 
\left[ 
{\begin{matrix}
0 & 1 & 0 & 0 & 0 & 0 & 0 & 0 \\
1 & 0 & 0 & 0 & 0 & 0 & 0 & 0 \\
0 & 0 & 0 & 0 & 1 & 0 & 0 & 0 \\
0 & 0 & 0 & 0 & 0 & 1 & 0 & 0 \\
0 & 0 & 1 & 0 & 0 & 0 & 0 & 0 \\
0 & 0 & 0 & 1 & 0 & 0 & 0 & 0 \\
0 & 0 & 0 & 0 & 0 & 0 & 0 & 1 \\
0 & 0 & 0 & 0 & 0 & 0 & 1 & 0
\end{matrix}}
 \right] , \, \left[ 
{\begin{matrix}
0 & 1 & 0 & 0 & 0 & 0 & 0 & 0 \\
1 & 0 & 0 & 0 & 0 & 0 & 0 & 0 \\
0 & 0 & 0 & 0 & 1 & 0 & 0 & 0 \\
0 & 0 & 0 & 0 & 0 & 1 & 0 & 0 \\
0 & 0 & 1 & 0 & 0 & 0 & 0 & 0 \\
0 & 0 & 0 & 1 & 0 & 0 & 0 & 0 \\
0 & 0 & 0 & 0 & 0 & 0 & 0 & 1 \\
0 & 0 & 0 & 0 & 0 & 0 & 1 & 0
\end{matrix}}
 \right] 
\end{multline*}
\begin{multline*}
\matrepr{L_{\be_3}},\matrepr{L_{ t_\ve(\be_3)}} = \\ 
\left[ 
{\begin{matrix}
0 & 0 & 1 & 0 & 0 & 0 & 0 & 0 \\
0 & 0 & 0 & 0 & -1 & 0 & 0 & 0 \\
1 & 0 & 0 & 0 & 0 & 0 & 0 & 0 \\
0 & 0 & 0 & 0 & 0 & 0 & 1 & 0 \\
0 & -1 & 0 & 0 & 0 & 0 & 0 & 0 \\
0 & 0 & 0 & 0 & 0 & 0 & 0 & -1 \\
0 & 0 & 0 & 1 & 0 & 0 & 0 & 0 \\
0 & 0 & 0 & 0 & 0 & -1 & 0 & 0
\end{matrix}}
 \right] , \, \left[ 
{\begin{matrix}
0 & 0 & 1 & 0 & 0 & 0 & 0 & 0 \\
0 & 0 & 0 & 0 & -1 & 0 & 0 & 0 \\
1 & 0 & 0 & 0 & 0 & 0 & 0 & 0 \\
0 & 0 & 0 & 0 & 0 & 0 & 1 & 0 \\
0 & -1 & 0 & 0 & 0 & 0 & 0 & 0 \\
0 & 0 & 0 & 0 & 0 & 0 & 0 & -1 \\
0 & 0 & 0 & 1 & 0 & 0 & 0 & 0 \\
0 & 0 & 0 & 0 & 0 & -1 & 0 & 0
\end{matrix}}
 \right] 
\end{multline*}
\begin{multline*}
\matrepr{L_{\be_4}},\matrepr{L_{ t_\ve(\be_4)}} = \\ 
\left[ 
{\begin{matrix}
0 & 0 & 0 & 1 & 0 & 0 & 0 & 0 \\
0 & 0 & 0 & 0 & 0 & -1 & 0 & 0 \\
0 & 0 & 0 & 0 & 0 & 0 & -1 & 0 \\
1 & 0 & 0 & 0 & 0 & 0 & 0 & 0 \\
0 & 0 & 0 & 0 & 0 & 0 & 0 & 1 \\
0 & -1 & 0 & 0 & 0 & 0 & 0 & 0 \\
0 & 0 & -1 & 0 & 0 & 0 & 0 & 0 \\
0 & 0 & 0 & 0 & 1 & 0 & 0 & 0
\end{matrix}}
 \right] , \, \left[ 
{\begin{matrix}
0 & 0 & 0 & 1 & 0 & 0 & 0 & 0 \\
0 & 0 & 0 & 0 & 0 & -1 & 0 & 0 \\
0 & 0 & 0 & 0 & 0 & 0 & -1 & 0 \\
1 & 0 & 0 & 0 & 0 & 0 & 0 & 0 \\
0 & 0 & 0 & 0 & 0 & 0 & 0 & 1 \\
0 & -1 & 0 & 0 & 0 & 0 & 0 & 0 \\
0 & 0 & -1 & 0 & 0 & 0 & 0 & 0 \\
0 & 0 & 0 & 0 & 1 & 0 & 0 & 0
\end{matrix}}
 \right] 
\end{multline*}
\begin{multline*}
\matrepr{L_{\be_5}},\matrepr{L_{ t_\ve(\be_5)}} = \\ 
\left[ 
{\begin{matrix}
0 & 0 & -1 & 0 & 0 & 0 & 0 & 0 \\
0 & 0 & 0 & 0 & 1 & 0 & 0 & 0 \\
1 & 0 & 0 & 0 & 0 & 0 & 0 & 0 \\
0 & 0 & 0 & 0 & 0 & 0 & -1 & 0 \\
0 & -1 & 0 & 0 & 0 & 0 & 0 & 0 \\
0 & 0 & 0 & 0 & 0 & 0 & 0 & 1 \\
0 & 0 & 0 & 1 & 0 & 0 & 0 & 0 \\
0 & 0 & 0 & 0 & 0 & -1 & 0 & 0
\end{matrix}}
 \right] , \, \left[ 
{\begin{matrix}
0 & 0 & 1 & 0 & 0 & 0 & 0 & 0 \\
0 & 0 & 0 & 0 & -1 & 0 & 0 & 0 \\
-1 & 0 & 0 & 0 & 0 & 0 & 0 & 0 \\
0 & 0 & 0 & 0 & 0 & 0 & 1 & 0 \\
0 & 1 & 0 & 0 & 0 & 0 & 0 & 0 \\
0 & 0 & 0 & 0 & 0 & 0 & 0 & -1 \\
0 & 0 & 0 & -1 & 0 & 0 & 0 & 0 \\
0 & 0 & 0 & 0 & 0 & 1 & 0 & 0
\end{matrix}}
 \right] 
\end{multline*}
\begin{multline*}
\matrepr{L_{\be_6}},\matrepr{L_{ t_\ve(\be_6)}} = \\ 
\left[ 
{\begin{matrix}
0 & 0 & 0 & -1 & 0 & 0 & 0 & 0 \\
0 & 0 & 0 & 0 & 0 & 1 & 0 & 0 \\
0 & 0 & 0 & 0 & 0 & 0 & 1 & 0 \\
1 & 0 & 0 & 0 & 0 & 0 & 0 & 0 \\
0 & 0 & 0 & 0 & 0 & 0 & 0 & -1 \\
0 & -1 & 0 & 0 & 0 & 0 & 0 & 0 \\
0 & 0 & -1 & 0 & 0 & 0 & 0 & 0 \\
0 & 0 & 0 & 0 & 1 & 0 & 0 & 0
\end{matrix}}
 \right] , \, \left[ 
{\begin{matrix}
0 & 0 & 0 & 1 & 0 & 0 & 0 & 0 \\
0 & 0 & 0 & 0 & 0 & -1 & 0 & 0 \\
0 & 0 & 0 & 0 & 0 & 0 & -1 & 0 \\
-1 & 0 & 0 & 0 & 0 & 0 & 0 & 0 \\
0 & 0 & 0 & 0 & 0 & 0 & 0 & 1 \\
0 & 1 & 0 & 0 & 0 & 0 & 0 & 0 \\
0 & 0 & 1 & 0 & 0 & 0 & 0 & 0 \\
0 & 0 & 0 & 0 & -1 & 0 & 0 & 0
\end{matrix}}
 \right] 
\end{multline*}
\end{example}
\normalsize

\section{Conclusions}
Anti-involutions, such as the transposition and Hermitian conjugation in linear algebra, play an important role in many constructions and classification arguments on linear spaces. Such involutive maps are well studied and well understood. A further complication is added if one considers quadratic vector spaces $V \in \mathbf{QVect}_\BR$, where on every vector space a (non degenerate) quadratic form is available $Q : V \longrightarrow \mathbb{R}$. Such spaces come thereby equipped with a (real) Clifford algebra $\cl(\mathbb{R}^{p+q},Q_{p,q})$. Nondegenerate quadratic forms over the reals are classified by their signature $\ve$ with $\ve_i \in \mathbb{R}^*/\mathbb{R}^2$. Minimal faithful representations of the real Clifford algebras are realized in spinor spaces over a (double) field $\BK$. Via the spinor spaces, Clifford algebras are isomorphic to (full) matrix algebras $\mathrm{Mat}(r_{p,q}\times r_{p,q},\BK)$, where $r_{p,q}$ is the dimension given by the Radon-Hurwitz number and $\BK$ is a (double, possibly skew) real field. This setting leads to the question how the natural transposition anti-automorphism on the matrix representation can be pulled back into the Clifford algebra framework. Intriguingly enough, the transposition splits into two parts, the matrix transposition on the matrix representation on the spinor modules and also, dependent on the signature (classification) of the quadratic form, an involution (conjugation) or anti-involution (reversion for the skew field) on the base field $\BK$ of the spinor modules. In this first paper we have thus established an extension of the reversion anti-involution to any signature.

A quadratic form $Q$ on a real vector space can be described via a polarization
\begin{align}
\label{polarization}
\xymatrix{
Q~:~ V \ar@{->}[r]^{\Delta}
& V\otimes V 
\ar@{->}[r]^{B}
& \mathbb{R}
}
\end{align}
where $\Delta : V \rightarrow V\otimes V$ is the diagonal map, and $B : V\otimes V \rightarrow \mathbb{R}$ is a (necessarily symmetric) bilinear form (here assumed to be non-degenerate). Since $(V,Q)$ is described by a (real) symmetric matrix, there exists an orthonormal basis diagonalizing  $B = S^t\circ D \circ S$, such that for the eigenvectors $\{v_i\}$ of $B$ one finds $e_i = S v_i$ for the canonical basis $\{e_i\}$ of $V$.

If one is not only interested in the points of the space $V$, but also in linear subspaces, one deals with the Grassmann algebra $\bigwedge V$ or antisymmetric forms. The Clifford algebra $\cl(V,B)$ is a subalgebra of the endomorphisms algebra of the Grassmann algebra $\cl(V,B) \subset \mathrm{End}(\bigwedge V) \cong \bigwedge V  \otimes \bigwedge V^*$. The last isomorphism relies on the dual isomorphism $* : V \rightarrow V^*$ canonically obtained via the dual basis $\{e^*_i\}$ with $e^*_i(e_j)=\delta_{i,j}$. 

However, the (non-degenerate) bilinear form $B$ induces a second identification of $V$ and $V^*$ via $\flat : V \rightarrow V^*$ and $V^* \ni v^\flat = B(v,\cdot)$, of course depending on the (class of) the quadratic form $Q$ via its polarization $B$. Note that the Clifford algebra extends the bilinear form $B$ to a bilinear form $B^\wedge : \bigwedge V\otimes \bigwedge V \rightarrow \mathbb{R}$. This allows to define a dual isomorphism on the underlying vector space (induced by $\flat$ extended to $\bigwedge V$) which we called dual Clifford algebra $\cl^*(V,B)$. The map implementing this dual isomorphism in the chosen orthonormal basis $\{e_i\}$ is just the map  $T^{\tilde{~}}_\varepsilon \equiv \mathtt{tp}$, which depends on the signature $(p,q)$ of the quadratic form $Q$ and which has been studied in this paper. This map is an anti-involution, and the paper showed at length that it induces the transposition on the matrix representations 
$\mathrm{Mat}(r_{p,q}\times r_{p,q},\BK)$ in case when $\BK \cong \BR$, i.e., when $(p-q)=0,1,2 \mod 8.$ In  \cite{part2} we study all remaining cases (including semisimple algebras) and show that $\tp$ induces also Hermitian complex conjugation or Hermitian quaternionic conjugation in spinor representation. 

In \cite{part2}, we will study another aspect, namely the matter of stabilizer groups of primitive idempotents in the Clifford algebra. Such stabilizer groups allow to construct bases for the isomorphic spinor ideals and are hence important to provide isomorphisms between them. Moreover, such groups are used to model physical theories. 

In the following, we focus on possibilities opened by the present work for further applications and research.
\begin{list}{$\bullet$}{\setlength{\leftmargin}{.15in}
                        \setlength{\itemsep}{1.0ex plus 0.2ex minus 0ex}}
\item  Transposition is essential for singular value decomposition, which opens a way to study spectral properties of intertwining maps $U : V \rightarrow W$ in the category 
$\mathbf{FdVect}_\BR$, \emph{including} their implications on spaces of linear subspaces (Grassmannians), see~\cite{ablamowicz2005,fauser2006}.
\item  Spinor spaces emerge as representation spaces of Clifford algebras, and hence bear an important structure for such frameworks as particle physics or quantum computing. Important, w.r.t. the present work, is how the quadratic form $Q$ induces a spinor space inner product. The paper elucidates this in the real case and shows how the signature $(p,q)$ is relevant for this purpose. While
this was well known, our work exhibits a complete classification of all cases for all signatures (possibly for the first time). This enables one to study the relation between null spaces (and quadrics related to them) and the respective spinor inner products. In \cite{part2} we show, that the Hermitian complex conjugation and the Hermitian quaternionic conjugations emerge from the transposition of the
real Clifford algebra. This may help one understand categorical frameworks in the foundations of quantum computing, where such structures are derived from so-called dagger-structures, see \cite{vicary}.
\item  We have avoided stating general results about representations of Clifford algebras $\cl_n$ with $n>9$. However, our result show clearly recursive structures which have to be exploited. This is usually done by using so called periodicity theorems, like $\cl_{p,q} = \cl_{q-1,p-1}\otimes \cl_{1,1}$ or the famous $\bmod \, 8$ periodicity. While such periodicity theorems are readily
obtained in the case of symmetric bilinear forms, its not clear how they look in the general case~\cite{fauserablam2000}. Finally, one would like to keep also the link to the stabilizer groups, and this work is beyond the aim of the present paper.
\item  The construction of primitive idempotents, done in the second paper \cite{part2}, which project out the minimal left/right ideals, parallels the construction of Young idempotents in the more general setting of representation theory. A further development of the present work should clarify the relation to this construction and to the theory of highest weights. This is needed to formalize the problem described in the previous paragraph. 
\item Clifford algebras are related to spin and pin groups. These groups emerge as groups of units subjected to further algebraic conditions~\cite{salingaros1,salingaros2,salingaros3,varlamov}, such
as the requirement that the algebraic norm $x{\tilde{x}}=1$ for invertible $x\in \cl(V,B)$. Having alternative anti-involutions allows obviously to define new types of such groups via $x\tp(x)=1$, see also work of Chisholm-Farwell~\cite{ChisholmFarwell} and Schmeikal~\cite{schmeikal1} and references therein.
\item Group algebras, such as the ones obtained by the Salingaros vee groups, are usually Frobenius algebras. That is, there exists an isomorphism between left $G$-modules and right $G$-modules. In other words, the parastrophic matrix, a contraction of the group multiplication table (a rank $3$ matrix) with a vector (yielding the parastrophic matrix of rank $2$), is invertible for at least one non-zero contraction vector. Clifford algebras are very closely related to Frobenius algebras. The present work opens the opportunity to investigate this relationship making it more clear what kind of additional
structure a Frobenius algebra has to have to be Cliffordian. This opens way to many important problems related to the cohomology of flag manifolds of the Grassmannians (Schubert calculus, Schubert and Grothendieck polynomials~\cite{lenart}) to solutions of systems having Markovian traces (like the Jones polynomial) and Yang-Baxter equations and ring theoretic questions, see~\cite{kadison}. This research is related to the next point.
\item The present work prompts a question, what would happen if we would not use the diagonal map in (\ref{polarization}), but a general coproduct, a question already asked by 
Helmstetter~\cite{helmstetterMonoids,helmstetterArbQuadForms,helmstetterCliffordWeyl}. One is lead to a category of vector spaces with arbitrary not necessarily symmetric bilinear forms. One is able to construct Clifford algebras for these spaces either, and they were baptized Quantum Clifford Algebras in~\cite{fauser-habilitation}. The antisymmetric part introduces a twist in the dual isomorphism used to identify the Grassmann algebras $\bigwedge V$ and $\bigwedge V^*$ making this a $\mathbb{Z}_2$-graded morphism but inhomogeneous otherwise. This can be seen as a trivial Cliffordization and is useful, e.g., in  quantum field theory~\cite{fauser-wick-paper}. The antisymmetric part in the bilinear form can be used to model $q$-deformations and $q$-spinors, 
see~\cite{fauser-JPhysA,fauser-groupXXII,ablamowicz-fauser}. However, the present work allows now a much more concise study of such generalized framework. For example, one can re-introduce recursively a $\BZ$-grading along the lines discussed  in~\cite[Section8B]{hahn}. In particular, we expect to find generalizations of the Salingaros vee group algebras to Salingaros vee Hopf algebras. Such Hopf algebras can be related to anyons, but may also be important as spinor like covers of other well known Hopf algebras emerging from bilinear generalizations of quadratic spaces. 
\end{list}

\section*{Acknowledgments}
Bertfried Fauser would like to thank the Emmy-Noether Zentrum for Algebra at the University of Erlangen for their hospitality during his stay at the Zentrum in 2008/2009.
\appendix
\section{Code of the transposition procedure $\mathtt{tp}$}
\label{AppendA}

Here is the code of the procedure ${\tt tp}$ that accomplishes the `transposition' anti-involution $\tp$  in $\cl_{p,q}$. This procedure requires $\mathtt{CLIFFORD}$ package~\cite{ablamfauser2009}. It was first presented in~\cite{ablamowicz2005}.

{\small
\begin{tabbing}
\; \=\; \=\; \=\; \=\; \=\; \=\; \=\; \=\;\kill
\keyw{tp}\texttt{:=proc(xx::\{clibasmon,climon,clipolynom\}) local x,L,p,co,u:}\\
\> \texttt{x:=Clifford:-displayid(xx):}\\
\> \keyw{if} \texttt{type(x,clibasmon)} \keyw{then}\\
\>\>  \keyw{if} \texttt{x=Id} \keyw{then} \keyw{return} \texttt{Id} \keyw{else}\\ 
\>\>\>  \texttt{p:=op(Clifford:-cliterms(x));}\\
\>\>\>  \texttt{L:=Clifford:-extract(p,'integers'):} \texttt{\#list L of indices}\\
\>\>\>  \texttt{L:=[seq(L[nops(L)-i+1],i=1..nops(L))];} \texttt{\#reversed list L}\\
\>\>\>  \texttt{u:=Clifford:-cmul(seq(B[L[i],L[i]]*cat(e,L[i]),i=1..nops(L)));}\\
\>\>\>  \keyw{return} \texttt{Clifford:-reorder(u)}\\
\>\> \keyw{end if}:\\
\> \keyw{elif} \texttt{type(x,climon)} \keyw{then}\\
\>\> \texttt{p:=op(Clifford:-cliterms(x)):}\\
\>\> \texttt{co:=coeff(x,p);}\\
\>\> \keyw{return} \texttt{co*}\keyw{procname}\texttt{(p)}\\
\> \keyw{elif} \texttt{type(x,clipolynom)} \keyw{then}\\
\>\> \keyw{return} \texttt{Clifford:-clilinear(x,}\keyw{procname}\texttt{)}\\
\> \keyw{end if}:\\  
\keyw{end tp}:\\
\end{tabbing}
} 

\section{Proof of Lemma~\ref{L:lemma1}}
\label{AppendB}
\begin{proof}
(i) This identity follows easily from the associativity of the Clifford product, the properties of the reversion, and the fact that $\be_i \be_i = \ve_i$ for any $\be_i \in \cb{B}_1.$

(ii) If $\iu \neq \ju$ then either $|\iu|\neq |\ju|$ or the lists are of the same length but their contents are different. In the first case, $\be_\iu$ and $\be_\ju$ are orthogonal because their grades are different. In the second case, $\exists i_t \in \iu$ such that $i_t \notin \ju$. This means that the $t$-th  row in the determinant~(\ref{eq:detmatrix}) is zero.

Let $\be_1 \in V$ be such that $B(\be_1,\be_1)=\be_1 \JJ_B \be_1= \ve_1 = \pm 1$. Then, 
\begin{align*}
\lang 1,1 \rang 
= \lang \ve_1 ,\ve_1 \rang = \lang \be_1 \JJ_B \be_1,\ve_1  \rang = \lang \be_1, \tilde{\be}_1 \wedge \ve_1 \rang 
= \ve_1 \lang \be_1,\be_1 \rang = \ve_1^2 = 1.
\end{align*}

Assume now that $\iu = \ju$ as ordered lists and  
$
\be_\iu = \be_\ju = \be_{i_1}\be_{i_2} \cdots \be_{i_s} = 
\be_{i_1}\wedge \be_{i_2}\wedge  \cdots \wedge \be_{i_s} 
$ 
where $s=|\iu|\ge 1$. Then, $\lang \be_{i_1}, \be_{i_1}\rang = \ve_{i_1}.$ So, by induction on~$s$, assume that the third case formula in~(\ref{eq:idents2}) is true for $s-1$ and compute:
\begin{align}
\lang \be_\iu,\be_\iu \rang%
&= \lang \be_{i_1}\wedge \be_{i_2}\wedge  \cdots \wedge \be_{i_s},
   \be_{i_1}\wedge \be_{i_2}\wedge  \cdots \wedge \be_{i_s}\rang  \notag\\
&= \lang \be_{i_1} \JJ_B (\be_{i_1}\wedge \be_{i_2}\wedge  \cdots \wedge \be_{i_s}),
   \be_{i_2}\wedge  \cdots \wedge \be_{i_s} \rang \notag\\            
&= \lang  (\be_{i_1} \JJ_B \be_{i_1}) \wedge  (\be_{i_2}\wedge  \cdots \wedge \be_{i_s}),
   \be_{i_2}\wedge  \cdots \wedge \be_{i_s} \rang  \notag\\
&= \ve_{i_1} \lang \be_{i_2}\wedge  \cdots \wedge \be_{i_s},
   \be_{i_2}\wedge  \cdots \wedge \be_{i_s} \rang = \ve_{i_1}\ve_{i_2} \cdots \ve_{i_s}  
\end{align} 
where in the third line we have used the fact that $\be_{i_1} \JJ_B (\be_{i_2}\wedge  \cdots \wedge \be_{i_s})=0$ since for every $i_t >i_1$ we have $\be_{i_1} \JJ_B \be_{i_t} = 0$.

(iii) Observe first that due to the orthonormality of $\cb{B}$ the product of two basis monomials $\be_\ju,\be_\ku$ is, up to a scalar coefficient, another basis monomial namely
\begin{gather}
\be_\ju \be_\ku = \pm \be_\lu \quad \mbox{where} \quad 
                  \lu = (\ju \cup \ku ) \setminus (\ju \cap \ku ).
\label{eq:monomials}
\end{gather}  
Thus, by part (ii), the inner product $\lang \be_\iu, \be_\ju\be_\ku \rang = \pm \lang \be_\iu, \be_\lu \rang$ is nonzero if and only if the index lists $\iu$ and $\lu$ sorted by $<$ are identical. This means, the product is non zero if and only if $\be_\ju \be_\ku = \pm \be_\iu$ which in turn is true if and only if $\be_\ju \be_\iu = \pm \be_\ku$. Therefore, 
$\lang \be_\iu, \be_\ju\be_\ku \rang =  \lang \tilde{\be}_\ju \be_\iu, \be_\ku \rang = 0$ if and only if $\be_\iu \neq \pm \be_\ju \be_\ku.$

To prove the identity 
\begin{equation}
\lang \be_\iu, \be_\ju\be_\ku \rang = \lang \tilde{\be}_\ju \be_\iu, \be_\ku \rang
\label{eq:theidentity}
\end{equation}
when $\be_\iu = \pm \be_\ju \be_\ku$, we proceed by induction on the grade $s=|\ju|$ of $\be_\ju.$ We establish first two base cases for $s=0$ and $s=1.$ Certainly, when $\be_\ju = 1$, then 
$\lang \be_\iu, 1\be_\ku \rang = \lang \tilde{1} \be_\iu, \be_\ku \rang 
                                   =  \lang 1 \be_\iu, \be_\ku \rang$. 

Let $\be_\ju = \be_{j_1} \in \cb{B}_1$.  Then, for any 
$\be_\ku \in \cb{B}$, we have from~(\ref{eq:xu}): 
\begin{gather}
\be_\ju \be_\ku = \be_{j_1} \be_\ku = 
\underbrace{\be_{j_1} \JJ_B \be_\ku}_{\mathrm{grade} = |\ku|-1} + 
\underbrace{\be_{j_1} \wedge \be_\ku}_{\mathrm{grade} = |\ku|+1}.     
\label{eq:j1k}
\end{gather}
Therefore, 
\begin{gather}
\lang \be_\iu, \be_\ju\be_\ku \rang =  
\lang \be_\iu, \underbrace{\be_{j_1} \JJ_B \be_\ku}_{\mathrm{grade} = |\ku|-1}\rang +
\lang \be_\iu, \underbrace{\be_{j_1} \wedge \be_\ku}_{\mathrm{grade} = |\ku|+1}\rang,
\label{eq:split}
\end{gather} 
so we consider two cases remembering that $\lang \be_\iu, \be_\ju\be_\ku \rang \neq 0$.\\

\underline{Case~1} ($|\iu| = |\ku|-1$) In this case, $\lang \be_\iu, \be_{j_1} \wedge \be_\ku \rang  = 0$ since the grades of the arguments are different. Also, $j_1 \in \ku$ assuring 
$\be_{j_1} \JJ_B \be_\ku \neq 0$, and $j_1 \notin \iu$ as then $\lang \be_\iu, \be_{j_1} \JJ_B \be_\ku \rang $ would be zero otherwise. Thus, the l.h.s. of~(\ref{eq:theidentity}) is 
$\lang \be_\iu, \be_\ju\be_\ku \rang =  \lang \be_\iu, \be_{j_1} \JJ_B \be_\ku \rang$. 

Using the symmetry of the inner product twice together with the definition of the Clifford product~(\ref{eq:xu}) and Lounesto's duality~(\ref{eq:Lduality}), we compute the r.h.s of~(\ref{eq:theidentity}) as follows:
\begin{align*}
\lang \tilde{\be}_\ju \be_\iu, \be_\ku \rang 
&= \lang \tilde{\be}_{j_1} \be_\iu, \be_\ku \rang =
\lang \be_{j_1} \be_\iu, \be_\ku \rang =
\lang \underbrace{\be_{j_1} \JJ_B \be_\iu}_{\mathrm{grade} = |\iu|-1} + 
\underbrace{\be_{j_1} \wedge \be_\iu}_{\mathrm{grade} = |\iu|+1}, \be_\ku \rang\\
&= \lang \be_{j_1} \wedge \be_\iu, \be_\ku  \rang =
   \lang \be_\ku, \be_{j_1} \wedge \be_\iu \rang  = 
   \lang \be_{j_1} \JJ_B \be_\ku, \be_\iu \rang\\
&= \lang \be_\iu, \be_{j_1} \JJ_B \be_\ku \rang =
   \lang \be_\iu, \be_{j_1} \JJ_B \be_\ku + \be_{j_1} \wedge \be_\ku \rang =
   \lang \be_\iu, \be_\ju \be_\ku \rang 
\end{align*}
since $|\ku|=|\iu|+1$ and $j_1 \notin \iu$ so $\be_{j_1} \JJ_B \be_\iu =0$.\\

\underline{Case~2} ($|\iu| = |\ku|+1$) In this case, $\lang \be_\iu, \be_{j_1} \JJ_B \be_\ku \rang  = 0$ since the grades of the arguments are different. Also, $j_1 \notin \ku$ as then 
$\lang \be_\iu, \be_{j_1} \wedge \be_\ku \rang$ would be zero otherwise, and $j_1 \in \iu$ assuring that $\be_{j_1} \wedge \be_\iu = \tilde{\be}_{j_1} \wedge \be_\iu = 0.$ Thus, the l.h.s. 
of~(\ref{eq:theidentity}) equals:
\begin{align*}
\lang \be_\iu, \be_\ju\be_\ku \rang 
&= 
\lang \be_\iu, \underbrace{\be_{j_1} \wedge \be_\ku}_{\mathrm{grade} = |\ku|+1} \rang =
\lang \be_\iu, \underbrace{\tilde{\be}_{j_1} \wedge \be_\ku}_{\mathrm{grade} = |\ku|+1}\rang
= \lang \tilde{\be}_{j_1} \JJ_B \be_\iu, \be_\ku \rang\\ 
&= 
\lang \tilde{\be}_{j_1} \JJ_B \be_\iu + \tilde{\be}_{j_1} \wedge \be_\iu, \be_\ku \rang =
\lang \tilde{\be}_{j_1} \be_\iu, \be_\ku \rang  =
\lang \tilde{\be}_\ju \be_\iu, \be_\ku \rang  
\end{align*} 
which is the r.h.s. of~(\ref{eq:theidentity}).

Having established the two base cases, let us now assume that the identity~(\ref{eq:theidentity}) is true for any $\be_\ju$ such that $0 \leq |\ju| \leq s-1$ and for any $\be_\iu,\be_\ku \in \cb{B}$ as long as $\be_\iu = \pm \be_\ju\be_\ku$. We will now show that it is also true when $|\ju|=s$.

Let $\be_\ju = \be_{j_1} \be_{j_2} \cdots \be_{j_{s-1}} \be_{j_s}$. Then, 
\begin{align*}
\hspace*{4ex}\lang \be_\iu,\be_\ju \be_\ku \rang 
&=%
\lang \be_\iu,\be_{j_1} \be_{j_2} \cdots \be_{j_{s-1}} \be_{j_s} \be_\ku \rang =
\lang \be_\iu,(\be_{j_1} \be_{j_2} \cdots \be_{j_{s-1}})(\be_{j_s} \be_\ku) \rang \\
&=%
\lang (\be_{j_1} \be_{j_2} \cdots \be_{j_{s-1}})\tilde{} \;\be_\iu,\be_{j_s} \be_\ku  \rang = 
\lang (\be_{j_s})\tilde{}\; (\be_{j_1} \be_{j_2} \cdots \be_{j_{s-1}})\tilde{} \;\be_\iu,
\be_\ku  \rang\\
&=%
\lang (\be_{j_1} \be_{j_2} \cdots \be_{j_{s-1}} \be_{j_s})\tilde{}\; \be_\iu,\be_\ku \rang =
\lang \tilde{\be}_\ju \be_\iu, \be_\ku \rang. 
\end{align*}
Thus, we have proven identity~(\ref{eq:theidentity}) for any basis monomials
$\be_\iu,\be_\ju, \be_\ku \in \cb{B}.$

(iv) This result follows easily from bilinearity of the inner product, properties of the Clifford product, part (iii) of this Lemma, and the linearity of the reversion anti-involution.

Let $u = \sum_{\iu \in 2^{[n]}} u_\iu \be_\iu$, $v = \sum_{\ju \in 2^{[n]}} v_\ju \be_\ju$, and $w = \sum_{\ku \in 2^{[n]}} w_\ku \be_\ku$ be any basis monomials in~$\cb{B}$. Then,
$$
\lang u, vw \rang = 
\sum_{\iu,\ju,\ku} u_\iu v_\ju w_\ku \lang \be_\iu, \be_\ju \be_\ku \rang =
\sum_{\iu,\ju,\ku} u_\iu v_\ju w_\ku \lang \tilde{\be}_\ju \be_\iu, \be_\ku \rang =
\lang \tilde{v}u,w \rang.  
$$ 
\renewcommand{\qedsymbol}{}
\end{proof}



\begin{thebibliography}{9}
\bibitem{ablamowicz2005}  R.~Ab\l amowicz, \textit{Computations with Clifford and Grassmann Algebras}. Talk presented at ICCA 7: The 7th Conference on Clifford Algebras and their Applications, May 19-29, 2005, Toulouse, France. Adv. in Appl. Clifford Algebras, \textbf{19}, No. 3--4 (2009) 499--545
\bibitem{ablamowicz2009a} R.~Ab\l amowicz, \textit{Computation of Non-Commutative Gr\"{o}bner Bases in Grassmann and Clifford Algebras}. Talk presented at ICCA 8: The 8th Conference on Clifford Algebras and their Applications, 26-30 May 2008, Campinas, Brazil. To appear in Adv. in Appl. Clifford Algebras.
\bibitem{part2}
R.~Ab\l amowicz and B.~Fauser, \textit{On the Transposition Anti-Involution in
Real Clifford Algebras II: Stabilizer Groups of Primitive Idempotents},
Department of Mathematics, Tennessee Technological University, Technical Report No. 2010-2, May 2010
\bibitem{ablamfauser2009}
R.~Ab\l amowicz and B.~Fauser, $\mathtt{CLIFFORD}$ for Maple, {\protect \tt http://math.tntech.edu/rafal/} (2009)
\bibitem{GfG} R.~Ab\l amowicz and B.~Fauser, \textit{$\mathtt{GfG}$ - Groebner for Grassmann - A Maple 12 Package for Groebner Bases in Grassmann Algebras}, \protect{\texttt{http://math.tntech.edu/rafal/GfG12/}} (2010)
\bibitem{ablamowicz-fauser}  R.~Ab\l amowicz and B.~Fauser, \textit{Hecke Algebra Representations in Ideals Generated by q-Young Clifford Idempotents}. In \textit{Clifford Algebras and their Applications in Mathematical Physics}, R.~Ab\l amowicz, B.~Fauser, Eds. (Birkh\"{a}user, Boston, 2000) 245--268
\bibitem{ChisholmFarwell} J.~S.~R.~Chisholm, R.~S.~Farwell, \textit{Tetrahedral structure
of idempotents of the Clifford algebra $\cl_{3,1}$}. In \textit{Clifford Algebras and their Applications in Mathematical Physics}, A.~Micali \textit{et al.} Eds. (Kluwer Acad. Publ. 1992) 27--32
\bibitem{fauserstumpf1997} B.~Fauser and H.~Stumpf, \textit{Positronium as an Example of Algebraic Composite Calculations}. In \textit{International Conference on the Theory of the Electron}, Cuautitlan, Mexico, September 27--29, 1995, J. Keller, Z. Oziewicz Eds., Adv. Appl. Clifford Alg. 7 (suppl.) (1997) 399--418 UNAM Mexico,  (hep-th/9510193)
\bibitem{fauser2004} B.~Fauser, \textit{Grade Free Product Formul\ae\ 
from Grassmann--Hopf Gebras}. In \textit{Clifford Algebras -- Applications to Mathematics, Physics, and Engineering}, 
R.~Ab\l amowicz, Ed. (Birkh\"{a}user, Boston, 2004), 279--303
\bibitem{fauser2006} B.~Fauser, \textit{Products, Coproducts and Singular
Value Decomposition}. Int. J. Theor. Phys. Vol. \textbf{45}, No. 9 (2006) 
1731--1755
\bibitem{fauser-habilitation}  B.~Fauser, \textit{A Treatise on Quantum 
Clifford algebras}. Habilitation Thesis, University of Konstanz, January 2002, 
I--XII, 1--164, math.QA/0202059
\bibitem{fauser-vaccum} B.~Fauser, \textit{Clifford geometric parameterization 
of inequivalent vacua}, Mathematical Methods in the Applied Sciences, Vol
\textbf{24}, (2001), 885--912
%
\bibitem{fauser-wick-paper} B.~Fauser, \textit{On the Hopf-algebraic origin of Wick normal-ordering}.  J. Phys. A: Math. Gen. \textbf{34} (2001) 105--115 
\bibitem{fauser-groupXXII} B.~Fauser, \textit{On the relation of Clifford-Lipschitz groups to q-symmetric groups}. In \textit{XXII International Colloquium on Group Theoretical Methods in Physics}, Hobart, Tasmania, July 13--19, 1998, S.~P.~Corney, R.~Delbourgo, P.D.~Jarvis, Eds. (International Press, Cambridge Ma., 1999) 413--417
\bibitem{fauser-JPhysA}  B.~Fauser, \textit{Hecke algebra representations within Clifford geometric algebras of multivectors}. J. Phys. A: Math. Gen. \textbf{32} (1999) 1919--1936
\bibitem{fauserablam2000} B.~Fauser and R.~Ab{\l}amowicz, \textit{On the Decomposition of Clifford Algebras of Arbitrary Bilinear Form}. In \textit{Clifford Algebras and their Applications in Mathematical Physics}, R.~Ab{\l}amowicz and B.~Fauser, Eds. (Birkh\"auser, Boston, 2000) 341--366 (math.QA/9911180)
\bibitem{greub} W.~Greub, \textit{Multilinear Algebra}, 2nd. ed. (Springer-Verlag, New York, 1978)
\bibitem{hahn} A.~J.~Hahn, \textit{Quadratic Algebras, Clifford Algebras, and Arithmetic Witt Groups}. (Undergraduate Texts in Mathematics) (Springer-Verlag, New York, 1994) 
\bibitem{helmstetterCliffordWeyl} J.~Helmstetter, \textit{Alg\'ebres de Clifford et
alg\'ebres de Weyl}. Cahier math. de l'Universit\'e des Sci. et Tech. du Languedoc
(Montpellier), 1982, no. 25
\bibitem{helmstetterMonoids} J.~Helmstetter, \textit{Monoids de Clifford et deformations
d'alg\'ebres de Clifford}. Journal of Algebra, Vol. 111, 1987:14--48
\bibitem{helmstetterArbQuadForms} J.~Helmstetter, \textit{Clifford groups for arbitrary
quadratic forms}. In \textit{Clifford Algebras and their Applications in
Mathematical Physics}, A.~Micali \textit{et al.} Eds. (Kluwer Acad. Publ., 1992) 33--38
\bibitem{helmmicali} J.~Helmstetter and A.~Micali, \textit{Quadratic Mappings and Clifford Algebra}. (Birkh\"{a}user, Basel, 2008)
\bibitem{kadison} L.~Kadison, \textit{New Examples of Frobenius Extensions}. (University Lecture Series \textbf{14}) (American Mathematical Society, 8 July, 1999) 
\bibitem{lam} T.~Y.~Lam, \textit{The Algebraic Theory of Quadratic Forms}. (The Benjamin Cummings Publishing Company, Reading, 1973)
\bibitem{lang} S.~Lang, \textit{Algebra}. (Addison-Wesley Publishing Company, Reading, 1993)  
\bibitem{lenart} Ch.~Lenart, \textit{Combinatorial aspects of the K-theory of Grassmannians}. Annals of Combinatorics, Vol. \textbf{4} (2000) 67--82
\bibitem{lipschutz} S.~Lipschutz, \textit{Theory and Problems of Linear Algebra}. Schaum's Outline Series (McGraw-Hill Book Company, 1968)
\bibitem{lounesto} P.~Lounesto, \textit{Clifford Algebras and Spinors}. 2nd ed. (Cambridge University Press, Cambridge, 2001)
\bibitem{porteous} I.~R.~Porteous, \textit{Clifford Algebras and the Classical Groups}.  (Cambridge University Press, Cambridge, 1995)
\bibitem{rota1994} G.-C.~Rota and J.A.~Stein, \textit{Plethystic Hopf algebras}. Proc. Natl. Acad. Sci. USA  \textbf{91} (1994) 13057--13061
\bibitem{salingaros1} N.~Salingaros, \textit{Realization, extension, and classification of certain physically important groups and algebras}. J. Math. Phys. \textbf{22} (1981) 226-–232
\bibitem{salingaros2} N.~Salingaros, \textit{On the classification of Clifford algebras and their relation to spinors in $n$ dimensions}. J. Math. Phys. \textbf{23} (1) (1982).
\bibitem{salingaros3} N.~Salingaros, \textit{The relationship between finite groups and Clifford algebras}. J. Math. Phys. \textbf{25} (1984) 738-–742
\bibitem{schmeikal1} B.~Schmeikal, \textit{The generative process of space-time and strong interaction quantum numbers of orientation}. In \textit{Clifford Algebras with Symbolic and Numeric Computations}, R.~Ab\l amowicz, P.~Lounesto, J.~Parra, Eds. (Birk\"{a}user, Boston, 1996) 83--100
\bibitem{schottstaples2009} R.~Schott and G.~S.~Staples, \textit{Operator Homology and Cohomology in Clifford Algebras}. To appear in Cubo Journal, 2010.
\bibitem{varlamov} V.~V.~Varlamov, \textit{Universal Coverings of the Orthogonal Groups}. Adv. in Applied Clifford Algebras, \textbf{14}, No. 1 (2004) 81--168
\bibitem{vicary} J.~Vicary, \textit{Categorical formulation of quantum algebras}, Imperial College London, May 4, 2008 (submitted) [arXiv:0805.0432] 
\end{thebibliography}
\end{document}